\newif\ifcomments
\newif\ifstoc
\newif\ifanonymous
\newif\ifacmart
\newif\ifllncs
\newif\ifunitaryoracle
\definecolor{linkblue}{HTML}{001487}
  \setlist[description]{noitemsep}
  \setlist[enumerate]{noitemsep}
  \setlist[itemize]{noitemsep}
\let\oldtodo\todo
\renewcommand{\todo}[1]{\oldtodo{\underline{\textbf{ToDo}}:\\#1}}
  \newcommand{\john}[1]{{{\color{teal}\normalfont [[\bf John: #1]]}}}
  \newcommand{\john}[1]{\ignorespaces}
  \newcommand{\barak}[1]{{{\color{olive}\normalfont [[\bf Barak: #1]]}}}
  \newcommand{\barak}[1]{\ignorespaces}
  \newcommand{\boyang}[1]{{{\color{blue}\normalfont [[\bf Boyang: #1]]}}}
  \newcommand{\boyang}[1]{\ignorespaces}
\renewenvironment{proof}[1][Proof]
{\par\noindent\textit{#1. }}
{\hfill$\square$\par}
\theoremstyle{plain}
\newtheorem{theorem}{Theorem}
\newtheorem{theorem}{Theorem}[section]
\newtheorem*{theorem*}{Theorem}
\newtheorem{definition}[theorem]{Definition}
\newtheorem{lemma}[theorem]{Lemma}
\newtheorem{claim}[theorem]{Claim}
\Crefname{claim}{Claim}{Claims}
\newtheorem*{lemma*}{Lemma}
\newtheorem{corollary}[theorem]{Corollary}
\newtheorem*{corollary*}{Corollary}
\theoremstyle{definition}
\theoremstyle{remark}
\newtheorem{remark}[theorem]{Remark}
\newtheorem{algorithm}{Algorithm}
\crefname{step}{step}{steps}
\Crefname{step}{Step}{Steps}
\renewcommand{\paragraph}{%
  \@startsection{paragraph}{4}%
  {\z@}{2.25ex \@plus 1ex \@minus .2ex}{-1em}%
  {\normalfont\normalsize\bfseries}%
}
\newcommand{\poly}{\mathrm{poly}}
\newcommand{\negl}{\mathrm{negl}}
\newcommand{\eps}{\epsilon}
\newcommand{\OWF}{\mathsf{OWF}}
\newcommand{\PRS}{\mathsf{PRS}}
\newcommand{\EFI}{\mathsf{EFI}}
\newcommand{\OWSG}{\mathsf{OWSG}}
\newcommand{\onePRS}{\mathsf{1PRS}}
\newcommand{\ketbra}[2]{\ket{#1}\!\!\bra{#2}}
\newcommand{\N}{\mathbb{N}}
\newcommand{\E}{\mathop{\mathbb{E}}}
\newcommand{\Tr}{\mathrm{Tr}}
\newcommand{\reg}[1]{\mathsf{#1}}
\newcommand{\Haar}{\mathrm{Haar}}
\newcommand{\td}{\mathrm{td}}
\newcommand{\fidelity}{\mathrm{F}}
\newcommand{\setft}[1]{\textnormal{#1}}
\newcommand{\id}{\setft{id}}
\newcommand{\bits}{\ensuremath{\{0, 1\}}}
\newcommand{\linear}{\mathrm{L}}
\newcommand{\mbracket}[1]{\mleft[#1\mright]}
\newcommand{\abs}[1]{\left|#1\right|}
\newcommand{\ot}{\ensuremath{\otimes}}
\newcommand{\tr}[1]{\mathrm{Tr}\mbracket{#1}}
\DeclareMathOperator{\pos}{Pos}
\newcommand{\1}{\mathds{1}}
\newcommand{\states}{\setft{S}}
\newcommand{\proj}[1]{\ket{#1}\!\!\bra{#1}}
\newcommand{\owpuzz}{\sf{OWPuzz}}
\newcommand{\calA}{\mathcal{A}}
\newcommand{\ncopy}{10000}
\newcommand{\sfA}{\sf{A}}
\newcommand{\sfB}{\sf{B}}
\newcommand{\KeyGen}{\mathsf{KeyGen}}
\newcommand{\StateGen}{\mathsf{StateGen}}
\newcommand{\Ver}{\mathsf{Ver}}
\newcommand{\oracle}{\mathcal{O}}
\NewDocumentCommand{\whiten}{ m }
{
  \int_step_function:nnnN {1}{1}{#1} \white_text:n
}
\NewDocumentCommand{ \varul }{ D<>{5} O{0.2ex} O{0.1ex} +m } {%
  \begingroup
  \setul{#2}{#3}%
  \def\SOUL@uleverysyllable{%
    \setbox0=\hbox{\the\SOUL@syllable}%
    \ifdim\dp0>\z@
    \SOUL@ulunderline{\phantom{\the\SOUL@syllable}}%
    \whiten{#1}%
    \llap{%
      \the\SOUL@syllable
      \SOUL@setkern\SOUL@charkern
    }%
    \else
    \SOUL@ulunderline{%
      \the\SOUL@syllable
      \SOUL@setkern\SOUL@charkern
    }%
    \fi}%
  \ul{#4}%
  \endgroup
}
\title{Oracle Separation Between Quantum Commitments and Quantum One-wayness}
\date{}
        \author{}
        \institute{}
        \author{John Bostanci\inst{1}\orcidlink{0000-0001-9666-7114}  
        \and Boyang Chen\inst{2} \orcidlink{0009-0000-1043-0977}
        \and Barak Nehoran\inst{3}\orcidlink{0000-0001-7371-0829}}
        \institute{
        Columbia University 
        \and Tsinghua University
        \and Princeton University 
        }
    \author[1]{John Bostanci}
    \affil[1]{Columbia University}
    \author[2]{Boyang Chen}
    \affil[2]{Tsinghua University}
    \author[3]{Barak Nehoran}
    \affil[3]{Princeton University}
\begin{document}
\maketitle

\begin{abstract}
    We show that there exists an oracle relative to which quantum commitments exist but no (efficiently verifiable) one-way state generators exist. Both have been widely considered candidates for replacing one-way functions as the minimal assumption for cryptography—the weakest cryptographic assumption implied by all of computational cryptography. Recent work has shown that commitments can be constructed from one-way state generators, but the other direction has remained open. Our results rule out any black-box construction, and thus settles this crucial open problem, suggesting that quantum commitments (as well as its equivalency class of EFI pairs, quantum oblivious transfer, and secure quantum multiparty computation) appear to be strictly weakest among all known cryptographic primitives.
\end{abstract}

\vfill
\noindent \textbf{Note}: There is a bug in \Cref{sec:unitary_oracle_sep}, pointed out to us by Eli Goldin and Mark Zhandry. While  \Cref{lem:swap_from_samples} is true, it only holds for fixed states $\rho$, and thus can not be applied to get an oracle separation between $\EFI$ pairs and $\OWSG$ in the Haar random swap model. The construction of $\owpuzz$ and $\EFI$ can be lifted to the unitary model but the attack on $\OWSG$ cannot be lifted with the approach in this paper. Treating the swap oracle as a reflection around $\ket{0} - \ket{\psi}$ for a Haar random $\ket{\psi}$ allows one recover the result, see \cite{goldin2025translating} for a more detailed discussion.  We leave the buggy attack on $\OWSG$ in that section of the paper in the arXiv version of this paper for future reference.

\newpage

\ifllncs
\else
\tableofcontents
\fi
\newpage
\section{Introduction}

In classical cryptography, one-way functions (OWF) serve as a minimal assumption. That is to say, the existence of nearly any other classical cryptographic primitive implies the existence of one-way functions. Furthermore, many cryptographic primitives (termed ``Minicrypt primitives''), such as pseudorandom generators (PRG), pseudorandom functions (PRF), secret-key encryption and authentication, digital signatures, efficient-far-indistinguishable distributions (EFID), and commitments, are equivalent to one-way functions.

Many of these Minicrypt primitives can be generalized to the setting of quantum states, producing fully quantum primitives such as pseudorandom unitaries (PRU) and pseudorandom states~(PRS)~\cite{Ji_2018}, one-way state generators~(OWSG)~\cite{Morimae_2022,morimae2022one}, one-way puzzles~(OWPuzz)~\cite{cryptoeprint:2023/1620}, efficient-far-indistinguishable quantum state pairs~(EFI)~\cite{brakerski2022computational}, and quantum bit commitments~(QBC)~\cite{eurocrypt-2000-2315}.
A recent sequence of works has shown that although these primitives can be built from one-way functions~\cite{Ji_2018,morimae2022one,cryptoeprint:2023/1620,brakerski2022computational,Morimae_2022}, they may exist even if one-way functions do not~\cite{kretschmer2021,KQST2023quantumcryptographyalgorithmica,lombardi2023onequerylowerboundunitary}.
The classical versions of these fully quantum primitives are known to be equivalent and jointly minimal for classical cryptography, and so a central question in quantum cryptography is whether the same is true for the quantum generalizations.
\begin{center}
    \emph{What is the minimal computational assumption for quantum cryptography?}
\end{center}

In particular, one-way state generators (which generalize one-way functions) and quantum commitments (equivalent to EFI pairs by~\cite{brakerski2022computational}) have received much attention, as the potential minimal assumptions for fully quantum cryptography, with most other such primitives implying one~\cite{Ji_2018,morimae2022one,cryptoeprint:2023/543} or the other~\cite{Morimae_2022,ananth2022cryptography,qian2024hardquantumextrapolationsquantum}. In the classical setting, one-way functions and classical commitments are equivalent and jointly minimal. This motivates the crucial question:
\begin{center}
\emph{Are commitments and one-wayness equivalent in the quantum setting?}
\end{center}

Recent work has given a partial answer by showing that quantum commitments can be constructed from OWSG~\cite{cryptoeprint:2023/1620,batra2024commitmentsequivalentonewaystate}. 
However, the other direction\textemdash showing whether OWSG can be constructed from quantum commitments\textemdash has remained open.%
\footnote{Note that~\cite{batra2024commitmentsequivalentonewaystate} shows that a variant of OWSG that allows verification to be inefficient \emph{is}, in fact, equivalent to quantum commitments. However, it is not known how to construct the standard version of OWSG which requires verification to be efficient.}
We resolve this by showing that OWSG cannot be constructed from quantum commitments in a black-box way. That is, we give a unitary quantum oracle relative to which quantum commitments exist, but every OWSG is insecure.

\begin{theorem}[\textit{informal}]\label{thm:main-informal}
    There is no black-box construction of (efficiently verifiable) one-way state generators from quantum bit commitments.
\end{theorem}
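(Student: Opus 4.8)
\medskip\noindent\textbf{Proposed approach.}
The plan is to prove \Cref{thm:main-informal} by an oracle separation: we exhibit a unitary quantum oracle $\oracle$ relative to which quantum bit commitments exist while every efficiently verifiable $\OWSG$ is insecure. Since quantum commitments are equivalent to $\EFI$ pairs \cite{brakerski2022computational}, it suffices to produce a distribution $\mathcal{D}$ over oracles such that, with probability $1$ over $\oracle\sim\mathcal{D}$, (i) there is an $\EFI$ pair relative to $\oracle$, and (ii) each fixed tuple $(\KeyGen,\StateGen,\Ver)$ of $\poly$-time oracle algorithms fails to be a secure $\OWSG$ relative to $\oracle$; a union bound over the countably many such tuples, plus a standard argument aligning security parameters, then yields a single oracle witnessing both and rules out black-box constructions. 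For the oracle I would take, for each parameter $n$, an exponentially large family of Haar-random states $\{\ket{\psi_{n,k}}\}_{k\in\{0,1\}^n}$, each on $m(n)=\poly(n)$ qubits with $2^{m(n)}\ge 2^{2n}$, and let $\oracle$ implement the controlled swap $\ket{k}\ket{0^{m(n)}}\leftrightarrow\ket{k}\ket{\psi_{n,k}}$ (fixing the orthogonal complement) — equivalently, as \cite{goldin2025translating} notes is the cleaner formulation, the controlled reflection about $\ket{0^{m(n)}}-\ket{\psi_{n,k}}$.

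For (i), take the $\EFI$ pair $\rho_0\deq\E_{k\uniform\{0,1\}^n}\proj{\psi_{n,k}}$ (sample a uniform key $k$, prepare $\ket{\psi_{n,k}}$ with a single oracle call, output it and discard $k$) and $\rho_1\deq\1/2^{m(n)}$, both $\poly$-time preparable. Statistical farness is a rank count: $\rho_0$ has rank at most $2^n$, so $\tfrac12\norm{\rho_0-\rho_1}_1\ge 1-2^n/2^{m(n)}\ge 1-2^{-n}$. Computational indistinguishability reduces to a query lower bound: a $q$-query distinguisher, even after adaptively learning $\ket{\psi_{n,k}}$ for up to $q$ keys of its choosing, has essentially no information about the single uniformly random key $k^\ast$ appearing in $\rho_0$ — marginally, the other $\ket{\psi_{n,k}}$ are just independent Haar states — so it cannot project onto $\Span\{\ket{\psi_{n,k^\ast}}\}$ or otherwise separate $\rho_0$ from $\rho_1$ by more than a negligible amount; the formal argument is a hybrid over the $q$ queries together with Haar concentration, in the style of Haar-versus-pseudorandom-state indistinguishability. (In particular the naive ``swap-test'' attack fails precisely because the distinguisher cannot manufacture a copy of $\ket{\psi_{n,k^\ast}}$ without knowing $k^\ast$.) Applying \cite{brakerski2022computational} then gives quantum bit commitments relative to $\oracle$.

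For (ii) — the crux — fix $(\KeyGen,\StateGen,\Ver)$ and suppose it were a secure $\OWSG$ relative to a $\mathcal{D}$-random $\oracle$. I would build an attacker $\calA^{\oracle}$ that, given $t=\poly(n)$ copies of the challenge $\ket{\phi_k}=\StateGen^{\oracle}(k)$ for $k\leftarrow\KeyGen^{\oracle}$, first uses \Cref{lem:swap_from_samples} to turn its copies into an approximate reflection $\widetilde{R}$ about $\ket{\phi_k}$, and then runs an inversion subroutine built from $\widetilde{R}$, the oracle $\oracle$, and — crucially — the \emph{efficient} verification map $\Ver^{\oracle}$ used as a tester, in order to surface a key $k'$ on which $\Ver^{\oracle}(k',\ket{\phi_k})$ accepts with non-negligible probability, contradicting one-wayness. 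Efficiency of $\Ver$ is what makes this conceivable: it gives the attacker a cheap handle for recognizing (and hence searching towards) a good key, a handle that the hiding-distinguisher of part~(i) provably lacks — this is the $\OWSG$-versus-commitment asymmetry the separation must exploit.

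The hard part — and, as the authors' own note on \Cref{sec:unitary_oracle_sep} makes explicit, the point where the naive version of this plan collapses — is that the challenge state $\ket{\phi_k}$ is \emph{itself oracle-dependent}: it is defined through, and entangled with, the same Haar-random oracle that $\StateGen^{\oracle}$ and $\Ver^{\oracle}$ keep querying during inversion, and the attacker holds no classical description of it. But \Cref{lem:swap_from_samples} only produces a faithful reflection for a \emph{fixed} state; once the state is correlated with the live oracle, the sample-built $\widetilde{R}$ no longer composes correctly with those fresh queries, and the attack does not go through as stated. Making the inversion argument correct against such oracle-entangled challenges is where essentially all the real work lies; the route indicated by \cite{goldin2025translating} is to adopt the reflection-about-$\ket{0^{m}}-\ket{\psi}$ view of the swap oracle, which exposes enough structure to simulate a sample-based reflection consistently against the oracle, and then redo the attack in that model.
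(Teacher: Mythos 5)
There is a genuine gap: the half of the argument that actually has to do the work---breaking every efficiently verifiable $\OWSG$---is never given. Your attacker is described as ``an inversion subroutine built from $\widetilde{R}$, the oracle, and $\Ver$,'' but no such subroutine is specified, and the one concrete tool you invoke, \Cref{lem:swap_from_samples}, is not what that lemma is for: in the paper it serves only to simulate the \emph{oracle's} Haar states from samples (the \Cref{sec:unitary_oracle_sep} lift, which is precisely the acknowledged-buggy part), not to build a reflection about the \emph{challenge} state $\ket{\phi_k}$. You correctly diagnose that a sample-built reflection about an oracle-entangled state does not compose with fresh oracle queries, but then defer ``essentially all the real work'' to an unspecified fix, so the proposal does not contain a proof of the crux. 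Moreover, even granting a perfect reflection about $\ket{\phi_k}$, you give no mechanism for locating a good key among exponentially many candidates with a polynomial-time adversary; amplitude amplification alone would cost $2^{\Omega(n)}$ iterations.

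The paper's actual attack is different and needs no reflection at all. It runs \emph{threshold search} (shadow-tomography style) on $O(\lambda^2)$ copies of the challenge state over the family of two-outcome measurements $\Pi_k$ defined by running $\Ver(k,\cdot)$ on $10\lambda$ copies in parallel: correctness of the $\OWSG$ guarantees the threshold-search promise for the true key, and any $k'$ with $\Tr[\Pi_{k'}\,\rho^{\otimes 10\lambda}]\geq 1/3$ passes a single verification with probability at least $1-\frac{1}{5\lambda}$ (\Cref{claim:owsg_completeness,claim:owsg_soundness}). The search over exponentially many keys is made available to a polynomial-time adversary by augmenting the oracle with a $\mathsf{unitaryPSPACE}$-complete problem (\Cref{rem:threshold_search_pspace})---an ingredient entirely absent from your proposal. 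Crucially, this attack works relative to \emph{any} common reference quantum state oracle (\Cref{thm:one_way_state_generators_do_not_exist}), so the paper's valid separation lives in the CHRS isometry model, with the $\EFI$ side imported from \cite{chen2024power,ananth2024cryptography}; the unitary-oracle version you aim for is exactly the part of the paper that remains broken, and your sketch does not repair it.
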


As a direct consequence, we also rule out black-box constructions of a large collection of primitives that imply efficiently verifiable one-way puzzles and one-way state generators.
In fact, our main theorem is stronger. Since our separating oracle allows us to build two stronger cryptographic primitives\textemdash 
(inefficiently verifiable) one-way puzzles,
and 
a single-copy version of pseudorandom states%
\textemdash 
both of which are separately known to imply quantum commitments~\cite{cryptoeprint:2023/1620,Morimae_2022},
this gives us a stronger separation:

\begin{theorem}[\textit{informal}]\label{thm:1prs-vs-owsg}
    There is no black-box construction of (efficiently verifiable) one-way state generators from either single-copy pseudorandom states or (inefficiently verifiable) one-way puzzles.
\end{theorem}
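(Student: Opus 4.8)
The plan is to exhibit a single unitary quantum oracle $\oracle$ relative to which (i) single-copy pseudorandom states exist, (ii) inefficiently-verifiable one-way puzzles exist, yet (iii) every efficiently-verifiable one-way state generator is broken; since (i) and (ii) are each already known to imply quantum commitments~\cite{Morimae_2022,cryptoeprint:2023/1620}, this proves \Cref{thm:1prs-vs-owsg}, and in particular \Cref{thm:main-informal}. The oracle will have two components: a $\PSPACE$ oracle — included so that every honest construction can be made efficient and so that the OWSG adversary is as strong as possible, which makes the separation purely a statement about query complexity against the second component — together with a family of ``Haar-random swap oracles'' $\{\mathcal S_n\}_n$, one per security parameter, built from an exponential family $\{\ket{\psi_{n,k}}\}_k$ of independent Haar-random $n$-qubit states (so the resulting primitives are keyed). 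Following the correction suggested in the note above, I would take $\mathcal S_n$, acting on a key register plus an $n$-qubit register, to be, conditioned on key $k$, the reflection about $\tfrac1{\sqrt2}(\ket{0^n}-\ket{\psi_{n,k}})$: one query maps $\ket{0^n}\mapsto\ket{\psi_{n,k}}$, so an honest party can prepare a single copy of the pseudorandom state, but the oracle is otherwise close to useless.

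For the positive side, the pseudorandom state generator on key $k$ makes one call to $\mathcal S_n$ to produce $\ket{\psi_{n,k}}$, with $\PSPACE$ handling the surrounding bookkeeping; a one-way puzzle is then obtained by letting the puzzle be one copy of $\ket{\psi_{n,k}}$, the answer be $k$, and (inefficient) verification project onto $\ketbra{\psi_{n,k}}{\psi_{n,k}}$. Both security claims reduce to one statement: an adversary making $\poly(n)$ queries to $\mathcal S$ and holding a single copy of $\ket{\psi_{n,k}}$ learns essentially nothing about $k$, and in particular cannot distinguish the state from a fresh Haar-random one. I would prove this by a hybrid argument over the adversary's queries together with Haar averaging, bounding the branch-dependence of the adversary's reduced state by $2^{-\Omega(n)}$. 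This is the step that must be handled carefully in light of the bug: it must exploit the concrete reflection structure of $\mathcal S$ and compose it across queries, rather than modeling $\mathcal S$ as an idealized ``hand me a sample of $\rho$'' oracle.

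The main obstacle is the negative side — breaking every efficiently-verifiable OWSG. The leverage is that, unlike a 1PRS or a one-way puzzle, an OWSG by definition hands the adversary $\poly(n)$ copies of its output $\ket{\phi_k}=\StateGen^{\oracle}(k)\ket{0}$, and the attack must convert this supply of copies into an accepting key. The tool is \Cref{lem:swap_from_samples}: given enough copies of a state $\rho$, one can approximately apply the swap oracle's action restricted to inputs near $\rho$ \emph{without querying $\oracle$}. Using it, the attacker — unbounded apart from $\poly$ oracle queries, equivalently a $\PSPACE$ machine with few queries — runs the honest $\StateGen$ and $\Ver$ circuits internally, feeding the extra copies of $\ket{\phi_k}$ in as the ``fuel'' that answers the simulated oracle calls, and then uses $\PSPACE$ to search for a key and state that $\Ver$ accepts. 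This reduces the task to breaking an OWSG in an oracle-free world, which an unbounded adversary with $\poly$ copies can do because efficient verification lets it locate an accepting key by (shadow) tomography over a net of candidates.

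The fragile part of this attack — and, per the note above, the point where the approach as written fails — is ensuring the copies consumed by \Cref{lem:swap_from_samples} during the simulation are copies of states the attacker actually holds: $\StateGen$ may query $\mathcal S$ on states far from $\ket{\phi_k}$, and more fundamentally \Cref{lem:swap_from_samples} behaves as needed only for a \emph{fixed} target $\rho$, not for states arising adaptively inside an arbitrary efficient oracle circuit. Making the simulation-based attack rigorous, or else rescuing it by instead modeling the swap oracle as a reflection around $\ket{0}-\ket{\psi}$ as the note suggests, is where essentially all the difficulty lies; the rest of the argument is standard hybrid/averaging and $\PSPACE$-search bookkeeping.
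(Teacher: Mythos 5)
Your approach has a fatal structural flaw that is independent of the acknowledged subtleties around \Cref{lem:swap_from_samples}: relative to the oracle you propose, efficiently verifiable one-way state generators \emph{do exist}, so no attack on them can succeed. If the oracle hands out, for each key $k$ in an exponentially large set, an \emph{independent} Haar random state $\ket{\psi_{n,k}}$ (via a reflection or otherwise), then the scheme in which $\KeyGen$ outputs a uniform $k$, $\StateGen(k)$ outputs the pure state $\ket{\psi_{n,k}}$, and $\Ver(k,\rho)$ runs repeated SWAP tests against freshly prepared copies of $\ket{\psi_{n,k}}$ is a pure-output, efficiently verifiable $\OWSG$. Its one-wayness against $\poly$-query adversaries (even computationally unbounded ones equipped with a $\PSPACE$ oracle) follows from the same hybrid/Haar-averaging argument you invoke for your positive results: for $k'\neq k$ the overlap $|\langle\psi_{n,k'}|\psi_{n,k}\rangle|^2$ is exponentially small with overwhelming probability, so an accepting key is essentially unique, and locating it requires exponentially many queries to the keyed oracle. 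The separation in the paper hinges on there being a single \emph{common} reference state per size, available to the adversary exactly as it is to the honest parties, so that the only secret in any candidate $\OWSG$ is a polynomial-length classical key; the attack is then a threshold-search (shadow-tomography) procedure over the key space, implemented via a $\mathsf{unitaryPSPACE}$ oracle and applied to $O(\lambda^2)$ copies of the $\OWSG$ output, using measurements $\Pi_k$ that run $\Ver(k,\cdot)$ in parallel $10\lambda$ times. Your keyed oracle destroys exactly the structure that makes this attack possible.

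There are two further gaps. First, a one-way puzzle requires the puzzle $s$ to be a \emph{classical} string, so ``let the puzzle be one copy of $\ket{\psi_{n,k}}$'' is not syntactically a $\owpuzz$; the paper's construction exists precisely to address this, by publishing classical shadows of $Z_1^{k_\ell}\ket{\psi_\ell}$ for $\ell\in[n,2n]$ and letting the unbounded verifier perform shadow estimation against the tomographically recovered reference states, with security via LOCC Haar indistinguishability. Second, even setting the oracle choice aside, your attack feeds copies of the $\OWSG$ output $\ket{\phi_k}$ as the ``fuel'' for \Cref{lem:swap_from_samples}, but that lemma consumes copies of the Haar state that the oracle swaps or reflects about (for the specific key being queried), not copies of the candidate construction's output; in the paper's state-oracle proof the adversary obtains this fuel by querying the common reference state oracle directly, which is possible only because the state is common. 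As written, the core of your negative direction is precisely the part you defer as ``where essentially all the difficulty lies,'' so the proof of the theorem is missing rather than merely fragile.
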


\paragraph{The efficiency of verifying one-wayness.}
In classical cryptography, one-wayness is inherently \emph{verifiable}: that is, given a successful inversion of a one-way function, we can always run the function in the forward direction to check if the inversion was correct.
On the other hand, the literature on \emph{quantum} one-wayness distinguishes between efficiently verifiable~\cite{morimae2022one,goldin2024central} and inefficiently verifiable (or ``statistically verifiable'')~\cite{batra2024commitmentsequivalentonewaystate,cryptoeprint:2023/1620} versions.
This is because there is no built-in way to verify the inversion of a quantum operation that traces out some registers.%
\footnote{One-way state generators in their most general form can output mixed states, while one-way puzzles can sample a puzzle by measuring a quantum register. Both can be implemented by a unitary operation followed by the tracing out of some subset of registers.}
However, it has remained unclear whether these two versions of quantum one-wayness are\textemdash as is the case in the classical setting\textemdash in fact equivalent.%
\footnote{
    \cite{goldin2024central} observe that an oracle separation between efficiently verifiable and statistically verifiable one-way puzzles follows from~\cite{kretschmer2021}. However, the question for the more fundamentally quantum one-way state generators has remained open.
}
\begin{center}
\emph{Can \textbf{efficiently verifiable} quantum one-wayness be constructed from \textbf{statistically verifiable} quantum one-wayness?}
\end{center}
As a corollary to our main theorem, we are able to answer this in the negative:
\begin{corollary}[\textit{informal}]\label{cor:sv-ev-separation-informal}
    There is no black-box construction of either efficiently verifiable one-way state generators or efficiently verifiable one-way puzzles from either statistically verifiable one-way state generators or statistically verifiable one-way puzzles.
\end{corollary}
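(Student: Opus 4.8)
To prove \Cref{cor:sv-ev-separation-informal}, the plan is to show that the single oracle $\mathcal{O}$ constructed for \Cref{thm:1prs-vs-owsg} simultaneously witnesses all four non-implications. It suffices to establish that, relative to $\mathcal{O}$: (i) statistically verifiable $\owpuzz$ exist; (ii) statistically verifiable $\OWSG$ exist; (iii) every efficiently verifiable $\OWSG$ is insecure; and (iv) every efficiently verifiable $\owpuzz$ is insecure. Given these, the corollary follows from the standard framework: a black-box construction of a target primitive $B \in \{\text{efficiently verifiable }\OWSG,\ \text{efficiently verifiable }\owpuzz\}$ from a source primitive $A \in \{\text{statistically verifiable }\OWSG,\ \text{statistically verifiable }\owpuzz\}$ relativizes, so instantiating it with the secure implementation of $A$ supplied by (i)--(ii) would give a secure implementation of $B$ relative to $\mathcal{O}$, contradicting (iii)--(iv). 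Thus all of the work is in (i)--(iv), and nearly all of it is assembling facts already proved or known relativizing implications.

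For (i): the construction underlying \Cref{thm:1prs-vs-owsg} already yields statistically (i.e.\ inefficiently) verifiable one-way puzzles relative to $\mathcal{O}$ --- one of the ``stronger primitives'' highlighted there. For (ii): the same construction yields a single-copy pseudorandom state relative to $\mathcal{O}$, and a single-copy pseudorandom state (equally, a statistically verifiable one-way puzzle) implies quantum commitments, hence $\EFI$ pairs \cite{Morimae_2022,cryptoeprint:2023/1620,brakerski2022computational}; and $\EFI$ pairs imply the statistically verifiable variant of $\OWSG$ \cite{batra2024commitmentsequivalentonewaystate}. Each of these is a black-box construction, hence holds relative to $\mathcal{O}$, giving (ii). (Together with the fact that the implication $\OWSG \Rightarrow \owpuzz$ preserves the verification regime, these also show the statistically verifiable source notions all collapse to $\EFI$, so there is effectively a single source primitive to consider.)

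For (iii): this is exactly \Cref{thm:1prs-vs-owsg}. For (iv): I would verify that the attack proving \Cref{thm:1prs-vs-owsg} breaks efficiently verifiable one-way puzzles as well. There the attacker receives polynomially many copies of a state together with an efficient verifier for candidate keys; against an efficiently verifiable one-way puzzle it instead receives the classical puzzle string --- which, being classical, it may copy at will, playing the role of the state copies --- together with the puzzle's efficient verifier, and it runs the same oracle-aided procedure to produce a valid answer. Alternatively, efficiently verifiable $\OWSG$ imply efficiently verifiable $\owpuzz$ by a relativizing construction, so breaking every efficiently verifiable $\owpuzz$ is the more general statement and already subsumes (iii); whichever of the two one proves directly, the other follows.

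The main obstacle is this last point: confirming that nothing in the attack behind \Cref{thm:1prs-vs-owsg} relied on the state-generator-shaped structure of the primitive rather than on the generic feature that candidate answers can be \emph{efficiently} checked against the challenge. I expect this to be essentially routine given that proof, since the attack ought to interact with the challenge only through the efficient verifier and the separating oracle; but it is the one place where a genuinely new (if small) argument is needed rather than a citation. Everything else reduces to bookkeeping with relativizing implications and the standard passage from an oracle separation to the nonexistence of black-box constructions.
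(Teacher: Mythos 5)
Your proposal is correct and follows essentially the same route as the paper: the paper establishes (i) via its explicit classical-shadow construction of inefficiently verifiable one-way puzzles in the CHRS model, (ii) via the existence of $\EFI$ pairs relative to the oracle together with their known equivalence to statistically verifiable $\OWSG$, (iii) via the threshold-search attack, and (iv) via exactly the adaptation you anticipate --- copying the classical puzzle string $s$ into $10\lambda$ registers and running the same threshold search against the efficient verifier (\Cref{cor:ev-owpuzz-non-exist}). The one step you flag as needing a new argument is indeed the only new step, and it is as routine as you expect.
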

In other words, in the quantum setting, one-wayness is \emph{not} efficiently verifiable inherently. Efficiently verifiable one-wayness is a stronger assumption.


\paragraph{Conceptual impact of our results.}
Our results show that quantum commitments\textemdash together with their equivalence class of EFI pairs, quantum oblivious transfer, secure quantum multiparty computation, and statistically verifiable one-way state generators\textemdash are strictly weaker than nearly all other known computational cryptography. This motivates defining a new \emph{world} in the spirit of Impagliazzo~\cite{impagliazzo1995personal}. Impagliazzo defines five possible worlds, including \mbox{Cryptomania} (in which classical public-key cryptography exists), and Minicrypt (in which only one-way functions exist). (The three remaining worlds\textemdash Pessiland, Heuristica, and Algorithmica\textemdash do not allow classical cryptography.) The recent work on quantum cryptography has spoken of a Microcrypt, in which one-way function do not exist, but pseudorandom unitaries and states exist (and consequently many other quantum cryptographic primitives). 

We suggest the introduction of a new world to the Impagliazzo hierarchy, \mbox{\emph{Entanglementia}}, a world in which only the bare minimum of (quantum) cryptography is possible, and the only secure computational cryptography that exists is the cryptography that is equivalent to quantum commitments.
We propose the name \emph{Entanglementia}%
\footnote{The ending -mentia means ``in the mind''.}
because it is a world in which the central cryptographic protocols\textemdash such as quantum commitments, oblivious transfer, and secure multiparty computation\textemdash seem to inherently require parties to maintain coherent entanglement between them. Specifically, verification in Entanglementia often requires a challenger to maintain a register that is coherently entangled with the adversary. Entanglementia primitives and assumptions that do not maintain entanglement\textemdash such as statistically verifiable OWSG and EFI\textemdash are inherently not efficiently verifiable.

\subsection{Open Problems}
We suggest the following open problems for future work:

\begin{enumerate}
    \item Are there black-box constructions of pure-output $\OWSG$s from efficiently verifiable mixed-output $\OWSG$s?
    We show that there is no black-box construction of efficiently verifiable $\OWSG$s from statistically verifiable $\OWSG$s. This distinction between efficient/statistical verifiability is only meaningful for $\OWSG$s that produce mixed states, since any $\OWSG$ that produces pure states can be efficiently verified using a SWAP test. This suggests that pure-output $\OWSG$s are qualitatively different. Can a black-box reduction be ruled out?
    \item This work suggests that $\EFI$ pairs, quantum commitments, and their equivalency class of Entanglementia primitives appear to be uniquely minimal among the known computational assumptions for (quantum) cryptography. 
    Can $\EFI$ pairs be constructed from \emph{all} of computational cryptography? Or are there computational assumptions that are even weaker than $\EFI$ but still useful for some cryptography?
    \item Single-copy pseudorandom states with output longer than key ($\onePRS$) are known to imply quantum commitments~\cite{Morimae_2022}. Furthermore, $\onePRS$ appears to be a weak primitive: it can be built from $\OWF$~\cite{Morimae_2022} or $\PRS$~\cite{gunn2023commitments}, but \emph{does not} imply $\OWF$~\cite{KQST2023quantumcryptographyalgorithmica}, $\PRS$~\cite{chen2024power}, or even $\OWSG$~[\textit{this work}] in a black-box way. 
    Similarly, (inefficiently verifiable) one-way \mbox{puzzles} ($\owpuzz$) are known to be implied by pure-output $\OWSG$ and imply quantum commitments~\cite{cryptoeprint:2023/1620}.
    However, the status of $\onePRS$ and $\owpuzz$ is unclear: Can $\onePRS$ and $\owpuzz$ be shown to be separated from commitments and therefore be stronger cryptographic primitives, or are they also contained in Entanglementia?
\end{enumerate}

\subsection{Related Work}

Since the works of \cite{kretschmer2021,KQST2023quantumcryptographyalgorithmica}, which showed that there is an oracle relative to which pseudorandom unitaries exist and one-way functions do not, there has been a large body of work identifying and separating candidates for a minimal assumption for quantum cryptography.  Among these are pseudorandom states~\cite{Ji_2018}, EFI pairs~\cite{brakerski2022computational}, and one-way state generators~\cite{morimae2022one}.  \cite{ananth2022cryptography} showed that pseudorandom states implied EFI pairs, and a number of other useful cryptographic primitives like secure multi-party computation and pseudo one-time encryption schemes.  The work of \cite{khurana2024commitments} further showed that pure-output one-way state generators implied one-way puzzles which in turn implied EFI pairs and commitments. 
\cite{batra2024commitmentsequivalentonewaystate} further showed that mixed-output one-way state generators implied EFI by showing that if the efficient verification of one-way state generators was removed, they would become equivalent to EFI pairs, leaving the question of whether efficiently verifiable one-way state generators could be constructed from EFI pairs or one-way puzzles.


Oracle separations are widely developed for separating classical cryptographic primitives, but fewer techniques are known for separating cryptographic primitives in the fully quantum setting. \cite{kretschmer2021} gave a quantum oracle relative to which pseudorandom unitaries exist but one-way functions do not. Relatedly, \cite{KQST2023quantumcryptographyalgorithmica} shows a classical oracle relative to which single-copy pseudorandom states exist but one-way functions do not.
\cite{goldin2024central} observes that the oracle of \cite{kretschmer2021} can be used to separate one-way puzzles from efficiently verifiable one-way puzzles.
\cite{coladangelo2024black} used a similar oracle to show that pseudorandom states can not be used to build digital signatures for classical messages with quantum public keys, and \cite{goldin2024countcryptquantumcryptographyqcma} uses a modified version of the oracle to separate quantum cryptographic primitives broken by a $\mathsf{QCMA}$ oracle from those broken by a $\mathsf{PP}$ oracle. 

The common reference quantum state model, was introduced by 
\cite{morimae2024unconditionally,qian2024unconditionally}, where they show how to build EFI and quantum commitments.
\cite{chen2024power,ananth2024cryptography} introduced the common Haar random state model, and 
\cite{chen2024power} used this model to show that single-copy pseudorandom states do not imply multi-copy pseudorandom states in a black-box way.

\subsection{Concurrent Work}
\ifunitaryoracle
\paragraph{The work of \cite{behera2024oracle}: }
\fi
Behera, Malavolta, Morimae, Mour, and Yamakawa%
\ifunitaryoracle
\else
~\cite{behera2024oracle} 
\fi
independently and concurrently demonstrate a result similar to ours. Similarly to our result, they show an oracle separation between quantum commitments and both OWSG and efficiently verifiable one-way puzzles.  Our full set of results is, in some sense, incomparable.  We additionally show that 1PRS is separated from OWSG, and they additionally show that primitives such as private-key quantum money are separated from QEFID pairs (\emph{classical} EFI distribution pairs that are quantum-samplable).  We note that they do not consider the common Haar random state model, instead defining a different quantum reference state, and therefore have different proof techniques.  

\ifunitaryoracle
\paragraph{The work of \cite{chen2024power}:}
We were recently made aware of updates to the paper of Chen, Coladangelo, and Sattath~\cite{chen2024power}, which will independently and concurrently provide a similar extension of the common Haar random state model to a unitary oracle model with a swap unitary similar to ours.  Their proof technique is also similar to ours, although in their simulation of the swap oracle with copies of the reference state, they do not use the indistinguishability result of \cite{zhandry24space}. We therefore believe that our presentation is conceptually simpler.
\fi

\ifanonymous
\else
\subsection*{Acknowledgements}
The authors thank Prabhanjan Ananth for helpful discussions about recent results on the common Haar random state model, Fermi Ma for suggesting a new interpretation of the main result of this paper, and Rahul Jain for giving the authors insights into the impact and implications of this work.  The authors also thank Eli Goldin, Henry Yuen, and Mark Zhandry, for helpful discussions related to the Haar random swap oracle, and Amit Behera, Giulio Malavolta, Tomoyuki Morimae, Tamer Mour, and Takashi Yamakawa for their helpful discussions related to their concurrent work.

J.B. is supported by Henry Yuen's AFORS (award FA9550-21-1-036) and NSF CAREER (award CCF2144219).
B.C. acknowledges supported by National Key Research and Development Program of China
(Grant No.\ 2023YFA1009403) and National Natural Science Foundation of China (Grant
No.\ 12347104).
This work was done in part while B.N. and J.B. were visiting the Simons Institute for the Theory of Computing, supported by NSF QLCI Grant No. 2016245.

\fi
\section{Technical Overview}

Our main technical contributions are a polynomial-space attack against one-way state generators relative to all quantum reference quantum state models%
\ifunitaryoracle
\xspace and quantum swap oracles%
\fi
,
as well as a construction of one-way puzzles in the common Haar random state model%
\ifunitaryoracle
\xspace and Haar random swap oracles%
\fi
.
Together, they give our separation.

\paragraph{Ruling out one-way state generators relative to common reference states.}
To rule out one-way state generators relative to any common reference quantum state oracle, we notice that the quantum OR attack used in \cite{chen2024power} can be extended to a so-called ``threshold search'' attack.  A threshold search algorithm takes as input a set of $m$ measurements $M$ and $O(\log^2 m \log n)$ copies of a quantum state, and outputs any measurement that has greater than $1/3$ chance of accepting, promised that there exists one that is accepted with probability at least $3/4$.  For one-way state generators, the measurement corresponding to $k$ is to simply run verification with key $k$ on $O(\lambda)$ copies of the input state.  By the correctness of the one-way state generator, the promise of threshold search is satisfied.  Furthermore, because we are taking $O(\lambda)$ copies of the input state, a measurement that accepts with probability $1/3$ means that the input state passes verification with key $k'$ with probability $1 - O(1/\lambda)$.  Thus, this attack breaks the one-wayness of any one-way state generator.  

To implement this attack in polynomial space, we observe that the algorithm from \cite{watts2024quantum}, combined with a space efficient pseudo-random generator from \cite{girish2021eliminating}, provides a $\mathsf{UnitaryPSPACE}$ implementation of threshold search.

We further note that in some common reference state models, such as those of \cite{morimae2024unconditionally,qian2024unconditionally}, or the common Haar random state model of \cite{chen2024power,ananth2024cryptography}, it has been shown that quantum commitments and $\EFI$ pairs exist even relative to adversaries that have unbounded computation\textemdash but a polynomial number of samples of the common reference state.  Thus, for these reference states we arrive at a separation between $\EFI$ pairs and one-way state generators relative to state preparation oracles.\footnote{The result of \cite{chen2024power} additionally constructs single-copy psuedo-random states ($\onePRS$), so our results also imply a separation between them and one-way state generators.}

\paragraph{Constructing one-way puzzles in the common Haar random state model.}
Our second result is to construct inefficiently verifiable one-way puzzles in the CHRS model, strengthening the results of \cite{chen2024power,ananth2024cryptography}, which construct $\onePRS$ and EFI pairs.  Our construction goes as follows:
\begin{enumerate}
    \item $\mathsf{Samp}^{\{\ket{\psi_\ell}\}_\ell}(1^n)$: $\mathsf{Samp}$ first samples a random $n$-bit string $k$. We make use of $n$ independent Haar random states by taking the Haar random states from sizes $n$ through $2n$.%
    \footnote{
        The Haar random state model provides a single Haar random state for every integer $\ell$ (the equivalent of the input size for the oracle). We choose the set of sizes from $n$ through $2n$, although any $\omega(\log n)$ sufficiently large values would work in order to get negligible LOCC Haar indistinguishability.
    }
    Based on the $(\ell-n)$'th bit of the key, $\mathsf{Samp}$ either applies a Pauli $Z$ to the first qubit of $\ket{\psi_{\ell}}$, or does not (i.e. $(Z^{k_{\ell - n}}_{1}) \ket{\psi_{\ell}}$), and then takes the classical shadows of the resulting state.  Let $s$ be the collection of classical shadows generated for $\ell$ from $n$ to $2n$, then $\mathsf{Samp}$ outputs the pair $(k, s)$.
    \item $\mathsf{Ver}^{\{\ket{\psi_\ell}\}_\ell}(k, s)$: $\mathsf{Ver}$ first performs tomography of $\ket{\psi_\ell}$ for $\ell$ from $n$ through $2n$, then uses the classical shadows to estimate the value of the observable $(Z_1^{k_{\ell - n}}) \proj{\psi_\ell} (Z_1^{k_{\ell - n}})$ on the puzzle, which is the overlap with the state that the sampler should have used to generate the classical shadows, if they output key $k$.  If more than $3n/4$ of the observables have value higher than $1/2$, the verifier accepts, otherwise it rejects.
\end{enumerate}

For correctness, if the verifier receives a pair $(k, s)$ that comes from the sampler, than the expected value of the observables is $1$, so with high probability the shadow estimation outputs a value greater than $1/2$.  To prove security, we first consider an adversary that does not sample the Haar random state at all.  In this case, the states $Z^{k_{\ell - n}}_{1} \ket{\psi_\ell}$ are also distributed according to the Haar measure, so the adversary outputs a fixed distribution over solutions to the puzzle, independent of the key $k$.  Therefore, any adversary that does not request samples of the common Haar random state only passes verification with negligible probability.  Finally, due to the LOCC Haar indistinguishability result of \cite{ananth2024cryptography}, an adversary that gets sample access to the common Haar random state outputs (almost) the same distribution over keys as an adversary that gets samples of an indepedently random family of states.  However, this second adversary can be simulated without querying the common Haar random state family at all, and therefore must fail to pass verification.  

\ifunitaryoracle
\paragraph{Extending the result to unitary oracles.}
Having a separation relative to a non-standard quantum oracle is somewhat undesirable. For instance, it does not by itself rule out black-box reductions that \emph{uncompute} the primitive.  We therefore further show how to extend many results in common reference quantum state models to a new model with a \emph{unitary} oracle that we call the swap oracle model.  Given a sequence of states $\{\ket{\phi_m}\}_{m \in \N}$ that are all orthogonal to $\ket{0^m}$,%
\footnote{We note that being orthogonal to $\ket{0^m}$ is required in order for the oracle to be unitary, but any state family can be modified to one that is orthogonal, for example by appending a $\ket{1}$ to the end.}
the swap oracle $\oracle_{m}$ swaps $\ket{0^m}$ and $\ket{\phi_{m}}$, and leaves all other states the same.

Thanks to the result of \cite{zhandry24space} on the indistinguishability of decohering entanglement in phase-invariant state families and using ideas inspired by \cite{Ji_2018}, we show that algorithms in this swap model can be simulated by algorithms in the common reference quantum state model.  For any algorithm using the swap model, a simulator can take many copies of the common reference quantum state, coherently perform the folk-lore super swap test to pick out the $\ket{\phi_{m}}$ and $\ket{0^m}$ components of the input state, and replace them with the other state.  Proving that the algorithm works to simulate the original swap model algorithm up to an arbitrary inverse polynomial error requires careful analysis of the symmetric subspace projector.  We finally observe that for any common reference states that are randomly chosen (as opposed to the fixed states in the auxiliary input model of~\cite{morimae2024unconditionally,qian2024unconditionally}), we can remove randomness in the oracle by standard techniques adapted from~\cite{aaronson2007quantum}.

With this state simulator, we can simulate adversaries for $\EFI$ pairs in the Haar random state model, in the common Haar random swap model, showing that the original $\EFI$ construction from \cite{chen2024power} is secure even in the Haar random swap model.  On the other hand, we can port the quantum threshold attack from the common Haar random state model to the Haar random swap model by simulating a potential one-way state generator verifier with the state simulator.  Thus, we are able to achieve the same oracle separations relative to a Haar random swap.  We hope that this extension to a unitary oracle will make it easier to find oracle separations between cryptographic primitives relative to unitary oracles.
\fi

\section{Preliminaries}
\subsection{Quantum Basics}
For a bit string $x \in \bits^*$, we denote by $|x|$ its length (not its Hamming weight).
When $x$ describes an instance of a computational problem, we will often use $\lambda = |x|$ to denote its size.

A function $\delta:\N \to [0,1]$ is an \emph{inverse polynomial} if there exists a polynomial $p$ such that $\delta(n) \leq 1/p(n)$ for all sufficiently large $n$. A function $\eps:\N \to [0,1]$ is \emph{negligible} if for every polynomial $p(n)$, for all sufficiently large $n$ we have $\eps(n) \leq 1/p(n)$. 

A \emph{register} $\reg{R}$ is a named finite-dimensional complex Hilbert space. If $\reg{A}, \reg{B}, \reg{C}$ are registers, for example, then the concatenation $\reg{A} \reg{B} \reg{C}$ denotes the tensor product of the associated Hilbert spaces. We abbreviate the tensor product state $\ket{0}^{\ot n}$ as $\ket{0^n}$. For a linear transformation $L$ and register $\reg R$, we write $L_{\reg R}$ to indicate that $L$ acts on $\reg R$, and similarly we write $\rho_{\reg R}$ to indicate that a state $\rho$ is in the register $\reg R$. We write $\Tr[\cdot]$ to denote trace, and $\Tr_{\reg R}[\cdot]$ to denote the partial trace over a register $\reg R$.

We denote the set of linear transformations on $\reg R$ by $\linear(\reg R)$, and linear transformations from $\reg R$ to another register $\reg S$ by $\linear(\reg R, \reg S)$. We denote the set of positive semidefinite operators on a register $\reg{R}$ by $\pos(\reg{R})$. 
The set of density matrices on $\reg R$ is denoted $\states(\reg R)$.
For a pure state $\ket\varphi$, we write $\varphi$ to denote the density matrix $\ketbra{\varphi}{\varphi}$. We denote the identity transformation by $\id$.
For an operator $X \in \linear(R)$, we define $\| X \|_\infty$ to be its operator norm, and $\| X\|_1 = \Tr[|X|]$ to denote its trace norm, where $|X| = \sqrt{X^\dagger X}$.
We write $\td(\rho,\sigma) = \frac{1}{2} \| \rho - \sigma \|_1$ to denote the trace distance between two density matrices $\rho,\sigma$, and $\fidelity(\rho,\sigma) = \| \sqrt{\rho} \sqrt{\sigma} \|_1^2$ for the fidelity between $\rho,\sigma$.

\subsection{The Haar Measure}

Here we state the definition of the Haar measure and Haar random states.

\begin{definition}[Haar measure and Haar random states]
     The Haar measure is the unique left- and right- invariant probability measure on the unitary group $U(d)$.  A Haar random state is sampled by applying a unitary sampled from the Haar measure to $\ket{0}$ (although any initial state would yield the same distribution over states). We use the notation $\Haar(d)$ to refer to the distribution over $d$-dimensional states drawn from the Haar measure.
\end{definition}

One of the most useful properties of the Haar measure is its concentration properties.
\begin{lemma}[Concentration of Haar measure~\cite{meckes2019random}]
\label{lem:haar_concentration}
    Let $N \in \mathbb{N}$.  Let $\mu$ be the Haar measure on dimension $N_i$, and let $\mu$ be the Haar measure on a $N$-dimensional space.  Let $f$ be $L$-Lipshitz function in the Frobenius norm, mapping $N$-dimensional unitaries to real numbers.  Then the following holds for every $t > 0$:
    \begin{equation*}
        \Pr_{U \leftarrow \mu} \left[f(U) \geq \mathop{\mathbb{E}}_{V \leftarrow \mu} \left[f(V)\right] + t\right] \leq \exp\left(-\frac{(\min_{i} \{N_i\} - 2)t^2}{24L^2}\right)\,.
    \end{equation*}
\end{lemma}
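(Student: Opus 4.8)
The plan is to derive this from the standard geometric fact that the unitary group, equipped with the Riemannian metric induced by the Frobenius inner product $\langle A,B\rangle=\Tr[A^\dagger B]$, has a strictly positive lower bound on its Ricci curvature that grows linearly in the dimension. First I would recall (this is the geometric heart of the statement, and where the constant $24$ and the shift by $2$ come from) that $U(N)$ with its bi-invariant metric has $\mathrm{Ric}\ge \kappa_N := (N-2)/12$ in the relevant normalization; the replacement of $N$ by $N-2$ is an artifact of the flat central $U(1)$ direction in $U(N)$ — passing to $SU(N)$ one gets a clean $\Theta(N)$, and the loss is the price for that one flat direction, visible either directly from the curvature tensor of the classical compact groups or by quotienting out the center. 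I would cite \cite{meckes2019random} for the precise computation of this bound rather than reproduce the Riemannian geometry.

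Next I would invoke the Bakry--\'Emery / Gromov--L\'evy concentration principle: on a compact connected Riemannian manifold whose Ricci curvature is bounded below by $\kappa>0$, the normalized Riemannian volume measure satisfies a logarithmic Sobolev inequality with constant $1/\kappa$, hence — by Herbst's argument — every $1$-Lipschitz $g$ obeys $\Pr[g\ge \E g+s]\le \exp(-\kappa s^2/2)$. Since the Haar measure is by its defining invariance and uniqueness exactly the normalized volume measure of the bi-invariant metric, this applies verbatim to $U(N)$; rescaling from a $1$-Lipschitz to an $L$-Lipschitz $f$ replaces $s$ by $t/L$ and yields $\exp(-\kappa_N t^2/(2L^2))=\exp(-(N-2)t^2/(24L^2))$, the single-factor case of the claim. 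To handle the $\min_i N_i$ — the lemma is really about $f$ on a product $U(N_1)\times\cdots\times U(N_k)$ with the $\ell^2$-sum of the factor metrics — I would use that the Ricci curvature of a Riemannian product is the minimum of the curvatures of the factors, so $\mathrm{Ric}\ge \min_i \kappa_{N_i}=(\min_i N_i-2)/12$, and that the Haar measure on the product group is the product of the factor Haar measures, which is again the normalized volume of the product metric; the previous step then applies with $\kappa=(\min_i N_i-2)/12$.

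The one point requiring genuine care is matching the normalization in the lemma statement — "Lipschitz in the Frobenius norm" — with the normalization of the metric under which the curvature bound carries the constant $12$, and checking that the center-induced shift is exactly $2$ and not some other small constant; this bookkeeping is the main obstacle, and it is why I would lean on the explicit constants worked out in \cite{meckes2019random} rather than redo the differential geometry. As a fallback that avoids curvature entirely, one can instead run the martingale/coupling argument along the subgroup tower $U(1)\subset U(2)\subset\cdots\subset U(N)$ and apply Azuma's inequality, but this typically gives worse absolute constants, so I would only resort to it if pinning down the geometric constants proved awkward.
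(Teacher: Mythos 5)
The paper does not actually prove this lemma; it is imported verbatim (as Theorem 5.17, essentially) from \cite{meckes2019random}, so the relevant comparison is with the proof in that reference --- and your outline is indeed that proof: curvature lower bound, Bakry--\'Emery log-Sobolev inequality, Herbst argument, tensorization over the factors, with the constants delegated back to the citation. That is a legitimate way to discharge a cited lemma, and your reading of the (garbled) statement as concerning a product $U(N_1)\times\cdots\times U(N_k)$ with the $\ell^2$-sum of Frobenius metrics is the correct one.

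There is, however, one step that is wrong as written and that you yourself flag one sentence earlier: Bakry--\'Emery does \emph{not} apply ``verbatim to $U(N)$,'' because the Ricci curvature of $U(N)$ vanishes along the central $U(1)$ direction, so $\mathrm{Ric}\ge\kappa>0$ simply fails on $U(N)$ and the Gromov--L\'evy/Herbst machinery cannot be invoked directly for it. The way \cite{meckes2019random} actually handles $U(N)$ is to (i) establish the log-Sobolev inequality for $SU(N)$ from its genuinely positive Ricci lower bound, (ii) establish it for the circle, (iii) tensorize, and (iv) push the inequality forward along the Lipschitz surjection $(\theta,V)\mapsto e^{i\theta/N}V$ onto $U(N)$, tracking the Lipschitz constant of that map; this detour, not the curvature of $U(N)$ itself, is where the $U(N)$ constant comes from (and the $N-2$ in the unified statement is really inherited from the orthogonal-group case rather than from the center of $U(N)$). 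Relatedly, for the product step it is cleaner to tensorize the log-Sobolev inequalities (the product measure satisfies LSI with constant $\max_i c_i$) than to argue about the Ricci curvature of the product manifold, since after step (iv) one no longer has a curvature bound to combine. None of this changes the conclusion, and since you defer the constants to \cite{meckes2019random} in any case the lemma stands, but the sentence asserting a direct application of the curvature criterion to $U(N)$ should be replaced by the $SU(N)\times U(1)$ route.
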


We use this for one important corollary, which is the following.
\begin{corollary}[Haar random states on trace $0$ observables]
\label{cor:haar_random_trace_0}
    Let $\ket{\psi}$ be a $n$-qubit Haar random state and $O$ be a trace $0$ observable.  Then the following holds: 
    \begin{equation*}
        \Pr_{\ket{\psi} \leftarrow \Haar(2^{n})}\left[\bra{\psi} O \ket{\psi} \geq 2^{-\sqrt{n/2}}\right] \leq \exp\left(-\frac{2^{n/2}-2}{96}\right)\,.
    \end{equation*}
\end{corollary}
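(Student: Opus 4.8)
The plan is to apply the concentration of the Haar measure (Lemma \ref{lem:haar_concentration}) to the function $f(U) = \bra{0^n} U^\dagger O U \ket{0^n}$, viewed as a function on $U(2^n)$, since a Haar random state $\ket{\psi}$ is exactly $U\ket{0^n}$ for Haar random $U$. There are two things to check: that the expectation $\E_{U}[f(U)]$ is $0$, and that $f$ is $L$-Lipschitz in the Frobenius norm for a suitable constant $L$ (I expect $L = 2\|O\|_\infty \le 2$ for an observable with eigenvalues in $[-1,1]$, or more generally $L$ proportional to $\|O\|_\infty$).

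First I would compute the expectation. By unitary invariance of the Haar measure, $\E_{\ket\psi}[\proj{\psi}] = \Id/2^n$, so $\E_{\ket\psi}[\bra\psi O \ket\psi] = \Tr[O \cdot \Id/2^n] = \Tr[O]/2^n = 0$ since $O$ is trace-$0$. Next I would bound the Lipschitz constant: writing $f(U) - f(V) = \Tr[O(U\proj{0^n}U^\dagger - V\proj{0^n}V^\dagger)]$, one bounds this by $\|O\|_\infty \cdot \|U\proj{0^n}U^\dagger - V\proj{0^n}V^\dagger\|_1$, and then relates the trace norm of the difference of the two rank-one projectors to $\|U - V\|_F$ by a standard estimate (the map $U \mapsto U\proj{0^n}U^\dagger$ is Lipschitz with a small constant, e.g. $2$, from the operator/Frobenius norm to the trace norm). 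This yields $f$ being $L$-Lipschitz with, say, $L \le 2$.

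Then I would simply invoke Lemma \ref{lem:haar_concentration} with $N = 2^n$, $t = 2^{-\sqrt{n/2}}$, $L = 2$, and $\E[f] = 0$:
\begin{equation*}
    \Pr_{\ket\psi \leftarrow \Haar(2^n)}\left[\bra\psi O \ket\psi \ge 2^{-\sqrt{n/2}}\right] \le \exp\left(-\frac{(2^n - 2)\, 2^{-\sqrt{2n}}}{24 \cdot 4}\right).
\end{equation*}
The last step is to simplify the exponent: $(2^n - 2)\,2^{-\sqrt{2n}} / 96 \ge (2^{n/2} - 2)/96$ for all sufficiently large $n$, since $2^{n - \sqrt{2n}} \gg 2^{n/2}$ (indeed $n - \sqrt{2n} \ge n/2$ once $n \ge 8$), which gives exactly the claimed bound $\exp(-(2^{n/2}-2)/96)$. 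I do not anticipate a serious obstacle here; the only mild subtlety is nailing down the Lipschitz constant of $U \mapsto \bra{0^n}U^\dagger O U\ket{0^n}$ precisely enough that the constants work out, and making sure the inequality $n - \sqrt{2n} \ge n/2$ (hence the exponent simplification) holds in the regime where the statement is asserted.
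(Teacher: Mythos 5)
Your proposal is correct and follows essentially the same route as the paper: both apply \Cref{lem:haar_concentration} to $f(U) = \bra{0^n}U^\dagger O U\ket{0^n}$ with Lipschitz constant $2$ and $t = 2^{-\sqrt{n/2}}$. You are in fact more careful than the paper's three-sentence proof, which omits the verification that $\E[f]=0$, the justification of the Lipschitz bound (stating only that $f$ is a degree-$2$ polynomial), and the final simplification $(2^n-2)2^{-\sqrt{2n}} \geq 2^{n/2}-2$ needed to match the stated exponent.
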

\begin{proof}
    We note that we can equivalently phrase this probability as being over a Haar random unitary, and the state $U \ket{0}$.  The function $f(U) = \bra{0}U^{\dagger} O U \ket{0}$ is a degree $2$-polynomial in $U$, so $f$ is $2$-Lipshitz in the Frobenius norm.  Applying \Cref{lem:haar_concentration} to this function and $t = 2^{-\sqrt{n/2}}$ yields the desired result. 
\end{proof}

\subsection{Quantum Oracles and the Common Reference States}

\begin{definition}[Quantum oracle access]
    Let $f: \{0, 1\}^{n} \mapsto \{0, 1\}$ be a classical Boolean function, a quantum query algorithm $\mathcal{A}^{(\cdot)}$ queries $f$ via access to a unitary $U_{f}$ that acts as
    \begin{equation*}
        U_{f} \ket{x} \ket{b} \mapsto \ket{x} \ket{b \oplus f(x)}\,.
    \end{equation*}
\end{definition}

Typically quantum oracles must be unitary transformations, however recently a new model of ``isometry'' oracles has appeared in the literature~\cite{chen2024power,ananth2024cryptography}.  This model, which essentially allows access to a very specific quantum resource (a single quantum state and no way to un-compute it) has been shown to allow for oracle separations between $\EFI$ pairs and $\PRS$.  However, the question of boosting these separations to standard oracles remains an open question.

\begin{definition}[Common reference quantum state (CRQS) \cite{morimae2024unconditionally,qian2024unconditionally}]
    The common reference quantum state (CRQS) model is an isometry that can be accessed as a quantum oracle.  Let $\mathcal{V} = \{V_m\}_{m \in \mathbb{N}}$, with $V_m: \mathbb{C} \mapsto \mathbb{C}^{2^m}$ be a family of isometries such that 
    \begin{equation*}
        V_{m} \ket{\alpha} \mapsto \ket{\alpha} \ket{\phi_{m}}\,,
    \end{equation*}
    We call $\{\ket{\phi_{m}}\}$ the family of common reference quantum states.%
    \footnote{\cite{morimae2024unconditionally,qian2024unconditionally} require that the common reference quantum states are efficiently preparable by a third-party setup algorithm. Since the states in our setting are prepared by an oracle, we do not make this requirement. Because of this, while the quantum auxiliary input model and CRQS models of~\cite{morimae2024unconditionally,qian2024unconditionally} are incomparable, both of their models are special cases of the CRQS model by our definition.}
\end{definition}

A special case of the common reference state model is when the reference state family is drawn uniformly at random from the Haar measure.

\begin{definition}[Common Haar random state model (CHRS) \cite{chen2024power,ananth2024cryptography}]
    The common Haar random state (CHRS) model is a CRQS isometry where every $\ket{\psi_{m}}$ is drawn from $\Haar(2^{m})$.
\end{definition}

Secure $\EFI$ pairs (and thus quantum commitments) as well as $\onePRS$ are known to exist relative to a CHRS oracle~\cite{chen2024power,ananth2024cryptography}.

\subsection{Cryptographic Primitives}

Here we define a number of cryptographic primitives, as well as their instantiation in the common Haar random state model.

\begin{definition}[Efficient, far, indistinguishable pairs \cite{brakerski2022computational}]
   A family of pairs of quantum states $\{(\rho_{0, \lambda}, \rho_{1, \lambda})\}_{\lambda}$ is an $\EFI$ pair if
   \begin{itemize}
       \item (Efficiently preparable) There exists a family of polynomial-size, time efficient quantum circuits $\{C_{\lambda}\}_{\lambda}$ such that $\Tr_{\reg{B}}[C_{\lambda}\ket{b0}] = \rho_{b, \lambda}$.
       \item (Statistically far) There exists a negligible function $\mu$ such that 
       \begin{equation*}
       \td(\rho_{0, \lambda}, \rho_{1, \lambda}) \geq 1 - \mu(\lambda)\,.
       \end{equation*}
       \item (Computationally indistinguishable) There exists a negligible function $\nu$ such that for all polynomial-time quantum adversaries $A$, 
       \begin{equation*}
           \left|\Pr[A(\rho_{0, \lambda}) = \top] - \Pr[A(\rho_{1, \lambda}) = \top]\right| \leq \nu(\lambda)\,.
       \end{equation*}
   \end{itemize}
\end{definition}

\begin{definition}[$\EFI$ pairs in the CRQS model]
    For an $\EFI$ pair in a CRQS model, the circuits for preparing the $\EFI$ pair has access to the common reference quantum state oracle, and the computationally indistinguishability holds for all adversaries who have query access to the common reference quantum state oracle.
\end{definition}

\begin{remark}
    We do not formally define quantum bit commitments here, as they are equivalent to $\EFI$ pairs by~\cite{brakerski2022computational}, and this equivalence extends to the unitary oracle setting, which is our \emph{ultimate} goal.
    Interestingly, whether this equivalence holds in under all CRQS models is not clear, since the formal equivalence between different flavors of commitments requires the uncomputing of quantum circuits, which cannot necessarily be done in CRQS models. More specifically it is not known if computationally binding and statistically hiding commitments are equivalent to statistically binding and computationally hiding commitments in general in CRQS models (where by ``computational'' security we mean secure against adversaries that receive only a polynomial number of copies of the CRQS).
    This is of course not an issue for the headline results of our paper, since we ultimately upgrade all our results to a unitary oracle model, in which the equivalences hold once more. Furthermore, $\EFI$ pairs in CRQS models are equivalent to computationally hiding commitments by the techniques of~\cite{morimae2024unconditionally,qian2024unconditionally} (which extend those of ~\cite{brakerski2022computational} to CRQS models).
    For simplicity, we will therefore mainly only refer to $\EFI$ pairs for the rest of this~paper.
\end{remark}

Next we define a one-way state generator, using the definition from \cite{morimae2022one}.

\begin{definition}[One-way state generators \cite{Morimae_2022,morimae2022one}]
    A one-way state generator ($\OWSG$) is a collection of QPT algorithms $(\KeyGen, \StateGen, \Ver)$ such that
    \begin{itemize}
        \item $\KeyGen$ takes as input the security parameter $1^{\lambda}$ and outputs a classical key $k \in \{0, 1\}^{\kappa}$.
        \item $\StateGen$ takes as input a classical key $k$ and outputs a $m$-qubit quantum state $\rho_k$.
        \item $\Ver$ takes as input a classical key $k$ and quantum state $\rho$ and outputs either $\bot$ or $\top$.
    \end{itemize}
    A $\OWSG$ satisfies correctness if for all $\lambda$,
    \begin{equation*}
        \Pr \left [ \mathsf{Ver}(k, \rho_k) \text{ accepts} : \begin{array}{c}
k \leftarrow  \mathsf{KeyGen}(1^\lambda) \\
\rho_k \leftarrow \StateGen(k) 
\end{array}
   \right ] \geq 1 - \negl(\lambda)\,.
    \end{equation*}
    A $\OWSG$ satisfies one-way security if for all polynomial-time quantum adversaries $A$ and polynomials~$t$, 
    \begin{equation*}
        \Pr \left [ \mathsf{Ver}(k, \rho_{k'}) \text{ accepts} : \begin{array}{c}
k' \leftarrow  \mathsf{KeyGen}(1^\lambda) \\
\rho_{k'} \leftarrow \StateGen(k')\\
k \leftarrow A\left(1^{\lambda}, \rho_{k'}^{\otimes t}\right)
\end{array}
   \right ] \leq \negl(\lambda)\,.
    \end{equation*}
\end{definition}

Note that we follow the convention set by the existing literature and define $\OWSG$s to have efficient verification, but not necessarily with pure-state outputs.  

\begin{definition}[One-way state generators in the CRQS model]
    For a one-way state generator in the CRQS model, both $\KeyGen$ and $\StateGen$ have access to the common reference quantum state oracle, and one-way security holds relative to all polynomial-time quantum adversaries that have access to the common reference quantum state oracle.
\end{definition}

\begin{definition}[One-way puzzle \cite{cryptoeprint:2023/1620}]
\label{def:one_way_puzzle}
    A one-way puzzle ($\owpuzz$) is a pair of quantum algorithms $(\mathsf{Samp}, \Ver)$ such that
    \begin{enumerate}
        \item $\mathsf{Samp}$ takes as input a security parameter $1^{\lambda}$ and outputs a pair of classical strings $(k, s)$, where $k \in \{0, 1\}^{\lambda}$.  $\mathsf{Samp}$ must be efficient.
        \item $\Ver$ takes as input a pair $(k, s)$ and outputs either $\bot$ or $\top$.
    \end{enumerate}
    A $\owpuzz$ satisfies correctness if for all $\lambda$,
    \begin{equation*}
        \Pr\left[\Ver(k, s) \text{ accepts}: (k, s) \leftarrow \mathsf{Samp}(1^{\lambda})\right] \geq 1 - \negl(\lambda)\,.
    \end{equation*}
    A $\owpuzz$ satisfies security of for all polynomial-time quantum adversaries $A$
    \begin{equation*}
        \Pr\left[\Ver(A(s), s) \text{ accepts}: (k, s) \leftarrow \mathsf{Samp}(1^{\lambda})\right] \leq \negl(\lambda)\,.
    \end{equation*}
\end{definition}
We require that $\mathsf{Samp}$ is efficient (QPT), but $\Ver$ may be efficient (efficiently verifiable one-way puzzle) or inefficient (statistically verifiable one-way puzzle). We use the convention from prior work of using $\owpuzz$ to mean the inefficiently verifiable version, but we specify which version we mean when we believe it may not be clear from context.

\begin{definition}[One-way puzzles with sample-efficient verifier in the CRQS model]
    For a one-way puzzle with sample-efficient verifier in the CRQS model, both $\mathsf{Samp}$ and $\mathsf{Ver}$ can make \emph{polynomial} many calls to the common reference quantum state oracle, and security holds relative to all polynomial-time quantum adversaries that have access to the common reference quantum state oracle.
\end{definition}

\begin{definition}[One-way puzzles with sample-inefficient verifier in the CRQS model]
    A one-way puzzle with sample-inefficient verifier is a pair of sampling and verification algorithms $(\sf{Samp}, \sf{Ver})$ with the same syntax as \Cref{def:one_way_puzzle}, except that 
    $\Ver(k, s, \bigotimes_{i=1}^r \ket{\psi_i}^{\otimes r}) \to \top/\bot$ is a time-unbounded algorithm that on input of any pair classical strings $(k,s)$ halts and outputs $\top/\bot$, where $r=r(n)$ can be arbitrarily function of $n$.
\end{definition}

Note that since $\Ver$ is allowed to be unbounded in a one-way puzzle, there is a distinction between one-way puzzles and sample-efficient one-way puzzles in the CRQS model.  In order to truly rule out one-way puzzles in a CRQS model, one should rule out one-way puzzles in the CRQS model that have unbounded query access to the reference state.

\begin{definition}[Single-copy pseudo-random states \cite{Morimae_2022}]
    A single-copy pseudo-random states generator $\onePRS$ is a QPT algorithm $\mathsf{Gen}$ that takes as input a key $k \in \{0, 1\}^{\lambda}$ of length $\lambda$ and outputs a pure state $\ket{\psi_{k}}$ on $m(\lambda) > \lambda$ qubits.  
    
    A $\onePRS$ satisfies the pseudo-randomness property if for all polynomial-time quantum adversaries $A$ and $\lambda \in \mathbb{N}$,
    \begin{equation*}
        \left|\Pr_{k \leftarrow \{0, 1\}^{\lambda}}\left[A(\ket{\psi_{k}})\text{ accepts}\right] - \Pr_{\ket{\psi} \leftarrow \Haar(2^{m(\lambda)})}\left[A(\ket{\psi}) \text{ accepts}\right]\right| \leq \negl(\lambda)\,.
    \end{equation*}
\end{definition}
\begin{definition}[Single copy pseudo-random states in the CRQS model]
    For a single-copy pseudo-random state in the CRQS model, $\mathsf{Gen}$ has access to the common reference quantum state oracle, and pseudo-randomness holds relative to all polynomial-time quantum adversaries that have access to the common reference quantum state.
\end{definition}

\subsection{Unitary Complexity Theory}

\begin{definition}[$\mathsf{unitaryPSPACE}$~\cite{bostanci2023unitary}]
    A unitary synthesis problem $(U_x)_{x}$ is in $\mathsf{UnitaryPSPACE}$ if for all polynomials $p(n)$, there exists a uniform polynomial-space algorithm $C$ that implements $(U_x)_{x}$ with worst-case error $1/p(n)$.  
\end{definition}

An important fact about $\mathsf{unitaryPSPACE}$ is that it has a complete problem (under Turing reductions), the succinct Uhlmann transformation problem~\cite{bostanci2023unitary}.  By oracle access to $\mathsf{unitaryPSPACE}$, we therefore mean oracle access to the succinct Uhlmann transformation unitary synthesis problem, or any other $\mathsf{unitaryPSPACE}$-complete problem.

\subsection{Quantum Learning Theory}

In this section we review results and definitions from quantum learning theory that will be relavent to our result.

\begin{definition}[Threshold search~\cite{badescu2024improved}]
     Let $\{M_i\}_{i \in [m]}$ be a collection of $2$-outcome measurements.  Let $\rho$ be an unknown quantum state with the promise that there exists an index $i$ such that 
     \begin{equation*}
         \Tr[M_i \rho] \geq 3/4\,.
     \end{equation*}
     The threshold search problem is to output a measurement $M_j$ such that $\Tr[M_j \rho] \geq 1/3$.
\end{definition}

\begin{theorem}[Random threshold search \cite{watts2024quantum}]\label{thm:random_threshold_search}
    There is an algorithm that uses $O(\log^2(m))$ space and samples of $\rho$, has expected time $O(m)$, and makes intermediate measurements, that solves the threshold search problem with constant probability.
\end{theorem}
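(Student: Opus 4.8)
The plan is to reduce threshold search to a logarithmic number of calls to a space-efficient ``quantum OR'' subroutine, arranged as a binary search over the index set $[m]$. Fix the unknown state $\rho$ and, using the promise, fix an index $i^\star$ with $\Tr[M_{i^\star}\rho] \ge 3/4$. For a contiguous block $S = \{a,\dots,b\} \subseteq [m]$, the subroutine $\mathsf{OR}(S)$ should take a constant number of fresh copies of $\rho$, apply to them a random sequence of the measurements $\{M_i\}_{i\in S}$ (keeping each copy and re-measuring it until some $M_i$ accepts or the sequence runs out), and report $\mathsf{accept}$/$\mathsf{reject}$ accordingly. I want two guarantees from it: \textbf{(i)} if some $i\in S$ has $\Tr[M_i\rho]\ge 3/4$ then $\mathsf{OR}(S)$ accepts with probability at least a fixed constant; and \textbf{(ii)} if $\max_{i\in S}\Tr[M_i\rho] < 1/3$ then $\mathsf{OR}(S)$ accepts with probability below a strictly smaller constant. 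The essential point in (ii) is that the ``bad'' acceptance probability must be controlled by $\max_{i\in S}\Tr[M_i\rho]$, not by $\sum_{i\in S}\Tr[M_i\rho]$, since otherwise it blows up as $|S|$ grows. Given such a subroutine, amplify it to error below $1/(10\log m)$ by a few independent repetitions.

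The search itself is then standard: maintain a block $S$, initially $[m]$; split it into left and right halves $S_L, S_R$; run $\mathsf{OR}(S_L)$, and if it rejects run $\mathsf{OR}(S_R)$; recurse into whichever half accepted, preferring $S_L$; after $\lceil \log_2 m\rceil$ rounds $S$ is a singleton $\{j\}$, which we output. For correctness one checks the invariant that the current block always contains either $i^\star$ or at least one index of value $\ge 1/3$: the half holding that ``good'' index is accepted except with probability $1/(10\log m)$ by (i), and a half whose maximum value is below $1/3$ is erroneously accepted only with that small probability by (ii); a union bound over the at most $2\log m$ calls makes the whole procedure succeed with constant probability. The number of copies consumed is (copies per $\mathsf{OR}$ call) times (number of calls), which is $O(\log^2 m)$; the work space is just the current block endpoints and round counter ($O(\log m)$ bits), the internal workspace of $\mathsf{OR}$, and the $O(1)$ copies of $\rho$ currently being measured — here it matters that $\mathsf{OR}$ chooses each next measurement with fresh random bits and so never stores a full permutation of $S$, keeping its workspace to $O(\log m)$ bits (if truly independent coins are at a premium, a space-efficient pseudorandom generator supplies them); and since $\mathsf{OR}(S)$ runs in time $O(|S|)$, the round at depth $d$ costs $O(m/2^{d})$ and the geometric sum is $O(m)$, with re-runs on failed rounds only affecting the expectation by a constant factor.

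The crux, and the only step that is not routine, is establishing the quantum OR subroutine with the max-controlled soundness of (ii) together with the ``gentleness'' it relies on: within a single copy, a random sequence of measurements that almost all reject should disturb $\rho$ only a little on average, so that the later measurements in the sequence still read out $\rho$ faithfully. The naive Harrow--Lin--Montanaro OR test gives only a $\sum_i \Tr[M_i\rho]$-type soundness bound, which is worthless at the top of the binary search (a block of $k$ measurements each accepting with probability $\approx 1/k$ has sum $\approx 1$), so what is needed is precisely the sharpened, gentle version of quantum OR proved in~\cite{watts2024quantum}. Once that is in hand, the block bookkeeping, the amplification and union bound, the time recurrence, and the derandomization of the measurement sequence are all elementary.
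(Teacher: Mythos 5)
You have outsourced the entire difficulty to a subroutine that does not exist in the form you need, and that is not what \cite{watts2024quantum} supplies. Your binary search is sound \emph{given} an $\mathsf{OR}(S)$ test that uses $O(1)$ fresh copies of $\rho$ and whose false-acceptance probability on a block with no good index is controlled by $\max_{i\in S}\Tr[M_i\rho]$ rather than $\sum_{i\in S}\Tr[M_i\rho]$. But such a test is impossible with a single copy: take $\rho=\1/k$ on $\C^{k}$ and $M_i=\proj{i}$, so every $\Tr[M_i\rho]=1/k<1/3$ and the test must reject; yet $\rho$ is the uniform mixture of the states $\proj{j}$, on each of which $M_j$ accepts with probability $1$, so any one-copy procedure that accepts every good instance with probability at least $c$ accepts the bad instance with probability at least $c$ by convexity. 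A constant number of copies does not escape this in any way you establish, and the ``sharpened, gentle quantum OR'' of \cite{watts2024quantum} does not either: gentleness improves the completeness/disturbance side of random measurement sequences, while the upper bound on their acceptance probability remains of sum type, so the blow-up you correctly identify at the top of the recursion is not cured by citing that paper.

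The cited algorithm sidesteps the issue by a different mechanism, which is exactly why the sample complexity is $\Theta(\log^2 m)$: each $M_i$ is replaced by a \emph{single} threshold measurement acting jointly on all $\Theta(\log^2 m)$ copies, accepting iff more than a $\theta$-fraction of the copies accept, with $\theta$ drawn randomly from $[0.4,0.6]$. A Chernoff bound then makes each measurement with $\Tr[M_i\rho]\le 1/3$ fire falsely with probability $m^{-\Omega(\log m)}$, so the sum-type soundness bound over all $m$ measurements is already $o(1)$ and no max-controlled OR is needed; the threshold measurements are simply scanned in a random order (no binary search), randomness of the order supplying the gentleness needed for the genuinely good measurement to still fire when reached, and randomness of $\theta$ handling indices whose value sits at a fixed threshold. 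If you insist on the binary-search architecture, you must at minimum replace ``$O(1)$ copies per call'' by threshold-amplified measurements on enough copies to push each false acceptance below $1/\mathrm{poly}(|S|)$, which is essentially Aaronson's original shadow-tomography recursion and no longer matches the stated bounds without a separate accounting.
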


The algorithm from \cite{watts2024quantum} is to fix a threshold $\theta \in [0.4, 0.6]$, and to repeatedly measure a threshold measurement for a randomly sampled $M_i$ on $\log^2(m)$ copies of the state, testing whether more than a $\theta$ percent of the measurements accepted.  

The following theorem allows us to simulate this algorithm in $\mathsf{unitaryPSPACE}$.  
\begin{theorem}[$\mathsf{unitaryPSPACE}$-simulation~\cite{girish2021eliminating}]
    Every quantum algorithm that runs in time $T$ with space $S \geq \log(T)$ with unitary operations and intermediate unital%
    \footnote{Unital measurements are those that output a collapsed quantum state but no classical outcome. Thus they send the fully mixed state to the fully mixed state. In particular, they do not allow discarding a quantum state nor resetting it to a fixed state. This limited kind of measurement, however, is sufficient for our setting in which we only need to sample randomness.}
    measurements can be simulated by a quantum algorithm of time $T \cdot S^2 \poly \log(S)$ and space $O(S \cdot \log T)$ with only unitary operations and no intermediate measurements.
\end{theorem}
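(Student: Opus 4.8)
The plan is a two-stage argument. First I would turn every intermediate unital measurement into a unitary acting on fresh ancillas---an \emph{exact} simulation that costs $O(T)$ extra qubits---and then compress those $O(T)$ ancillas down to $O(S\log T)$ using a space-efficient pseudorandom generator.

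\emph{Stage 1 (deferred measurement).} Up to a basis-change unitary on $O(S)$ qubits, a unital measurement is a dephasing channel in the computational basis, and dephasing a single qubit is exactly the channel obtained by applying a uniformly random Pauli $Z$ to it: $\tfrac12\rho+\tfrac12 Z\rho Z=\Tr_{\reg{a}}\!\mbracket{\mathrm{CZ}\,\mparen{\proj{+}\ot\rho}\,\mathrm{CZ}}$, where $\mathrm{CZ}$ acts between a fresh ancilla $\reg{a}$ initialized in $\ket{+}$ and the measured qubit. Applying this to each of the at most $T$ measurements replaces the algorithm by a unitary circuit $A$ on the $O(S)$ work qubits together with $T$ ancilla bits in $\ket{+}$, each serving only as the control of one controlled-$Z$; tracing out the ancillas reproduces the original output distribution exactly. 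We have now spent $O(S+T)$ space and must get down to $O(S\log T)$.

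\emph{Stage 2 (derandomization).} The $T$ ancilla bits now play the role of $T$ uniformly random bits $r\in\bits^T$ feeding the unitary algorithm $A$, whose final measured output has distribution $\E_r\mbracket{D_r}$. Because $A$ uses only $O(S)$ work qubits and reads $r$ one bit at a time in a fixed order, the probability that $A$ outputs any fixed value can be written as an iterated product $\langle\mathrm{vec}(\Pi),\,\Psi_{T,r_T}\cdots\Psi_{1,r_1}\,\mathrm{vec}(\rho_0)\rangle$ of super-operators on the $2^{O(S)}$-dimensional space of operators on the work register, with step $i$ selecting $\Psi_{i,0}$ or $\Psi_{i,1}$ according to $r_i$ --- precisely the bounded-width, read-once structure that a Nisan-type pseudorandom generator fools. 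Hence there is a generator $g\colon\bits^{O(S\log(T/\delta))}\to\bits^T$ computable in space $O(\abs{s})$ such that replacing $r$ by $g(s)$ for a uniform seed $s$ perturbs the output distribution by at most $\delta$; taking $\delta$ a small constant (enough for settings like threshold search, which already succeed only with constant probability) gives seed length $O(S\log T)$, and the hypothesis $S\ge\log T$ is what makes the $O(\log T)$-bit step counter and control logic fit in $O(S)$ space. Finally I would realize this measurement-free: prepare the seed register in uniform superposition $\tfrac{1}{\sqrt{2^{\abs{s}}}}\sum_s\ket{s}$; whenever bit $i$ of the randomness is needed, coherently compute $g_i(s)$ into a scratch ancilla (space $O(\abs{s})$), apply the controlled-$Z$, then uncompute $g_i(s)$ to free the scratch. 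Tracing out the disentangled scratch and the seed register leaves the work register in the state $\E_s\mbracket{\phi_{g(s)}}$, which is $\delta$-close in the relevant output statistic to the ideal $\E_r\mbracket{\phi_r}$. Evaluating the needed portions of $g$ a total of $O(T)$ times accounts for the running time $T\cdot S^2\poly\log S$ and space $O(S\log T)$.

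The step I expect to be the main obstacle is the Stage-2 claim that a \emph{classical}-space PRG fools the \emph{quantum} space-$S$ computation $r\mapsto D_r$: one must cast the computation in the iterated-matrix-product form of space-bounded computation (each random bit selecting one of two contractions on the $2^{O(S)}$-dimensional operator space), verify that the target output probabilities are bounded linear functionals of that product, and then invoke the standard-but-nontrivial fact that Nisan's generator fools bounded-width iterated matrix products with bounded entries --- this is the technical heart of \cite{girish2021eliminating}. The measurement-to-dephasing reduction, the compute/uncompute bookkeeping, and the time/space accounting are all routine.
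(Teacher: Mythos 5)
Your proposal is essentially correct and follows the same route as the actual argument: the paper itself imports this theorem from \cite{girish2021eliminating} without proof, and its \Cref{rem:threshold_search_pspace} describes precisely your strategy (replace the randomness implicit in the deferred unital measurements by the output of an unconditional space-efficient pseudorandom generator, evaluated coherently over a superposed seed). Your identification of the technical heart --- proving that a Nisan-type generator fools the iterated product of superoperators on the $2^{O(S)}$-dimensional operator space arising from a quantum space-$S$ computation --- is exactly where the cited work does its real work.
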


This allows us to simulate the threshold search algorithm in $\mathsf{UnitaryPSPACE}$.

\begin{remark}
\label{rem:threshold_search_pspace}
The result of \cite{girish2021eliminating} is proven by providing an unconditional pseudo-random generator with small seed that can be implemented in $\mathsf{PSPACE}$.  Since the algorithm of \cite{watts2024quantum} just samples $O(m)$ random measurements and applies controlled versions of them on the input state (controlled on a single output register that indicates whether any of the previous measurements accepted), it could be applied directly to threshold search to get a simpler simulation of this algorithm.

This also means that a polynomial-time query algorithm can easily generate a succinct circuit representing the algorithm, and we really need \emph{only} an oracle to a $\mathsf{unitaryPSPACE}$-complete problem, \emph{not} the ability to perform any $\mathsf{PSPACE}$ computation ourselves.  This is in contrast to previous work on this subject, which provided an oracle called a ``QPSPACE'' oracle, which essentially made all parties polynomial space computations with intermediate measurements.
\end{remark}
\section{Separation Between OWSG and EFI Pairs}
\label{sec:state_separation}

In this section, we show that there exists an oracle in any common reference state model such that no one-way state generator exists.  Note that, as stated before, \cite{chen2024power,ananth2024cryptography} already showed that in the common Haar random state model, quantum commitments and $\EFI$ pairs exist (additionally, they show that $\onePRS$ exist), and that the security cannot be broken by an adversary of any complexity that only has access to polynomial copies of the common Haar state.
Together, this will imply an oracle in the common reference state model relative to which $\EFI$ pairs exist but one-way state generators do not.

\begin{theorem}\label{thm:one_way_state_generators_do_not_exist}
    Let $\mathcal{V}$ be a common reference quantum state oracle for any family of reference states. Relative to $(\mathcal{V}, \mathsf{unitaryPSPACE})$, efficiently verifiable one-way state generators do not exist. 
\end{theorem}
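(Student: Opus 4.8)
\textit{Proof approach.} The plan is to break any candidate efficiently verifiable one-way state generator by a ``threshold search over all keys'' attack, implemented with only polynomially many samples and in polynomial space via the $\mathsf{unitaryPSPACE}$ oracle. Fix such a candidate $(\KeyGen,\StateGen,\Ver)$ relative to $\mathcal{V}$, with key length $\kappa=\kappa(\lambda)=\poly(\lambda)$, and set $N=\lambda$. For each $k\in\{0,1\}^{\kappa}$ I would define a two-outcome measurement $M_k$ on $N$ copies of the state register: run the purified verifier $\Ver(k,\cdot)$ coherently (deferring its measurement) on each of the $N$ copies, coherently AND the $N$ accept flags into a fresh qubit, measure that qubit to obtain the outcome of $M_k$, and uncompute. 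Then $\Tr[M_k\,\sigma^{\otimes N}]=\big(\Tr[\Pi^{k}_{\mathrm{acc}}\,\sigma]\big)^{N}$ for every single-copy state $\sigma$, where $\Pi^{k}_{\mathrm{acc}}$ is the accept POVM element of $\Ver(k,\cdot)$, and — being ultimately a one-qubit measurement — $M_k$ only gently disturbs the $N$ copies. Since $\Ver$ is a plain $\poly(\lambda)$-time algorithm, $M_k$ is an explicit measurement uniform in $k$, so $\{M_k\}_{k\in\{0,1\}^{\kappa}}$ is a legitimate family of $m'\coloneqq 2^{\kappa}$ measurements for threshold search.

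The attacker $A$, on input $\rho_{k'}^{\otimes t}$ with $t\coloneqq N\cdot O(\log^{2}m')=\poly(\lambda)$, re-groups its copies into $O(\log^{2}m')$ copies of the bundled state $\rho\coloneqq\rho_{k'}^{\otimes N}$ and runs the random threshold search algorithm of \Cref{thm:random_threshold_search} on $\{M_k\}_k$. First, the promise is met: by correctness (and Markov's inequality), with probability $1-\negl(\lambda)$ over $k'\leftarrow\KeyGen(1^\lambda)$ we have $\Tr[\Pi^{k'}_{\mathrm{acc}}\,\rho_{k'}]\geq 1-\negl(\lambda)$, hence $\Tr[M_{k'}\,\rho]=(1-\negl(\lambda))^{N}\geq 3/4$ for large $\lambda$, so $k'$ witnesses the promise. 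By \Cref{thm:random_threshold_search}, threshold search uses $O(\log^{2}m')=O(\kappa^{2})$ space, $2^{O(\kappa)}$ time, and intermediate unital measurements whose only role is to sample the indices $k$; so by the $\mathsf{unitaryPSPACE}$-simulation theorem and \Cref{rem:threshold_search_pspace} it is equivalent to a polynomial-space \emph{unitary} procedure with no intermediate measurements, whose succinct circuit $A$ writes down in polynomial time and executes using its oracle access to $\mathsf{unitaryPSPACE}$ (i.e.\ to $\SuccinctUhlmann$), with the supplied copies of $\rho_{k'}$ as input. With constant probability this outputs a key $k$ with $\Tr[M_k\,\rho]=\big(\Tr[\Pi^{k}_{\mathrm{acc}}\,\rho_{k'}]\big)^{N}\geq 1/3$, i.e.\ $\Tr[\Pi^{k}_{\mathrm{acc}}\,\rho_{k'}]\geq 3^{-1/N}\geq 1-O(1/\lambda)$; since a fresh copy of $\rho_{k'}$ passes $\Ver(k,\cdot)$ with exactly this probability, $A$ wins the one-way game with probability $\Omega(1)$, contradicting negligible one-way security. (Taking $N=1$ already gives a $\geq 1/3$ attack, which contradicts the definition; $N=\lambda$ merely upgrades the conclusion to near-perfect inversion.)

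I expect the main obstacle to be the efficient realization of threshold search: marrying the $O(\log^{2}m')$-sample, $O(\log^{2}m')$-space guarantee of \Cref{thm:random_threshold_search} over exponentially many candidate keys with the $\mathsf{unitaryPSPACE}$ simulation, so that the whole exponential-time, intermediate-measurement computation collapses to a $\mathsf{unitaryPSPACE}$ query consuming only polynomially many copies of $\rho_{k'}$; here I would lean on \Cref{rem:threshold_search_pspace} for the observation that a polynomial-time query algorithm can write down the required succinct circuit and that only a $\mathsf{unitaryPSPACE}$-complete oracle is needed. Two remarks. First, the attacker never queries $\mathcal{V}$: with verification $\mathcal{V}$-free (as in the stated definition), $A$ only manipulates verification circuits and the copies of $\rho_{k'}$ it is given, so the $\mathcal{V}$-component plays no role and the argument holds for an arbitrary reference family $\mathcal{V}$; if instead one lets $\Ver$ query $\mathcal{V}$, then $A$ first fetches $\poly(\lambda)$ copies of each of the $\poly(\lambda)$ relevant reference states $\ket{\phi_m}$ and adjoins them to the input, and since threshold search consumes only polynomially many copies of its bundled input state — of which each reference-state stash is a part — while each coherent $M_k$ uncomputes $\Ver$ and restores the stash up to the gentle disturbance threshold search already tolerates, polynomially many $\mathcal{V}$-queries still suffice. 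Second, combining this theorem with the fact that $\EFI$ pairs (and $\onePRS$) exist relative to a common Haar random state oracle even against adversaries with unbounded computation and polynomially many state samples~\cite{chen2024power,ananth2024cryptography} yields a single oracle separating $\EFI$ pairs from efficiently verifiable one-way state generators.
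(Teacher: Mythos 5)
Your proposal is correct and follows essentially the same route as the paper: define a per-key measurement that runs the verifier on $O(\lambda)$ copies in parallel and accepts only if all accept, observe that correctness meets the threshold-search promise while any key accepted with probability $1/3$ must pass single-copy verification with probability $1-O(1/\lambda)$, and implement the exponential-size threshold search as a single $\mathsf{unitaryPSPACE}$ query via \Cref{thm:random_threshold_search} and \Cref{rem:threshold_search_pspace}. Your added remarks (Markov over the key distribution, the $N=1$ observation, and the explicit handling of $\Ver$'s queries to $\mathcal{V}$ by stashing reference-state copies) are consistent with, and slightly more careful than, the paper's treatment.
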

\begin{proof}
    Let $(\KeyGen, \StateGen, \Ver)$ be a $\OWSG$.  In the common Haar random state model, we can assume that $\Ver$ works by calling some quantum circuit (one for each key $k$) on input $\ket{\psi}_{\reg{A}} \otimes (\ket{\phi_{1}}^{\otimes s} \ldots \ket{\phi_{s}}^{\otimes s})_{\reg{B}}$ for some $s = \poly(\lambda)$, and then measuring a bit in the computational basis.  Let $U_{k}$ be said circuit and consider the following operator that acts on the input for $U_{k}$ (registers $\reg{AB}$), and a copy of a $\OWSG$ state in register $\reg{C}$. We define the measurement $\Pi_{k}$ as follows
    \begin{equation*}
        \Pi_{k} = \left(\big((U_{k}^{\dagger})_{\reg{CD}}\big) \left(\proj{1}_{\reg{C}} \otimes \id_{\reg{B}}\right)\big((U_{k})_{\reg{AB}}\big)\right)^{\otimes 10 \lambda}\,.
    \end{equation*}
    \begin{claim}
    \label{claim:owsg_completeness}
        On quantum input $(\ket{\psi_{k}} \otimes \ket{\phi_{1}}^{\otimes s} \ldots \ket{\phi_{s}}^{\otimes s})^{\otimes 10 \lambda}$, $\Pi_{k}$ accepts with probability $1 - \negl(\lambda)$.
    \end{claim}
    \begin{proof}
        By the correctness of the $\OWSG$, running $\Ver$ on a copy of $\ket{\psi_{k}}$ and key $k$ accepts with probability $1 - \negl(\lambda)$.  Since the probability that $\Pi_k$ accepts is the probability that $10\lambda$ many verifiers (run in parallel) accept, its accept probability is given by
        \begin{equation}
            (1 - \negl(\lambda))^{\otimes 10\lambda} \geq 1 - 10 \lambda \cdot \negl(\lambda) = 1 - \negl(\lambda)\,.
        \end{equation}
        Here the second line is Bernoulli's inequality. This completes the proof of the completeness of the algorithm.
    \end{proof}

    \begin{claim}
    \label{claim:owsg_soundness}
        If $\Pi_{k}$ accepts with probability $\geq 1/3$ for some key $k$, then the probability that verification accepts is at least $1 - \frac{1}{5\lambda}$.
    \end{claim}
    \begin{proof}
        Let $p$ be the probability that $\Ver$ accepts when given key $k$ and state $\ket{\psi}$.  Then it is clear to see that the probability $\Pi_{k}$ accepts on the state $\ket{\psi}^{\otimes 10 \lambda}$ is
        \begin{equation*}
            p^{10 \lambda} \geq \frac{1}{3}\,.
        \end{equation*}
        Solving for $p$, we see that
        \begin{align*}
            p &\geq e^{-\frac{\ln(3)}{10\lambda}}\\
            &\geq 1 - \frac{\ln(3)}{10 \lambda}\\
            &\geq 1 - \frac{1}{5\lambda}\,.
        \end{align*}
        Here we use the inequality $1 - x \leq e^{-x}$, and then we use the fact that $\ln(3) \leq 2$.  This completes the proof that the algorithm always provides a key that violates one-way state generator security.
    \end{proof}

    The algorithm for breaking a one-way state generator is to run threshold search on $O(\lambda^2)$ many copies of the input state, and return the key corresponding to the measurement that threshold search outputs. Note that we need $O(\lambda)$ for every $\Pi_k$ and threshold search requires $O(\log(m)) = O(\lambda)$ copies of the input state, which is itself $O(\lambda)$ copies of the one-way state generator state, to run. 
    
    From the first claim, the promise of threshold search is met, so threshold search outputs a key such that $\Tr[\Pi_{k} \rho] \geq 1/3$ with constant probability.  From the second claim, we know that the key will be accepted by the verifier with probability at least $1 - \frac{1}{5\lambda}$, which contradicts one-way state generator security.  Finally, as noted in \Cref{rem:threshold_search_pspace}, the threshold search algorithm can be implemented with oracle access to a $\mathsf{unitaryPSPACE}$-complete problem. This completes the proof of \Cref{thm:one_way_state_generators_do_not_exist}.
\end{proof}

We can apply our attack to the common Haar random state model to get an oracle separation between $\onePRS$ and one-way state generators.

\begin{theorem}
    In the common Haar random state model, $\onePRS$ exists and one-way state generators do not.
\end{theorem}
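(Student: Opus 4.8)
The plan is to obtain the statement by combining two ingredients that are already at hand: the lower bound of \Cref{thm:one_way_state_generators_do_not_exist}, instantiated with the common Haar random state oracle, and the positive results of \cite{chen2024power,ananth2024cryptography}, which construct a $\onePRS$ in the CHRS model. Write $\mathcal{V}$ for the CHRS oracle, i.e.\ the CRQS oracle whose reference family $\{\ket{\phi_m}\}_{m}$ is drawn with each $\ket{\phi_m} \leftarrow \Haar(2^m)$ independently.

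First I would apply \Cref{thm:one_way_state_generators_do_not_exist} verbatim with $\mathcal{V}$ taken to be this CHRS oracle. That theorem already tells us that relative to $(\mathcal{V}, \mathsf{unitaryPSPACE})$ no efficiently verifiable one-way state generator exists: the threshold-search attack constructed there works against every $\OWSG$ that is correct relative to the oracle, for every fixing of the reference states, and is implementable with access to a $\mathsf{unitaryPSPACE}$-complete problem. So the ``no $\OWSG$'' half of the claim is immediate.

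Second I would invoke \cite{chen2024power,ananth2024cryptography}: they give a $\onePRS$ whose generator makes only polynomially many queries to $\mathcal{V}$ and whose pseudorandomness is proved \emph{information-theoretically}, namely the distinguishing advantage of \emph{any} adversary making at most polynomially many queries to $\mathcal{V}$ is negligible (with overwhelming probability over the draw of the reference states), with no computational restriction on the adversary. The point to check is that arming the adversary with a $\mathsf{unitaryPSPACE}$ oracle does not break this: $\mathsf{unitaryPSPACE}$ is a fixed, deterministic oracle independent of the Haar family, so an adversary with access to $(\mathcal{V}, \mathsf{unitaryPSPACE})$ still makes only polynomially many queries to $\mathcal{V}$ and thus still falls under the same negligible bound. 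Hence the $\onePRS$ of \cite{chen2024power,ananth2024cryptography} remains secure relative to $(\mathcal{V}, \mathsf{unitaryPSPACE})$.

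Finally I would assemble the pieces and fix the oracle in the standard way: for every fixing of the reference family the $\OWSG$ attack succeeds, while for each QPT adversary the $\onePRS$ bound fails only with negligible probability over the family, so a union bound over a countable enumeration of adversaries plus a Borel--Cantelli argument (as in \cite{aaronson2007quantum}) yields a single choice of $\{\ket{\phi_m}\}_{m}$ for which both conclusions hold; since $\mathsf{unitaryPSPACE}$ is itself a fixed oracle this gives one concrete oracle relative to which $\onePRS$ exists but (efficiently verifiable) one-way state generators do not. I do not expect a real obstacle here — the substantive work is already in \Cref{thm:one_way_state_generators_do_not_exist} and in \cite{chen2024power} — the one delicate point being the observation in the second step that the CHRS $\onePRS$ security is of the query-bounded, computationally-unbounded flavor, which is exactly why it survives the addition of the $\mathsf{unitaryPSPACE}$ oracle that the $\OWSG$ attack requires.
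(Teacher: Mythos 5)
Your proposal is correct and matches the paper's proof essentially verbatim: the paper likewise combines \Cref{thm:one_way_state_generators_do_not_exist} (instantiated with the CHRS oracle, which is a CRQS model) with the result of \cite{chen2024power,ananth2024cryptography} that $\onePRS$ security in the CHRS model holds against all polynomial-sample, computationally unbounded adversaries, which is exactly why the added $\mathsf{unitaryPSPACE}$ oracle is harmless. The extra care you take about fixing the oracle is fine but not something the paper spells out at this point.
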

\begin{proof}
    \cite{chen2024power,ananth2024cryptography} prove that in the common Haar reference state model, $\onePRS$ exists, and consequently $\EFI$ pairs exist as well, relative to all polynomial-sample adversaries (with unbounded computation otherwise).  Since the common Haar reference state model is a CRQS model, \Cref{thm:one_way_state_generators_do_not_exist} implies that one-way state generators do not exist.
\end{proof}

We also note that since our attack against $\OWSG$s works in \emph{all} common reference quantum state models, a weaker separation, between $\EFI$ pairs and one-way state generators, can be similarly attained with a deterministic oracle if one instead takes the quantum auxiliary input model from~\cite{morimae2024unconditionally,qian2024unconditionally}.%
\footnote{The quantum auxiliary input model, as defined in~\cite{morimae2024unconditionally,qian2024unconditionally} is a special type of CRQS model (by our definition), in which the common reference quantum state for each value of the security parameter (or input size) is a predetermined \emph{fixed} state.}

\subsection{One-Way Puzzles in the Common Haar Random State Model}

In this section, we provide a construction of inefficiently verifiable one-way puzzles in the common Haar random state model.  We also note that our adversary (from \Cref{thm:one_way_state_generators_do_not_exist}) breaks all verification-efficient one-way puzzles in the common Haar random state model.  
\begin{corollary}\label{cor:ev-owpuzz-non-exist}
    Let $\mathcal{V}$ be a common reference quantum state oracle for any family of reference states.  Relative to $(\mathcal{V}, \mathsf{UnitaryALL})$, sample-efficient one-way puzzles do not exist.
\end{corollary}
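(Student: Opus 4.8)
The plan is to run essentially the same threshold-search attack as in \Cref{thm:one_way_state_generators_do_not_exist}, with two modifications forced by the one-way puzzle setting: the ``unknown state'' fed to threshold search is now simply a tensor power of reference states (there is no puzzle analogue of the $\OWSG$ state that the adversary receives), and the $\mathsf{unitaryPSPACE}$-complete oracle must be upgraded to $\mathsf{UnitaryALL}$ because a sample-efficient verifier need not be time-efficient. So first I would fix a sample-efficient one-way puzzle $(\mathsf{Samp}, \mathsf{Ver})$ in the CRQS model, let $s = s(\lambda) = \poly(\lambda)$ bound the number of CRQS queries made by $\mathsf{Ver}$, and observe that since a CRQS query to $V_m$ merely appends a fresh copy of $\ket{\phi_m}$, every CRQS query of $\mathsf{Ver}(k,x)$ can be answered by consuming one register of $\reg{B} \deq \ket{\phi_1}^{\ot s} \ot \cdots \ot \ket{\phi_s}^{\ot s}$. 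Thus $\mathsf{Ver}(k,x)$ becomes a fixed (possibly enormous) unitary $U_{k,x}$ on $\reg{B}$ followed by a one-qubit computational-basis measurement; for each puzzle string $x$ and key $k \in \bits^\lambda$ I would define $M_{k,x} \deq U_{k,x}^\dagger (\proj{1} \ot \id) U_{k,x}$ and the amplified measurement $\Pi_{k,x} \deq M_{k,x}^{\ot 10\lambda}$ on $\reg{B}^{\ot 10\lambda}$.

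The two supporting claims are then transcriptions of \Cref{claim:owsg_completeness} and \Cref{claim:owsg_soundness}. For completeness: if $(k,x) \leftarrow \mathsf{Samp}(1^\lambda)$ then by correctness $M_{k,x}$ accepts honestly prepared copies of $\reg{B}$ with probability $1 - \negl(\lambda)$, so by Bernoulli's inequality $\Pi_{k,x}$ accepts $\reg{B}^{\ot 10\lambda}$ with probability $1-\negl(\lambda) \geq 3/4$; hence, with probability $1-\negl(\lambda)$ over $\mathsf{Samp}$, the instance $(\{\Pi_{k,x}\}_{k \in \bits^\lambda}, \reg{B}^{\ot 10\lambda})$ meets the threshold-search promise, witnessed by the true key. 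For soundness: exactly as in \Cref{claim:owsg_soundness}, if $\Pi_{k,x}$ accepts with probability $\geq 1/3$ then $M_{k,x}$ — i.e.\ $\mathsf{Ver}(k,x)$ on honestly prepared reference states — accepts with probability $\geq 1 - \tfrac{1}{5\lambda}$.

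Next I would describe the adversary $A$: on input $x$, it queries $\mathcal{V}$ polynomially many times to prepare $N = 10\lambda \cdot O(\log^2 2^\lambda) = \poly(\lambda)$ independent copies of $\reg{B}$, then submits $x$ together with these copies to the $\mathsf{UnitaryALL}$ oracle, which runs the coherent, intermediate-measurement-free version of the threshold-search algorithm of \Cref{thm:random_threshold_search} (via \Cref{rem:threshold_search_pspace}) over the $m = 2^\lambda$ measurements $\{\Pi_{k,x}\}_k$, using $O(\log^2 m)$ copies of $\reg{B}^{\ot 10\lambda}$, and returns the key $k'$ it outputs; $A$ measures and outputs $k'$. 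Bundling this entire (possibly exponential-time) procedure into a single unitary synthesis problem is legitimate precisely because $\mathsf{UnitaryALL}$ contains every family of unitaries — this is the one place the proof genuinely needs more than the $\mathsf{unitaryPSPACE}$-complete oracle of \Cref{thm:one_way_state_generators_do_not_exist}, since $U_{k,x}$ is the unitary of a merely sample-bounded verifier and need not be implementable in polynomial space. By the completeness claim the promise holds with probability $1-\negl(\lambda)$, so threshold search returns some $k'$ with $\Tr[\Pi_{k',x}\,\reg{B}^{\ot 10\lambda}] \geq 1/3$ with constant probability, whence by the soundness claim $\mathsf{Ver}(k',x)$ accepts with probability $\geq 1 - \tfrac{1}{5\lambda}$; so $A$ breaks the one-way puzzle with constant probability, a contradiction.

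I expect the only delicate point — as in the $\OWSG$ case — to be bookkeeping rather than combinatorics: namely verifying that (i) all of $\mathsf{Ver}$'s CRQS queries can indeed be served from the input copies so that its action collapses to a fixed unitary, and (ii) the whole time-unbounded threshold-search attack is a single $\mathsf{UnitaryALL}$ query, so that $A$ itself stays polynomial-time and makes only polynomially many CRQS queries. Everything else is an immediate adaptation of \Cref{thm:one_way_state_generators_do_not_exist}.
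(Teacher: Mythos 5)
Your proposal is correct and follows essentially the same route as the paper: copy the classical puzzle, serve the verifier's CRQS queries from freshly sampled reference-state copies so that $\mathsf{Ver}(k,s)$ becomes a fixed two-outcome measurement, amplify it $10\lambda$-fold, run threshold search over the $2^\lambda$ keys, and package the whole (possibly time-unbounded) verification-plus-search procedure as a single $\mathsf{UnitaryALL}$ query. The completeness and soundness claims and the need for $\mathsf{UnitaryALL}$ rather than $\mathsf{unitaryPSPACE}$ match the paper's argument exactly.
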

\begin{proof}
    The adversary for one-way puzzles is similar to the adversary from  \Cref{thm:one_way_state_generators_do_not_exist}, except that instead of requiring $10\lambda$ copies of $\ket{\psi_{k}}$, it simply copies $s$ into $10\lambda$ registers and runs threshold search on $\{\Pi_{k}\}$.  The same proof shows that the adversary will retrieve a key that is accepted by the verifier with non-negligible probability.  
    
    Formally, to get an adversary that calls an oracle, we can define the ``inefficient one-way puzzle verification'' problem, where the instance is a description of the one-way puzzle, the input is a classical pair of strings $\ket{k, s}$, and copies of the common reference quantum state, and the output is the result of the verification.  Since verification exists, this is a problem in $\mathsf{unitaryALL}$.  We further note that our adversary uses the same amount of space as the verifier for the one-way puzzle does, but since verification for a one-way puzzle is not required to be space efficient, our verifier might not be. \emph{If} the verifier happens to be polynomial space, this adversary will also be polynomial space.
\end{proof}

As noted in the discussion, sample-efficient one-way puzzles were already ruled out by the LOCC indistinguishably results of \cite{ananth2024cryptography}, but we believe our proof is simpler and thus might be of independent interest to the reader.  

On the contrary, inefficiently-verifiable $\owpuzz$ exist in the CHRS model, as formally states in the following theorem.

\begin{theorem}\label{thm:iv-owpuzz}
    Let $\mathcal{V}_{\Haar}$ be the common reference quantum state oracle for Haar random states. Relative to $(\mathcal{V}_{Haar}, \mathsf{UnitaryALL})$, inefficiently verifiable one-way puzzles exist.
\end{theorem}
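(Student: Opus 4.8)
\emph{Proof plan.} We will realize the one-way puzzle sketched in the technical overview and then argue correctness and (query-bounded, statistical) security. For security parameter $n$, the sampler $\mathsf{Samp}^{\{\ket{\psi_\ell}\}_\ell}(1^n)$ draws $k\uniform\bits^{n}$ and, for each $\ell\in\{n,\dots,2n-1\}$, requests $T=\poly(n)$ copies of $\ket{\psi_\ell}$, applies the Pauli $Z_1^{k_{\ell-n}}$ ($Z$ on the first qubit) to each, takes classical shadows of the result, and records the classical outcome $s_\ell$; it outputs $(k,s)$ with $s=(s_n,\dots,s_{2n-1})$ (so $\mathsf{Samp}$ is efficient). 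The (time-unbounded) verifier $\Ver(k,s)$, receiving $r(n)$ copies of each $\ket{\psi_\ell}$ for an arbitrarily large function $r$, performs full-state tomography to obtain classical $\hat\psi_\ell$ with $\td(\hat\psi_\ell,\proj{\psi_\ell})\le 1/100$ for all $\ell$ simultaneously (possible except with probability $\negl(n)$ once $r$ is large enough), then uses $s_\ell$ to classical-shadows-estimate $v_\ell\approx\Tr\big[(Z_1^{k_{\ell-n}}\hat\psi_\ell Z_1^{k_{\ell-n}})\,\rho_\ell\big]$, where $\rho_\ell$ is the pure state whose copies the sampler measured, accepting iff $v_\ell>1/2$ for more than $3n/4$ of the indices. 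For correctness: on an honest transcript $\rho_\ell=Z_1^{k_{\ell-n}}\proj{\psi_\ell}Z_1^{k_{\ell-n}}$, so the estimated quantity equals $\bra{\psi_\ell}\hat\psi_\ell\ket{\psi_\ell}\ge 1-1/100$; since the target observable lies within trace distance $1/100$ of a rank-one projector its shadow-estimation variance is $O(1)$, so a median-of-means estimator with $T=\poly(n)$ copies makes $|v_\ell-\Tr[\cdot]|\le 1/100$ for all $\ell$ except with probability $\negl(n)$, whence $v_\ell\ge 0.98>1/2$ for every $\ell$ and $\Ver$ accepts.

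For security, fix a polynomial-query adversary $A$ (its unbounded computation / access to $\mathsf{UnitaryALL}$ is immaterial; only its query budget to $\mathcal{V}_{\Haar}$ will matter) and let $k'=A^{\{\ket{\psi_\ell}\}_\ell}(s)$. First we reduce acceptance to a purely combinatorial event. Since $Z_1^{b}$ on $\ell$ qubits is traceless, \Cref{cor:haar_random_trace_0} applied to $\pm Z_1$ gives, except with probability $\negl(n)$ over the reference states, $|\bra{\psi_\ell}Z_1\ket{\psi_\ell}|\le 2^{-\sqrt{\ell/2}}$ for all $\ell\in\{n,\dots,2n-1\}$. Conditioning on this, on $\Ver$'s tomography succeeding, and on all $2n$ shadow estimates $\widehat f_{s_\ell}(Z_1^{b}\hat\psi_\ell Z_1^{b})$ (over $\ell$ and $b\in\bits$) being accurate to $1/100$ — which fails with probability only $\negl(n)$ once $T$ is a sufficiently large polynomial, since there are just $2n$ such observables and each is independent of the shadow randomness used to estimate it — a short calculation using $\rho_\ell=Z_1^{k_{\ell-n}}\proj{\psi_\ell}Z_1^{k_{\ell-n}}$ and $Z_1^{k'_{\ell-n}}Z_1^{k_{\ell-n}}=Z_1^{k'_{\ell-n}\oplus k_{\ell-n}}$ shows $v_\ell\ge 0.98$ when $k'_{\ell-n}=k_{\ell-n}$ and $v_\ell\le 2^{-2\sqrt{\ell/2}}+1/50<1/2$ otherwise. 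Hence, on the conditioned event, $\Ver$ accepts precisely when $k'$ agrees with $k$ on more than $3n/4$ of its $n$ coordinates, and it suffices to bound $\Pr[\,k'\text{ agrees with }k\text{ on }>3n/4\text{ coordinates}\,]$.

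For this, the plan is a hybrid argument that replaces, one index $\ell$ at a time, the reference state used by $\mathsf{Samp}$ to produce $s_\ell$ (before the $Z_1^{k_{\ell-n}}$ twist) by a fresh independent Haar-random state, leaving $A$'s copies of the true $\ket{\psi_\ell}$ untouched. The key observation is that the pipeline ``$\mathsf{Samp}$ measures its copies (after twisting by $k$-bits) to produce $s$ while holding $k$; $A$ receives $s$ and outputs $k'$; one checks whether $k'$ agrees with $k$ on $>3n/4$ coordinates'' is a one-way LOCC protocol between the sampler's $T$ copies and the adversary's $\poly(n)$ copies of each $\ket{\psi_\ell}$; the LOCC Haar indistinguishability lemma of \cite{ananth2024cryptography} (dimension $2^\ell\ge 2^n$, copy count $\poly(n)$) then implies that swapping a single index's reference state for an independent Haar state changes this protocol's success probability by only $\negl(n)$, so the $n$ swaps together cost $n\cdot\negl(n)=\negl(n)$. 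In the fully-swapped experiment, every $s_\ell$ is the classical shadow of a Haar state independent of $k_{\ell-n}$ and of $A$'s copies, so $s$ — and hence $k'$ — is independent of the uniform $k$, and Hoeffding's inequality gives $\Pr[k'\text{ agrees with }k\text{ on }>3n/4\text{ of }n\text{ coordinates}]\le e^{-n/8}$. Combining, $\Ver$ accepts a malicious $k'$ with probability $\negl(n)$. I expect this hybrid step to be the crux: one must carefully cast the per-index adversarial view as a one-way LOCC experiment — absorbing the fixed Pauli twist into the fixed shadow measurement, and noting that taking classical shadows is a product of local measurements — so that the LOCC-norm bound of \cite{ananth2024cryptography} genuinely applies and composes across the $n$ indices and the polynomial query budget; the rest (Haar concentration over $\pm Z_1$, tomography error, and the union bound over the $2n$ shadow observables) is routine bookkeeping but must be tracked so that all error terms remain negligible.
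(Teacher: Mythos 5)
Your construction is the paper's construction (Pauli-$Z$ twist of the CHRS states keyed by $k$, classical shadows as the puzzle, tomography-based unbounded verification), and your security proof rests on the same key lemma, the LOCC Haar indistinguishability result of \cite{ananth2024cryptography} (\Cref{lem:locc-harr-indi}), so the two arguments are essentially the same in substance. The one genuine difference is in how that lemma is invoked. The paper builds the LOCC distinguisher as ``Alice runs $\mathsf{Samp}$, Bob runs $\mathcal{A}$, Alice runs $\Ver$,'' so the sample-inefficient verifier sits inside the LOCC protocol; it separately proves (\Cref{lem:no_samples_cant_verify}) that a zero-query adversary fails, using that $Z_1^{k_\ell}\ket{\psi_\ell}$ is Haar-distributed, and then transfers this to the real adversary via a single application of the multi-index LOCC lemma. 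You instead first condition on Haar concentration of $Z_1$ (\Cref{cor:haar_random_trace_0}), tomography success, and simultaneous accuracy of the $2n$ shadow estimates to reduce acceptance to the purely classical event that $k'$ agrees with $k$ on more than $3n/4$ coordinates, and only then apply LOCC indistinguishability followed by Hoeffding. This reorganization buys something real: the LOCC distinguisher you construct uses only polynomially many copies per index on each side, whereas in the paper's protocol Alice's verification step consumes exponentially many copies, which sits uneasily with the $O(\sum_i t^2/2^{n_i})$ bound of \Cref{lem:locc-harr-indi}; your version also makes explicit the Haar-concentration step that the paper leaves implicit in its claim that a wrong key is rejected with probability $9/10$. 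Two small notes: your per-index hybrid is unnecessary since \Cref{lem:locc-harr-indi} is already stated for all $s$ indices jointly (and the single-index-swap variant you want is not literally what the lemma states, so applying it once globally is both simpler and more faithful to the statement); and your final Hoeffding bound is actually needed for correctness of the conclusion, since the paper's quoted $(3/5)^n$ bounds the probability that all $n$ tests pass rather than the required $3n/4$.
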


We now present our construction of~\Cref{thm:iv-owpuzz}.  The construction of $\owpuzz$ relies on the classical shadow tomography~\cite{HKP20}, so here we first describe how to sample classical shadows of a state and the theoretical guarantee of the classical shadows algorithm. Assume we have $N$ copies of a state $\rho$ and we want to estimate the observables $O_1, \dots, O_M$ with the copies of $\rho$. One can do this by just doing independent random Clifford measurements over all the copies of $\rho$. Namely, we can sample random Clifford unitaries $C_1, \dots, C_N$, and then measure $C_i\rho C_i^\dagger$ and record their measurement results $b_1, \dots, b_N$. We call the collection of $(b_i, C_i)$ the classical shadows of the state $\rho$, denoted as $\mathsf{ShadowGen}(\rho, N)$, which can be sampled without prior knowledge of the observables. The following lemma show that the classical shadows can be used to estimate the probabilities $\tr[O_i \rho]$.
\begin{lemma}[Classical shadow tomography, adapted from~\cite{HKP20}]\label{lem:classical-shadow-tomography}
    For any $n$-qubit observables $O_1, \dots, O_M$, and accuracy parameter $\eps, \delta \in [0,1]$. Let $N \geq \frac{204}{\eps^2}\log(2M/\delta)\max_{1\leq i \leq M}\tr {O_i^2}$. Then for any $n$-qubit state $\rho$, let $\sf{ShadowGen}(\rho, N)$ be the classical shadow tomography of $N$ copies of $\rho$. Then there is a time-unbounded classical algorithm that can give an estimation $\hat{o}_i$ on all the observables $\tr{O_i \rho}$ given $\sf{ShadowGen}(\rho, N)$ such that
    \begin{equation*}
        |\hat{o}_i - \tr{O_i \rho}| \leq \eps \quad \forall i \in [1,M]\,,
    \end{equation*}
    with probability at least $1-\delta$.
\end{lemma}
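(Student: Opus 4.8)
The plan is to follow the classical shadow analysis of~\cite{HKP20}, specialized to the random global Clifford ensemble, and then lift the single-observable guarantee to all $M$ observables simultaneously by a median-of-means estimator together with a union bound. First I would set up the snapshot reconstruction. To produce $\mathsf{ShadowGen}(\rho,N)$ one samples i.i.d.\ Cliffords $C_1,\dots,C_N$, measures $C_j\rho C_j^\dagger$ in the computational basis to obtain outcomes $b_1,\dots,b_N$, and records the pairs $(b_j,C_j)$. Since we only need a \emph{time-unbounded} classical post-processor, it may compute arbitrary functions of the recorded data; in particular it forms the classical estimators $\hat\rho_j \deq \mathcal{M}^{-1}\!\big(C_j^\dagger \proj{b_j} C_j\big)$, where $\mathcal{M}$ is the measurement channel of the global Clifford ensemble. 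Because the Clifford group is a unitary $2$-design one has $\mathcal{M}(X) = \frac{1}{2^n+1}\big(X + \Tr[X]\,\id\big)$, hence $\mathcal{M}$ is invertible with $\mathcal{M}^{-1}(X) = (2^n+1)X - \Tr[X]\,\id$, and a short computation gives $\E[\hat\rho_j] = \rho$ and $\Tr[\hat\rho_j] = 1$. Consequently, for each observable $O_i$, the scalar $\hat o_i^{(j)} \deq \Tr[O_i\hat\rho_j]$ is an unbiased estimator of $\Tr[O_i\rho]$, and since the snapshots are drawn with no knowledge of the $O_i$, the same $N$ snapshots can be reused for all $i$.

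The one genuinely technical step is the variance bound on a single snapshot. Using that the multi-qubit Clifford group is a unitary $3$-design, a second-moment (Weingarten-type) computation yields the shadow-norm estimate $\E\big[(\hat o_i^{(j)})^2\big] \le 3\,\Tr[(O_i^{(0)})^2] \le 3\,\Tr[O_i^2]$, where $O_i^{(0)} = O_i - \tfrac{\Tr[O_i]}{2^n}\,\id$ is the traceless part of $O_i$; adding a multiple of $\id$ back to $O_i$ only shifts $\hat o_i^{(j)}$ by a deterministic constant (as $\Tr[\hat\rho_j]=1$) and so does not change its variance. This is exactly the global-Clifford instance of the shadow-norm bound of~\cite{HKP20}, which I would cite (or re-derive from the explicit third moment of the Clifford group). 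Thus each $\hat o_i^{(j)}$ is unbiased with variance at most $3\max_i\Tr[O_i^2]$.

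Finally I would assemble the estimator via median-of-means. Partition the $N$ snapshots into $K$ batches of size $N_0 = N/K$, let $\bar o_i^{(t)}$ be the empirical mean of the $\hat o_i^{(j)}$ over batch $t$, and let $\hat o_i$ be the median of $\bar o_i^{(1)},\dots,\bar o_i^{(K)}$. For $N_0 = \Theta\!\big(\eps^{-2}\max_i\Tr[O_i^2]\big)$, Chebyshev's inequality makes each $\bar o_i^{(t)}$ fall within $\eps$ of $\Tr[O_i\rho]$ with probability at least $3/4$; a Chernoff bound over $K = \Theta(\log(2M/\delta))$ independent such events then forces the median $\hat o_i$ within $\eps$ except with probability at most $\delta/M$. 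A union bound over $i\in[M]$ yields the simultaneous guarantee, and tracking the constants as in~\cite{HKP20} (a factor $2$ for the number of medians, $34$ for the mean bound, and $3$ for converting the shadow norm to $\Tr[O_i^2]$, so $2\cdot 34\cdot 3 = 204$) gives $N = K\,N_0 \le \frac{204}{\eps^2}\log(2M/\delta)\max_i\Tr[O_i^2]$. The main obstacle is the $3$-design variance computation; once that bound is in hand, the median-of-means/union-bound wrapper is entirely routine.
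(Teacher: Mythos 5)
Your proposal is correct and is exactly the argument the paper relies on: the paper does not prove this lemma itself but imports it from~\cite{HKP20}, and your sketch faithfully reconstructs that proof (global-Clifford measurement channel inversion, the $3$-design shadow-norm variance bound $\le 3\Tr[O_i^2]$, and median-of-means with a union bound, including the constant $204 = 2\cdot 34\cdot 3$). Nothing further is needed.
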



\noindent Now we can describe the construction of one-way puzzles in the CHRS model:
\begin{itemize}
    \item The sampler $\sf{Samp}$ takes $\ncopy$ copies of each $l$-qubit CHRS state for $n \leq l \leq 2n$. For each $l \in [n,2n]$, the sampler chooses a random bit $k_l \in \{0,1\}$ and then generates the classical shadow of $(Z^{k_l}_1 \otimes \id )\ket{\psi_l}$ using the $\ncopy$ copies of the state. The sampler will output $(k_n \dots k_{2n})$ as the key and $\mathsf{ShadowGen}((Z^{k_l}_1 \otimes \id)\ket{\psi_l}\!\!\bra{\psi_l}(Z_1^{k_l}\otimes \id))$ as the puzzle.
    \item The verifier $\sf{Ver}$ runs the algorithm from~\Cref{lem:classical-shadow-tomography} for estimating observables for all $l \in [n, 2n]$, for the observable $(Z^{k_l}_1 \otimes \id)\ket{\psi_l}\!\!\bra{\psi_l}(Z^{k_l}_1 \otimes \id)$ and state $\rho = (Z^{k_l}_1 \otimes \id)\ket{\psi_l}\!\!\bra{\psi_l}(Z^{k_l}_1 \otimes \id)$. The verifier accepts if for at least $3n/4$ different $l$, the estimation of the corresponding observable is greater than 1/2. The estimation of $(Z^{k_l}_1 \otimes \id)\ket{\psi_l}\!\!\bra{\psi_l}(Z^{k_l}_1 \otimes \id)$ needs the classical description of $\ket{\psi_l}$, which can be done if the verifier has access to exponentially many copies of $\ket{\psi_l}$.
\end{itemize}

\begin{theorem}[One-way puzzle correctness]
    On input $(k, s) \gets \mathsf{Samp}^{\{\ket{\psi_l}\}}_{l}$, $\mathsf{Ver}$ accepts with probability $1 - \negl(n)$.
\end{theorem}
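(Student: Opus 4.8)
The plan is to show that, on an honestly generated puzzle, each of the $n+1$ observables inspected by $\Ver$ has true value exactly $1$, that the classical‑shadow estimate of each exceeds $1/2$ except with a small absolute‑constant probability, and that these estimation events are mutually independent across the block index $l \in \{n, n+1, \dots, 2n\}$. A Chernoff/Hoeffding bound over these $n+1$ independent blocks then forces the number of failed blocks below $n/4$ except with probability $\negl(n)$, so at least $3n/4$ blocks succeed and $\Ver$ accepts.

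First I would fix a block $l$ and write $\ket{\varphi_l} \deq (Z_1^{k_l}\otimes\id)\ket{\psi_l}$, so that the honest puzzle carries $\mathsf{ShadowGen}(\proj{\varphi_l},\ncopy)$, the observable $\Ver$ wants to estimate on block $l$ is $O_l \deq (Z_1^{k_l}\otimes\id)\proj{\psi_l}(Z_1^{k_l}\otimes\id) = \proj{\varphi_l}$, and the exact target value is $\tr{O_l\,\proj{\varphi_l}} = \tr{(\proj{\varphi_l})^2} = 1$ since $\ket{\varphi_l}$ is a unit vector. Because the sample‑inefficient verifier may be handed $r = r(n)$ copies of each reference state for an arbitrarily fast‑growing $r$, $\Ver$ can run pure‑state tomography and, except with probability $\negl(n)$, produce a classical description of a state within infidelity $\negl(n)$ of $\ket{\psi_l}$, hence a classical description of an observable $\wt{O}_l$ with $\opnorm{\wt{O}_l - O_l} \le \negl(n)$ and $\tr{\wt{O}_l^2} = 1$. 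Now I would apply \Cref{lem:classical-shadow-tomography} with $M = 1$, $\max_i \tr{O_i^2} = 1$, accuracy $\eps = 2/5$, and failure probability $\delta$: the hypothesis $\ncopy \ge \tfrac{204}{\eps^2}\log(2/\delta)$ is satisfied by $\ncopy = 10000$ for some absolute constant $\delta \le 1/100$. Thus, except with probability $\delta$, the estimate $\widehat{o}_l$ returned on block $l$ satisfies $\widehat{o}_l \ge \tr{\wt{O}_l\,\proj{\varphi_l}} - \eps \ge 1 - \eps - \negl(n) > 1/2$ for all sufficiently large $n$.

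Next I would use independence across blocks. The CHRS oracle supplies mutually independent Haar states $\{\ket{\psi_l}\}_l$; $\mathsf{Samp}$ chooses the bits $k_l$ and the Clifford unitaries used inside $\mathsf{ShadowGen}$ independently for each $l$; and $\Ver$ runs its tomography on a separate batch of copies for each $l$. Hence the ``good'' events $G_l \deq \{\widehat{o}_l > 1/2\}$ are mutually independent with $\Pr[G_l] \ge 1 - \delta$. Writing $X = \sum_{l=n}^{2n}\1[G_l]$, we have $\E[X] \ge (n+1)(1-\delta) \ge (99/100)(n+1)$, while $\Ver$ rejects only when $X < 3n/4$. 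Since $(99/100)(n+1) - 3n/4 = \Omega(n)$, Hoeffding's inequality gives $\Pr[X < 3n/4] \le \exp(-\Omega(n)) = \negl(n)$, which yields the claimed acceptance probability $1 - \negl(n)$.

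The only genuinely quantitative point, and the one place I expect to have to be careful, is checking that the \emph{constant} $\ncopy = 10000$ is large enough that \Cref{lem:classical-shadow-tomography} delivers a per‑block failure probability $\delta$ that is a small enough constant (it suffices that $1-\delta > 3/4$ with a constant margin, so the Hoeffding gap is $\Omega(n)$): with $\eps = 2/5$ and $\tr{O_l^2} = 1$ the requirement is $\ncopy \ge 1275\log(2/\delta)$, which already holds at, say, $\delta = 10^{-3}$. One should also confirm the tomography step, namely that learning $\ket{\psi_l}$ from finitely (but super‑polynomially) many copies shifts each target value by only $\negl(n)$, harmlessly absorbed into the slack between $\eps = 2/5$ and the $1/2$ threshold. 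Everything else is the routine Hoeffding estimate over the $n+1$ independent blocks.
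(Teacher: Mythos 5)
Your proposal is correct and follows essentially the same route as the paper's proof: apply the classical-shadow guarantee (\Cref{lem:classical-shadow-tomography}) per block to get a constant per-block failure probability with the target value $1$ well above the $1/2$ threshold, then a Chernoff/Hoeffding bound over the $n+1$ independent blocks to conclude that fewer than $n/4$ blocks fail except with negligible probability. Your added care about the verifier's tomography of $\ket{\psi_l}$ introducing only negligible slack is a reasonable refinement of a point the paper elides, but it does not change the argument.
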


\begin{proof}
According to~\Cref{lem:classical-shadow-tomography} with $\eps = 1/3$ and $\delta = 1/10$, the probability that the estimate of $Z_1^{k_l} \otimes \id$ deviates by at least $1/2$ is less than $1/10$ when the number of copies of the classical shadow is $N \geq 204 \cdot 9 \cdot \log (2 \cdot 100)$, for which $\ncopy$ copy is enough. Since $(Z^{k_l}_1 \otimes \id)\ket{\psi_l}$ measures 1 on $(Z^{k_l}_1 \otimes \id)\ket{\psi_l}\!\!\bra{\psi_l}(Z^{k_l}_1 \otimes \id)$, the probability that the estimation is lower than 1/2 is at most 1/10. Thus by the Chernoff bound over all $n$ keys, the probability that a puzzle is not accepted by the verifier (i.e. more than $1/4$ of the puzzles have an estimate lower than $1/2$) is exponentially small, which shows $\owpuzz$ correctness.
\end{proof}

At a high level, the security proof will begin by first showing that an adversary $\mathcal{A}$ that receives no copies of the common Haar random state succeeds with probability negligible in $n$.  Then, using the LOCC Haar indistinguishability lemma from \cite{ananth2024cryptography}, we will show that an adversary that receives $\poly(n)$ copies of the Haar random state can not distinguish between the case when they have the same common Haar random state as the sampler, and when they have independently sampled Haar random states.  As the second case can be simulated without sampling any copies of the Haar random state by an exponential time adversary that samples Haar random states and runs $\mathcal{A}$ on them, the probability that the adversary passes $\mathsf{Ver}$ must be negligible in $n$.  Formally, we have the following.

\begin{lemma}
\label{lem:no_samples_cant_verify}
    Let $\mathcal{A}$ be an adversary that gets no samples of the common Haar random state $\{\ket{\psi_l}\}_{l}$, then $\mathcal{A}(s)$, $s \gets \mathsf{Samp}^{\{\ket{\psi_l}\}_{l}}$, passes verification with probability that is negligible in $n$.
\end{lemma}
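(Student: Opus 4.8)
The plan is to combine two facts. First, the puzzle $s$ is statistically independent of the key $k$ that $\mathsf{Samp}$ chooses, so $k' \deq \mathcal{A}(s)$ is a random variable independent of $k$. Second, for any $k'$ independent of $k$, the verifier's $l$-th estimate exceeds $1/2$ essentially only when $k'_l = k_l$, so the number of ``good'' coordinates is governed by a sum of roughly fair coins and falls well short of the $3n/4$ acceptance threshold except with exponentially small probability.

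For the first fact, I would exploit the left-invariance of the Haar measure: since $\ket{\psi_l}$ is Haar-random, $(Z_1^{k_l}\otimes\id)\ket{\psi_l}$ is again Haar-random with a law that does not depend on $k_l$, hence is independent of $k_l$. Concretely, I would couple the experiment to one in which we first sample i.i.d.\ Haar-random states $\ket{\chi_l}$ (independent of the uniform bits $k_l$) and \emph{define} the reference state as $\ket{\psi_l} \deq (Z_1^{k_l}\otimes\id)\ket{\chi_l}$; this has exactly the right joint distribution, and under it $\mathsf{Samp}$ feeds precisely $\ket{\chi_l}$ into $\mathsf{ShadowGen}$. Thus $s = (s_l)_l$ is a randomized function of $(\ket{\chi_l})_l$ alone, and so is $k' = \mathcal{A}(s)$; in particular $(k'_l)_l$ is independent of $(k_l)_l$.

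Next, fix a coordinate $l$ and let $d_l \deq k'_l \oplus k_l$. Because $\ket{\psi_l} = (Z_1^{k_l}\otimes\id)\ket{\chi_l}$, the observable $\mathsf{Ver}$ forms from bit $k'_l$ is $(Z_1^{d_l}\otimes\id)\proj{\chi_l}(Z_1^{d_l}\otimes\id)$, and since $s_l$ was generated from $\proj{\chi_l}$, the quantity being estimated is $|\bra{\chi_l}(Z_1^{d_l}\otimes\id)\ket{\chi_l}|^2$: this equals $1$ when $d_l = 0$ and equals $\tau_l \deq |\bra{\chi_l}(Z_1\otimes\id)\ket{\chi_l}|^2$ when $d_l = 1$. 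Let $\hat o_l^{(0)}, \hat o_l^{(1)}$ be the classical-shadow estimates of these two fixed observables from $s_l$ (functions of $\ket{\chi_l}$ and the coins of the $l$-th $\mathsf{ShadowGen}$ only), and put $b_l \deq \1[\hat o_l^{(1)} > 1/2]$. Since $Z_1\otimes\id$ has trace $0$, \Cref{cor:haar_random_trace_0} gives $\tau_l < 1/6$ except with probability $\exp(-\Omega(2^{l/2}))$; and \Cref{lem:classical-shadow-tomography} with $\eps=1/3$, $\delta=1/10$ (for which $\ncopy$ copies suffice, the relevant observable being rank one so $\tr{O^2}=1$) bounds $|\hat o_l^{(1)}-\tau_l| \le 1/3$ except with probability $1/10$. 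Hence $\Pr[b_l=1] \le 1/10 + \exp(-\Omega(2^{n/2}))$; as the $b_l$ depend on disjoint randomness they are independent, so $\sum_l b_l \le n/4$ except with probability $e^{-\Omega(n)}$ by a Chernoff bound.

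To finish, I would condition on everything except the uniform bits $(k_l)_l$; this fixes $k'$, all the $\hat o_l^{(0)}, \hat o_l^{(1)}$, and the event $\sum_l b_l \le n/4$, all of which are independent of $(k_l)_l$. On that event, $\mathsf{Ver}$'s $l$-th acceptance indicator equals $\1[\hat o_l^{(0)}>1/2] \le 1$ when $k_l = k'_l$ and $b_l$ when $k_l \neq k'_l$, so, each $k_l$ being a fresh fair coin, the acceptance count $C$ is a sum of $n+1$ independent $\{0,1\}$ variables with $\E[C] \le \tfrac12\sum_l(1+b_l) \le \tfrac{n+1}{2} + \tfrac{n}{8} \le \tfrac{3n}{4} - \Omega(n)$; Hoeffding then gives $\Pr[C \ge 3n/4] \le e^{-\Omega(n)}$, and combining with the tail bound on $\sum_l b_l$ yields $\Pr[\mathsf{Ver}\text{ accepts}] = \Pr[C \ge 3n/4] \le \negl(n)$. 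The step I expect to be the main obstacle is the adaptivity in this last argument: $k'$ — and hence the observable $\mathsf{Ver}$ actually runs shadow estimation against — depends on $s$, whereas \Cref{lem:classical-shadow-tomography} controls fixed observables. The fix, and the reason the $\hat o_l^{(j)}$ decomposition above is worth setting up, is that for each $l$ there are only two possible observables, and it suffices to control the ``wrong-bit'' estimate $b_l$, which is manifestly a function of $\ket{\chi_l}$ and the $l$-th shadow coins and can therefore be bounded \emph{before} $k'$ is revealed.
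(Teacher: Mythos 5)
Your proof is correct and follows the same overall strategy as the paper's: use left-invariance of the Haar measure to decouple the puzzle $s$ (and hence $k' = \mathcal{A}(s)$) from the key $k$, then argue per-coordinate that the adversary passes only by guessing $k_l$ (probability $1/2$) or by a shadow-estimation failure, and conclude by concentration over the $n+1$ independent coordinates. Your write-up is, however, noticeably more careful than the paper's on three points that the paper glosses over: (i) you explicitly invoke \Cref{cor:haar_random_trace_0} to argue that the wrong-key expectation $\tau_l = |\bra{\chi_l}(Z_1\otimes\id)\ket{\chi_l}|^2$ is tiny, which is needed before \Cref{lem:classical-shadow-tomography} can place the estimate below $1/2$ (the paper cites only the shadow lemma with $\delta = 1/10$); (ii) you handle the adaptivity of the verifier's observable (which depends on $k'$, which depends on $s$) via the two-observable decomposition $\hat o_l^{(0)}, \hat o_l^{(1)}$, a gap the paper does not acknowledge; and (iii) you correctly account for the $3n/4$ acceptance threshold with a Hoeffding bound, whereas the paper's stated bound $(3/5)^n$ is really the probability that \emph{all} coordinates pass and should be replaced by exactly the Chernoff-style argument you give. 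None of these changes the route — they are repairs to the same argument — but your version is the one that actually closes.
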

\begin{proof}
    In order to prove the theorem, we note that by the definition of the Haar measure as the left and right invariant measure on states, for all choices of $k_n, \ldots k_{2n}$,
    \begin{multline*}
        \mathop{\mathbb{E}}_{\{\ket{\psi_l}\}_{l}} \left[\proj{\psi_n}^{\otimes 10000} \otimes \ldots \otimes \proj{\psi_{2n}}^{\otimes 10000}\right] \\= \mathop{\mathbb{E}}_{\{\ket{\psi_l}\}_{l}} \bigg[(Z^{k_n}_1)^{\otimes 10000}\proj{\psi_n}^{\otimes 10000} (Z^{k_n}_1)^{\otimes 10000} \otimes \\ \ldots \otimes (Z^{k_{2n}}_1)^{\otimes 10000}\proj{\psi_{2n}}^{\otimes 10000}(Z^{k_{2n}}_1 )^{\otimes 10000}\bigg]\,.
    \end{multline*}
    Therefore, the two distributions are identical.
    \begin{equation*}
        \mathcal{A}(\mathsf{Samp}^{\{\ket{\psi}_{l}\}_{l}})= \mathop{\mathbb{E}}_{\{k_l\} \gets \{0, 1\}^*} \left[\mathcal{A}(\mathsf{Samp}^{\{X^{k_l}_1\ket{\psi}_{l}\}_{l}})\right]\,.
    \end{equation*}
    Then the probability that the adversary $\mathcal{A}$ passes $l$'th test of the verification is upper bounded by the probability that $\mathcal{A}$ outputs $k_{l}$, plus the probability that the verifier accepts, given that the $l$'th key output by the adversary is not $k_l$.  The first probability is at most $1/2$, and from \Cref{lem:classical-shadow-tomography} with $\delta = 1/10$, the second is at most $1/10$.  Since each $k_l$ is chosen uniformly at random and the distribution over the keys output by $\mathcal{A}$ does not depend on the choice of $k_l$, the probability that the adversary succeeds is at most $(3/5)^{n} = \negl(n)$.
\end{proof}

The final step is to apply the LOCC Haar indistinguishability result of \cite{ananth2024cryptography}, stated formally below. 
\begin{lemma}[LOCC Haar indistinguishability, adapted from~\cite{ananth2024cryptography}]\label{lem:locc-harr-indi}
    For positive integers $s, t, n_1, \dots, n_s$, define
    \begin{align*}
        \rho_{\sfA\sfB} &\coloneqq \bigotimes_{i=1}^s \E_{\ket{\psi_i} \gets \Haar(2^{n_i})} \left[ (\ket{\psi_i}\!\!\bra{\psi_i}^{\otimes t})_{\sfA_i} \otimes (\ket{\psi_i}\!\!\bra{\psi_i}^{\otimes t})_{\sfB_i} \right] \\
        \sigma_{\sfA\sfB} &\coloneqq \bigotimes_{i=1}^s \E_{\ket{\psi_i} \gets \Haar(2^{n_i})} \left[ (\ket{\psi_i}\!\!\bra{\psi_i}^{\otimes t})_{\sfA_i} \right] \otimes \bigotimes_{i=1}^s \E_{\ket{\phi_i} \gets \Haar(2^{n_i})} \left[ (\ket{\phi_i}\!\!\bra{\phi_i}^{\otimes t})_{\sfB_i} \right],
    \end{align*}
    where $\sfA = (\sfA_1, \dots, \sfA_s)$, $\sfB = (\sfB_1, \dots, \sfB_s)$. Then $\rho_{\sfA\sfB}$ and $\sigma_{\sfA\sfB}$ are $O(\sum_{i=1}^s t^2/2^{n_i})$-LOCC indistinguishable.
\end{lemma}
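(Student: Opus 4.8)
The plan is to reduce the statement to its single-block special case ($s=1$), which is the LOCC Haar indistinguishability bound of \cite{ananth2024cryptography}: for one $n$-qubit Haar random state and $t$ copies on each side, $\E_{\ket{\psi}}[\psi^{\otimes t}_{\sfA} \otimes \psi^{\otimes t}_{\sfB}]$ and $\E_{\ket{\psi}}[\psi^{\otimes t}]_{\sfA} \otimes \E_{\ket{\phi}}[\phi^{\otimes t}]_{\sfB}$ are $O(t^2/2^n)$-LOCC indistinguishable. I would first recall why this holds, since it is the real content: the two states are, respectively, the normalized projectors onto $\mathrm{Sym}^{2t}(\C^{2^n})$ and onto $\mathrm{Sym}^{t}(\C^{2^n})^{\otimes 2}$, hence they are \emph{far} in trace distance and the LOCC restriction must be used in an essential way. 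The mechanism is that a party measuring $t$ copies of a Haar random $\ket{\psi}$ learns essentially nothing about $\ket{\psi}$ itself --- its marginal is the maximally mixed state on the symmetric subspace --- so the post-measurement state on the other party's $t$ copies remains, up to an $O(t^2/2^n)$ perturbation, exactly what it would be had the two sides been independent; this is an exchangeability / ``collision-free'' phenomenon valid when $t^2 \ll 2^n$, and a global argument over the transcript upgrades it from one-way to fully adaptive LOCC.

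Granting the single-block bound, I would lift to general $s$ by a hybrid argument over the blocks. Define $\tau_j$ (for $0 \le j \le s$) to be the state in which blocks $1,\dots,j$ are correlated across $\sfA,\sfB$ (as in $\rho_{\sfA\sfB}$) and blocks $j{+}1,\dots,s$ are independent (as in $\sigma_{\sfA\sfB}$), so $\tau_0 = \sigma_{\sfA\sfB}$ and $\tau_s = \rho_{\sfA\sfB}$. Consecutive hybrids differ only in block $j$: writing $\rho^{(j)},\sigma^{(j)}$ for the single-block correlated / independent states of local dimension $2^{n_j}$, we have $\tau_{j-1} = \Phi_j(\sigma^{(j)})$ and $\tau_j = \Phi_j(\rho^{(j)})$, where $\Phi_j$ is the channel that appends the other $s-1$ blocks. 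Crucially $\Phi_j$ is itself LOCC: for an already-correlated block $i<j$, Alice and Bob use shared randomness to sample $\ket{\psi_i} \gets \Haar(2^{n_i})$ and each \emph{locally} prepares $t$ copies; for an independent block $i>j$, each party locally prepares $t$ copies of its own fresh Haar state. Since the LOCC-distinguishing advantage is symmetric, obeys the triangle inequality, and cannot increase when the distinguisher is pre-composed with a fixed LOCC channel, $\delta_{\mathrm{LOCC}}(\tau_{j-1},\tau_j) \le \delta_{\mathrm{LOCC}}(\sigma^{(j)},\rho^{(j)}) = O(t^2/2^{n_j})$; summing over $j$ gives $\delta_{\mathrm{LOCC}}(\rho_{\sfA\sfB},\sigma_{\sfA\sfB}) \le \sum_{j=1}^{s} O(t^2/2^{n_j}) = O\!\left(\sum_{i=1}^{s} t^2/2^{n_i}\right)$, as claimed.

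The step I expect to dominate the difficulty is the single-block bound surviving \emph{adaptive} LOCC: a naive round-by-round accounting would lose a factor equal to the number of communication rounds, so one needs the global argument --- tracking that the full bipartite state stays LOCC-close to its independent counterpart throughout the protocol --- which is exactly the part we would inherit from \cite{ananth2024cryptography} rather than re-derive. The remaining points are routine: one checks that $\Phi_j$ really is free under LOCC (shared randomness, including the instruction ``sample a Haar state'', is a free LOCC resource, and can be replaced by a sufficiently fine $\epsilon$-net at negligible cost), and that the notion of $\epsilon$-LOCC-indistinguishability in use is the standard one --- the supremum, over LOCC measurements, of the statistical distance between the induced outcome distributions --- so that the monotonicity and triangle inequality invoked above hold verbatim.
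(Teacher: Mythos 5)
Your proposal is correct, but note that the paper itself does not prove this lemma at all: it is stated as an imported result, ``adapted from'' \cite{ananth2024cryptography}, and the surrounding text immediately moves on to using it in the security proof of the one-way puzzle. So there is no in-paper argument to compare against; what you have written is a plausible reconstruction of the unstated ``adaptation.'' Your two ingredients are sound: (i) the single-block bound is deferred to \cite{ananth2024cryptography}, which is legitimate since that is exactly the content being cited; and (ii) the lift to $s$ blocks via a hybrid argument works because each intermediate state $\tau_j$ differs from its neighbor only in block $j$, the channel $\Phi_j$ that appends the remaining blocks is implementable by LOCC (local preparation for the independent blocks, shared classical randomness---or an $\epsilon$-net thereof, or more simply the observation that the correlated block state is separable---for the correlated blocks), and LOCC distinguishing advantage is monotone under LOCC pre-processing and satisfies the triangle inequality. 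Summing $O(t^2/2^{n_j})$ over $j$ gives the stated bound. The one caveat worth flagging is that you should confirm whether \cite{ananth2024cryptography} states the single-block or already the multi-block version; if the latter, your hybrid argument is unnecessary, and if the former, your argument is precisely the adaptation the authors leave implicit. Your closing remark that the hard part---the single-block bound surviving adaptive LOCC---is inherited rather than re-derived is an accurate assessment of where the real content lies.
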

At a high level, \Cref{lem:locc-harr-indi} says that any adversary for the one-way puzzle can not distinguish between the case when they have been given copies of $\ket{\psi_l}$, the states that $\mathsf{Samp}$ used, or $\ket{\psi'_l}$, independently sampled Haar random states.  However, an adversary $\mathcal{A}$ that receives copies of independently sampled $\ket{\psi'_l}$ can be simulated by an (exponential time) adversary without oracle access to any states.  Thus, the probability that an adversary in the CHRS model breaks the one-way puzzle construction is negligible as well.  Formally, we have the following lemma.

\begin{theorem}[One-way puzzle security]
\label{thm:owpuzz-security}
    For all adversaries $\calA^{\{\ket{\psi_l}\}}$ with access to polynomially many copies of the CHRS states, if the sampler samples a puzzle $(k = (k_1, \ldots, k_l), s)$ and inputs the puzzle $s$ to $\calA^{\{\ket{\psi_l}\}}$, then $\calA$ passes verification with negligible probability.
\end{theorem}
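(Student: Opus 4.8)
The plan is to combine \Cref{lem:no_samples_cant_verify} with the LOCC Haar indistinguishability result \Cref{lem:locc-harr-indi}, via a simulation argument. Fix an adversary $\calA^{\{\ket{\psi_l}\}}$ making at most $q = \poly(n)$ queries to the CHRS oracle. First, I would argue that without loss of generality $\calA$ can be taken to request a fixed number $t = \poly(n)$ of copies of each state $\ket{\psi_l}$ for $l \in [n,2n]$ up front: any query algorithm to the CRQS isometry oracle amounts to receiving some polynomial number of copies of each $\ket{\psi_l}$, and it never hurts $\calA$ to take more. The whole experiment — sampler producing $(k,s)$ together with the classical shadows $s$, the adversary $\calA$ receiving $s$ and $t$ copies of each $\ket{\psi_l}$ and outputting a candidate key, and the time-unbounded verifier — can then be viewed as a bipartite LOCC-type protocol: the sampler's classical shadow generation and the verifier's tomography both act on copies of the $\ket{\psi_l}$'s, and they communicate only the classical strings $k,s$.

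The key step is to set up the two hybrids of \Cref{lem:locc-harr-indi}. In hybrid $\rho_{\sfA\sfB}$, both the $\sfA$ register (used by $\mathsf{Samp}$ to produce the shadows, with the $Z^{k_l}$ twirl applied) and the $\sfB$ register (handed to $\calA$) hold copies of the \emph{same} Haar random $\ket{\psi_l}$ for each $l$ — this is exactly the real security experiment. In hybrid $\sigma_{\sfA\sfB}$, the $\sfB$ register instead holds copies of \emph{independently} sampled Haar states $\ket{\psi'_l}$. The map from these shared/independent copies to the bit "does the verifier accept" is an LOCC procedure (classical shadow generation on $\sfA$, sending classical $s$; then $\calA$ produces classical $k$ from $\sfB$ and $s$; then the unbounded verifier uses further copies of the true $\ket{\psi_l}$ — which live on the $\sfA$ side — together with $(k,s)$ to decide), so by \Cref{lem:locc-harr-indi} the acceptance probabilities in the two hybrids differ by at most $O(\sum_{l=n}^{2n} t^2 / 2^{l}) = O(t^2/2^n) = \negl(n)$. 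I need to be a little careful that the verifier's exponentially many tomography copies of $\ket{\psi_l}$ are counted on the $\sfA$ side and that "exponentially many" is still fine since the LOCC indistinguishability bound only charges the $\sfB$-side copy count $t$; alternatively, note the verifier's decision is a fixed (if inefficient) classical post-processing of $(k,s)$ and the classical description of the true $\ket{\psi_l}$'s, which is determined by the $\sfA$-side randomness.

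Finally, in hybrid $\sigma_{\sfA\sfB}$ the adversary's input — the copies of $\ket{\psi'_l}$ — is statistically independent of the states $\ket{\psi_l}$ that $\mathsf{Samp}$ and $\Ver$ use. Hence $\calA$ receiving $\sigma$'s $\sfB$ register can be perfectly simulated by an adversary $\calA'$ that makes \emph{no} queries to the CHRS oracle: $\calA'$ samples fresh Haar states $\ket{\psi'_l}$ itself (in exponential time), runs $\calA$ on $s$ and these copies, and outputs whatever $\calA$ outputs. By \Cref{lem:no_samples_cant_verify}, $\calA'$ passes verification with probability at most $(3/5)^n = \negl(n)$. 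Combining, $\calA$'s success probability in the real experiment is at most $\negl(n) + \negl(n) = \negl(n)$, as claimed. The main obstacle I expect is the bookkeeping in the LOCC reduction: precisely identifying which copies of the Haar states sit on the $\sfA$ versus $\sfB$ side (in particular that the verifier's tomography copies do not inflate the $\sfB$-side count $t$ that appears in the error bound), and confirming that the combined sampler/adversary/verifier procedure genuinely fits the one-way LOCC structure required to invoke \Cref{lem:locc-harr-indi}.
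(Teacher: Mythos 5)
Your proposal matches the paper's proof essentially exactly: both apply \Cref{lem:locc-harr-indi} to replace the adversary's copies of the sampler's states with independently drawn Haar states, then simulate the resulting adversary by an unbounded no-query adversary and invoke \Cref{lem:no_samples_cant_verify} to bound its success probability by $(3/5)^n$. The bookkeeping worry you flag about the verifier's exponentially many tomography copies is a genuine subtlety, but the paper's proof does not address it either\textemdash it simply has Alice run $\mathsf{Ver}$ as her final local operation\textemdash so your treatment is, if anything, slightly more careful.
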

\begin{proof}
    Consider an adversary $\mathcal{A}^{\{\ket{\psi_l}\}_{l}}$ with samples of the common Haar random state, and $\mathcal{A}^{\{\ket{\psi'_l}\}_{l}}$ that gets samples of an independently sampled Haar random state.  We first claim that $\mathcal{A}^{\{\ket{\psi'_l}\}_{l}}$ passes verification with probability at most $(3/5)^{n}$.  Consider the following adversary $\mathcal{B}$ that makes no calls to the common Haar random state, but runs in exponential time.  $\mathcal{B}$ maintains a database of classical description of states $\{(l, \ket{\phi_l})\}$.  Whenever $\mathcal{A}$ requests copies of one of the states $\ket{\psi_l}$, if $l$ appears in the database of $\mathcal{B}$, then $\mathcal{B}$ generates a copy of $\ket{\phi_l}$, otherwise $\mathcal{B}$ samples $\ket{\phi_l}$ independently at random from the Haar measure on $l$ qubits, adds $(l, \ket{\phi_l})$ to the database, and outputs a copy of $\ket{\phi_l}$.  Because $\ket{\phi_l}$ are sampled independently at random from $\{\ket{\psi_l}\}_l$, the common Haar random state, $\mathcal{B}$ is accepted with the same probability as $\mathcal{A}^{\{\ket{\psi'_l}\}_{l}}$.  However $\mathcal{B}$ makes no queries to the common Haar random state and by \Cref{lem:no_samples_cant_verify}, passes one-way puzzle verification with probability at most $(3/5)^{n}$. Therefore $\mathcal{A}^{\{\ket{\psi_l}\}_{l}}$ passes one-way puzzle verification with probability at most $(3/5)^{n}$.
    
    Now assume for the sake of contradiction that $\mathcal{A}^{\{\ket{\psi_l}\}_{l}}$ passes verification with probability greater than $(3/5)^{n} + O(t^2 / 2^n)$.  Then consider the LOCC procedure where Alice first calls $\mathsf{Samp}$ and sends the puzzle to Bob.  Bob then runs $\mathcal{A}$ on $s$ and the copies of their state, and returns the output.  Alice then runs $\mathsf{Ver}$ and outputs $1$ if the verifier accepts.  If $\mathcal{A}^{\{\ket{\psi_l}\}_{l}}$ is accepted with probability greater than $(3/5)^n + O(t^2 / 2^n) = \negl(n)$, then the LOCC protocol described above distinguishes between the case when Alice and Bob have the same state or different state with advantage greater than $O(t^2 / 2^n)$, contradicting \Cref{lem:locc-harr-indi}. 
\end{proof}


Putting~\Cref{thm:iv-owpuzz} together with~\Cref{cor:ev-owpuzz-non-exist}, we have the following.
\begin{corollary}
    In the common Haar random state model, inefficiently verifiable one-way puzzles exist but efficiently verifiable one-way puzzles do not exist. 
\end{corollary}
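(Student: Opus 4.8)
The plan is to read off this corollary as a direct combination of \Cref{thm:iv-owpuzz} and \Cref{cor:ev-owpuzz-non-exist}, so the only real work is checking that both can be instantiated with a single oracle and supplying one syntactic observation. First I would fix the oracle to be $\mathcal{O} = (\mathcal{V}_{\Haar}, \mathsf{UnitaryALL})$: a Haar-random common reference state family together with a $\mathsf{UnitaryALL}$ oracle. This is precisely the oracle of \Cref{thm:iv-owpuzz}, and it is the $\mathcal{V} = \mathcal{V}_{\Haar}$ instance of the oracle of \Cref{cor:ev-owpuzz-non-exist}. From \Cref{thm:iv-owpuzz} we immediately obtain that inefficiently verifiable one-way puzzles exist relative to $\mathcal{O}$; nothing further is needed on the positive side, because the security argument for that construction (\Cref{thm:owpuzz-security}, via the LOCC reduction to \Cref{lem:locc-harr-indi}) already handles adversaries of arbitrary computational power, and a $\mathsf{UnitaryALL}$ oracle is just a fixed unitary family that the two LOCC parties may each apply locally.

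For the negative side, the observation I would use is that every \emph{efficiently verifiable} one-way puzzle is in particular a \emph{sample-efficient} one: its verifier $\mathsf{Ver}$ is QPT, and a quantum polynomial-time algorithm makes at most polynomially many queries to any oracle, hence at most polynomially many calls to the common Haar random state oracle. Therefore \Cref{cor:ev-owpuzz-non-exist}, which rules out sample-efficient one-way puzzles relative to $(\mathcal{V}, \mathsf{UnitaryALL})$ for every reference family $\mathcal{V}$, rules out efficiently verifiable one-way puzzles relative to $\mathcal{O}$ as a special case. (Equivalently, against an efficient verifier one could run the threshold-search attack in the style of \Cref{thm:one_way_state_generators_do_not_exist} directly against the verification circuit, which by \Cref{rem:threshold_search_pspace} needs only an oracle for a $\mathsf{unitaryPSPACE}$-complete problem and hence is already available relative to $\mathcal{O}$.) Combining the two halves, relative to $\mathcal{O}$ inefficiently verifiable one-way puzzles exist while efficiently verifiable ones do not, which is the asserted separation.

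Because the corollary is phrased for ``the common Haar random state model'' rather than for one fixed oracle, no derandomization is strictly necessary; if a single fixed oracle were wanted instead of the Haar-random ensemble, one would apply the standard measure-concentration and union-bound argument (as in the unitary-oracle section and~\cite{aaronson2007quantum}) to fix a $\mathcal{V}_{\Haar}$ in the support for which both properties hold simultaneously. I do not anticipate any genuine obstacle: all of the mathematical content already lives in \Cref{thm:iv-owpuzz} and \Cref{cor:ev-owpuzz-non-exist}, and the only points needing care are the compatibility of the two oracles and the fact that efficient verifiability is a special case of sample efficiency.
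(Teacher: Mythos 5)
Your proposal is correct and matches the paper's own (one-line) proof, which likewise just combines \Cref{thm:iv-owpuzz} with \Cref{cor:ev-owpuzz-non-exist}; your added observation that an efficiently verifiable puzzle is in particular sample-efficient is exactly the implicit bridge the paper relies on.
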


\section{Separation in the Haar random swap oracle model}
\label{sec:unitary_oracle_sep}

\noindent \textbf{Note}: We highlight again that there is a bug in this section, pointed out to us by Eli Goldin and Mark Zhandry. While \Cref{lem:swap_from_samples} is true, it only holds for fixed states $\rho$, and thus can not be applied to get an oracle separation between $\EFI$ pairs and $\OWSG$ in the Haar random swap model. The construction of $\owpuzz$ and $\EFI$ can be lifted to the unitary model but the attack on $\OWSG$ cannot be lifted with the approach in this paper. Treating the swap oracle as a reflection around $\ket{0} - \ket{\psi}$ for a Haar random $\ket{\psi}$ allows one recover the result, see \cite{goldin2025translating} for a more detailed discussion.  

In the previous section, we showed that in CHRS model, one-way state generators do not exist.  In this section, we show how to lift certain constructions in the CHRS to a unitary oracle that swaps between $\ket{0}$ and a Haar random state.  Given a state $\ket{\psi}$, define the \emph{swap oracle} to be the following:
\begin{definition}[Swap oracle]
    Let $\ket{\psi}$ be an $n$-qubit state, then the \emph{swap oracle} $\oracle_{\psi}$ is defined as follows:
    \begin{equation*}
        \oracle_{\psi} = \ket{0^{n}}\!\!\bra{\psi} + \ket{\psi}\!\!\bra{0^{n}} + \id_{\perp}\,.
    \end{equation*}
    where we assume WLOG that $\ket{\psi}$ is orthogonal to $\ket{0^{n}}$, since if not, we can always append a single $\ket{1}$ to it in order to make it orthogonal.
    Here 
    $\id_{\perp}$ is the identity on the subspace orthogonal to $\mathrm{span}\{\ket{0^{n}}, \ket{\psi}\}$.  For a family of states $\Psi$, the swap oracle $\oracle_{\Psi}$ is a family of oracles that each swap around the corresponding state in the state family.
\end{definition}

We can instantiate the swap model with any common reference quantum state model, but for this paper we will define the Haar random swap oracle as follows.
\begin{definition}[Haar random swap oracle]
     Let $\{\ket{\phi_{m}}\}_{m \in \mathbb{N}}$ be a collection of states sampled from the Haar measure on $2^{m}-1$ dimensions, where $\ket{\phi_{m}}$ is a state on $m$ qubits, conditioned in being orthogonal to $\ket{0}$.%
     \footnote{
         We note that we add the condition that $\ket{\psi_{m}}$ is orthogonal to $\ket{0}$ so that the swap oracle is a unitary.  However, for all polynomials, $\poly(m)$-copies of a Haar random state on $2^{m}$-dimensions has negligible trace distance to $\poly(m)$-copies of a Haar random state on $2^{m} - 1$ dimensions, so making this change is indistinguishable to any adversary that makes $\poly(m)$ calls to the swap oracle, which is what we consider below. 
         Thus, it will often be convenient\textemdash and it will add only negligible error to our analysis\textemdash to consider the state prepared by the oracle as if it were Haar random on the full $2^{m}$-dimensional space.
     }
     The Haar random swap oracle is the collection of unitaries:
     \begin{equation*}
         \oracle_{\ket{\Haar}} = \oracle_{\{\ket{\phi_{m}}\}}\,.
     \end{equation*}
\end{definition}


Next, we show that one can use copies of the Haar random state to simulate calls to the Haar random swap oracle, at an inverse polynomial trace distance error. We note that a very similar swap oracle over \emph{two} Haar random state was considered in the work of \cite{zhandry24space}, where he proved a very similar state simulation technique.  We adapt his proof strategy for our case.

\begin{longfbox}[breakable=false, padding=1em, margin-top=1em, margin-bottom=1em]
    \begin{algorithm}\label{alg:simulating_swap_oracle}
    Algorithm for simulating swap oracle $\oracle_{\psi}$.
    \end{algorithm}
    \noindent \textbf{Input: } Unknown quantum state $\rho$ in register $\reg{R}$ and $2q(\lambda)+1$ copies of $\ket{\psi}$, orthogonal to $\ket{0^{n}}$.
    \begin{enumerate}
        \item\label{step:simulate_1} Coherently measure $\reg{R}$ using the POVM $\{\proj{0^{n}}, \id - \proj{0^{n}}\}$, saving the result in register $\reg{A}_1$.
        \begin{enumerate}
            \item\label{step:simulate_1_a} If the measurement result in $\reg{A}_1$ has outcome $\proj{0^{n}}$, swap out register $\reg{R}$ with a fresh copy of $\ket{\psi}$.  
            \item\label{step:simulate_2} If the measurement result in $\reg{A}_1$ has outcome $\id - \proj{0^{n}}$, perform a symmetric subspace projector $\Pi_{\mathrm{sym}}^{q(\lambda) + 1}$ on register $\reg{R}$ and $q(\lambda)$ other registers containing fresh copies of $\ket{\psi}$, saving the result in register $\reg{A}_2$.
            \begin{enumerate}
                \item\label{step:simulate_3} If the measurement result in $\reg{A}_2$ has outcome $\Pi_{\mathrm{sym}}$, swap out register $\reg{R}$ with an ancilla register containing the state $\ket{0^{n}}$.
            \end{enumerate}
        \end{enumerate}
        \item\label{step:simulate_4} Coherently measure register $\reg{R}$ using the POVM $\{\proj{0^{n}}, \id - \proj{0^{n}}\}$, writing the result to register $\reg{A}_2$.
        \item\label{step:simulate_5} Coherently measure register $\reg{R}$ and $\ket{\psi}^{\otimes q(\lambda)}$ using a symmetric subspace projector $\Pi_{\mathrm{sym}}^{q(\lambda) + 1}$, writing the result to register $\reg{A}_1$.
        \item Return the $\reg{R}$ register.
    \end{enumerate}
\end{longfbox}

Before proving that the algorithm works, we will need to use the following facts, one about the post-measurement state of the symmetric subspace projector and another about Haar random states.

\begin{lemma}
Let $\ket{\phi}$ be a state perpendicular to $\ket{\psi}$, then the post-measurement state after applying the symmetric subspace projector (i.e. the sum over all permutations of $l+1$ registers) and accepting on $\ket{\phi} \otimes \ket{\psi}^{\otimes l}$ is 
\begin{equation*}
    \frac{1}{l+1} \sum_{i = 0}^{l} \ket{\psi}^{\otimes i} \ket{\phi} \ket{\psi}^{\otimes l - i}
\end{equation*}
Similarly, the post-measurement sate after applying the symmetric subspace projector and rejecting is given by
\begin{equation*}
    \frac{l}{l+1}\ket{\phi} \ket{\psi}^{\otimes l} - \frac{1}{l+1}\sum_{i = 1}^{l} \ket{\phi}^{\otimes i} \ket{\phi} \ket{\psi}^{\otimes l - i}\,.
\end{equation*}
\end{lemma}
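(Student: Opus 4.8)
The plan is to directly compute the action of the symmetric subspace projector $\Pi_{\mathrm{sym}}^{l+1}$ on the vector $\ket{\phi}\otimes\ket{\psi}^{\otimes l}$, using the hypothesis $\braket{\phi}{\psi}=0$, and then split the result into its component inside the symmetric subspace and its component orthogonal to it. Recall that the (unnormalized) symmetric projector can be written as $\Pi_{\mathrm{sym}}^{l+1} = \frac{1}{(l+1)!}\sum_{\pi \in S_{l+1}} P_\pi$, where $P_\pi$ permutes the $l+1$ tensor factors. Applying this to $\ket{\phi}\otimes\ket{\psi}^{\otimes l}$, the only data that matters about a permutation $\pi$ is which of the $l+1$ slots receives the $\ket{\phi}$ factor; there are $l!$ permutations sending $\ket{\phi}$ to slot $i$ for each $i\in\{0,\dots,l\}$. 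Hence $\Pi_{\mathrm{sym}}^{l+1}\paren{\ket{\phi}\otimes\ket{\psi}^{\otimes l}} = \frac{1}{l+1}\sum_{i=0}^{l}\ket{\psi}^{\otimes i}\ket{\phi}\ket{\psi}^{\otimes l-i}$. This is already a symmetric vector, so it equals the (unnormalized) post-measurement state; I would then note that since $\braket{\phi}{\psi}=0$ the $l+1$ terms are mutually orthogonal, so this vector has squared norm $\frac{l+1}{(l+1)^2} = \frac{1}{l+1}$, and therefore renormalizing gives exactly the claimed accept-branch state $\frac{1}{\sqrt{l+1}}\cdot\frac{1}{\sqrt{l+1}}\sum_i \ldots$ — i.e. the stated expression is already correctly normalized as a unit vector up to the standard convention, which I would make explicit.

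For the reject branch, the post-measurement (unnormalized) state is $\paren{\id - \Pi_{\mathrm{sym}}^{l+1}}\paren{\ket{\phi}\otimes\ket{\psi}^{\otimes l}} = \ket{\phi}\ket{\psi}^{\otimes l} - \frac{1}{l+1}\sum_{i=0}^{l}\ket{\psi}^{\otimes i}\ket{\phi}\ket{\psi}^{\otimes l-i}$. Pulling out the $i=0$ term from the sum, this equals $\paren{1 - \frac{1}{l+1}}\ket{\phi}\ket{\psi}^{\otimes l} - \frac{1}{l+1}\sum_{i=1}^{l}\ket{\psi}^{\otimes i}\ket{\phi}\ket{\psi}^{\otimes l-i} = \frac{l}{l+1}\ket{\phi}\ket{\psi}^{\otimes l} - \frac{1}{l+1}\sum_{i=1}^{l}\ket{\psi}^{\otimes i}\ket{\phi}\ket{\psi}^{\otimes l-i}$, which matches the claimed expression (modulo the evident typo $\ket{\phi}^{\otimes i}\ket{\phi}$ in the excerpt, which should read $\ket{\psi}^{\otimes i}\ket{\phi}$). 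If a normalized post-measurement state is desired one again uses orthogonality of the terms to compute the squared norm as $\paren{\frac{l}{l+1}}^2 + l\cdot\paren{\frac{1}{l+1}}^2 = \frac{l^2 + l}{(l+1)^2} = \frac{l}{l+1}$, so dividing by $\sqrt{l/(l+1)}$ normalizes it; I would state whichever convention the surrounding text uses.

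There is no real obstacle here: the entire lemma is a finite-dimensional linear-algebra identity, and the only thing to be careful about is bookkeeping — tracking the orbit-counting argument ($l!$ permutations per target slot), correctly separating the $i=0$ term in the reject branch, and being consistent about whether "post-measurement state" means the normalized vector or the subnormalized projection. The mild subtlety worth a sentence of justification is why the terms $\ket{\psi}^{\otimes i}\ket{\phi}\ket{\psi}^{\otimes l-i}$ are pairwise orthogonal for distinct $i$: this is exactly where the hypothesis $\braket{\phi}{\psi}=0$ is used, since the inner product of the $i$-th and $j$-th terms ($i\neq j$) contains a factor $\braket{\psi}{\phi}$ (in slot $i$) and $\braket{\phi}{\psi}$ (in slot $j$). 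I would spell that out and then the computation closes.
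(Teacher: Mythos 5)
Your proof is correct and follows essentially the same route as the paper's: both compute $\Pi_{\mathrm{sym}}^{l+1}\bigl(\ket{\phi}\otimes\ket{\psi}^{\otimes l}\bigr)$ by grouping the $(l+1)!$ permutations according to which slot receives $\ket{\phi}$ (the paper phrases this as a coset decomposition of $S_{l+1}$ into copies of $S_l$, you phrase it as counting $l!$ permutations per target slot), and both obtain the reject branch by subtracting from the original vector. Your added remarks on the subnormalization of the stated vectors and on the typo $\ket{\phi}^{\otimes i}\ket{\phi}$ (which should read $\ket{\psi}^{\otimes i}\ket{\phi}$) are both accurate.
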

\begin{proof}
    From \cite{harrow2013church}, the symmetric subspace projector is given by
    \begin{equation*}
        \Pi^{\mathrm{sym}}_{l+1} = \frac{1}{(l+1)!}\sum_{\pi \in S_{l+1}} P_{\pi}\,.
    \end{equation*}
    Where $P_{\pi}$ acts on $n$ registers by permuting the registers.  Note that we can split $S_{l+1}$ into cosets of $S_{l}$, where the representative of the $i$'th coset is the swap $(1, i)$.  Then we have the following
    \begin{align*}
        \frac{1}{(l+1)!}\sum_{\pi \in S_{l+1}} \ket{\phi} \otimes \ket{\psi}^{\otimes l} &= \frac{1}{(l+1)!} \sum_{i \in [l+1]} P_{(1, i)}\sum_{\pi' \in S_{l}} \id \otimes P_{\pi'} \ket{\phi} \otimes \ket{\psi}^{\otimes l}\\
        &= \frac{1}{l+1} \sum_{i \in [l+1]} P_{(1, i)} \ket{\phi} \otimes \ket{\psi}^{\otimes l}\\
        &= \frac{1}{l+1} \sum_{i = 0}^{l} \ket{\psi}^{\otimes i} \ket{\phi} \ket{\psi}^{\otimes l - i}\,,
    \end{align*}
    as desired.  To compute the state after the anti-symmetric subspace projector is applied, we simply take $\id - \Pi^{\mathrm{sym}}_{l+1}$, i.e. subtract the above state from the original state.  Taking the difference yields the desired state.
\end{proof}

We adapt the following lemma from~\cite{zhandry24space}.

\begin{lemma}[Paraphrased from~\cite{zhandry24space}]
\label{lem:haar_swap_indist}
    Let $\ket{\psi}$ be an $n$-qubit state drawn from a phase invariant distribution%
    \footnote{
        A distribution on quantum states over a subspace is \emph{phase invariant} if it is invariant to applying a uniformly random phase to each basis state of the subspace~\cite{zhandry24space}. In particular, note that the Haar measure is phase invariant.
    }
    and $A^{\oracle_\psi}$ be a quantum oracle algorithm that makes $p(\lambda) = \poly(\lambda)$ many queries to the swap oracle $\oracle_{\psi}$. Let $\oracle_{\psi}^{\mathsf{res}}$ be a simulation that maintains a reservoir register $\reg{res}$ containing up to $p(\lambda)$ copies of $\ket{\psi}$, and performs the following unitary for each query:
    \begin{align*}
        \oracle^{\mathsf{res}} 
        = 
        \sum_{k \in [p(\lambda)]}
        \ket{0^{n}}_{\reg{A}}
        \ket{\psi^k}_{\reg{res}} 
        \!\!
        \bra{\psi}_{\reg{A}}
        \bra{\psi^{k-1}}_{\reg{res}}
        + 
        \ket{\psi}_{\reg{A}}
        \ket{\psi^{k-1}}_{\reg{res}} 
        \!\!
        \bra{0^{n}}_{\reg{A}}
        \bra{\psi^{k}}_{\reg{res}}
        + 
        \id_{\perp}
        \,.
    \end{align*}
    where $\reg{A}$ is the query register of the algorithm and $\reg{res}$ is the reservoir state, with $\ket{\psi^k}$ representing the the state of the reservoir containing $k$ copies of $\ket{\psi}$.

    Then we have that for all input states $\rho$,
    \begin{align*}
        \E_{\psi \gets \Haar} 
        \left[
            A^{\oracle_{\psi}}(\rho)
        \right]
        =
        \E_{\psi \gets \Haar} 
        \left[
            A^{\oracle_{\psi}^{\mathsf{res}}}(\rho)
        \right]
        \,.
    \end{align*}
\end{lemma}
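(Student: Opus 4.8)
The plan is to adapt the ``reservoir'' (compressed-oracle-style) argument of~\cite{zhandry24space} to our one-sided swap. By linearity of the two channels $\rho\mapsto\E_\psi\bigl[A^{\oracle_\psi}(\rho)\bigr]$ and $\rho\mapsto\E_\psi\bigl[A^{\oracle_\psi^{\mathsf{res}}}(\rho)\bigr]$, it suffices to treat pure inputs $\rho=\proj{\rho}$, and we may assume $A$ applies $\psi$-independent unitaries $U_0,\dots,U_{p(\lambda)}$ on its registers $\reg{A}\reg{W}$, interleaved with the $p(\lambda)$ queries, followed by a single projective measurement. Since the $U_i$ and the final measurement are the same in both executions, act only on $\reg{A}\reg{W}$, and commute with $\Tr_{\reg{res}}$, it is enough to show, writing $\ket{\Phi_j^\psi}$ and $\ket{\Psi_j^\psi}$ for the states just after the $j$-th query in the real and simulated executions (with $\reg{res}$ initialized to $\ket{\psi^{p(\lambda)}}_{\reg{res}}$), that
\[
\E_{\psi}\bigl[\proj{\Phi_j^\psi}\bigr]=\E_{\psi}\bigl[\Tr_{\reg{res}}\proj{\Psi_j^\psi}\bigr]\qquad\text{for every }j.
\]

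The mechanism driving the argument is phase invariance. Expanding every state in the computational basis of $\reg{A}$ and of each reservoir slot, and using that the distribution of $\ket\psi$ is unchanged under applying independent uniform phases to its amplitudes, one sees that $\E_\psi$ annihilates every contribution whose monomial in $\{\psi_x,\overline{\psi_x}\}$ is not balanced coordinate by coordinate (equally many factors $\psi_x$ as $\overline{\psi_x}$, for each $x$). I would then prove the displayed identity by induction on $j$, carrying the invariant that $\ket{\Psi_j^\psi}$ is $\ket{\Phi_j^\psi}$ together with a reservoir holding the ``missing'' copies of $\ket\psi$, so that the total number of $\ket\psi$-components across $\reg{A}\reg{W}\reg{res}$ stays equal to $p(\lambda)$ and tracing out $\reg{res}$ returns $\proj{\Phi_j^\psi}$ once the above projection is applied. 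Making this precise requires decomposing states by the number of $\ket\psi$-slots they carry, which is legitimate on the coordinate-balanced subspace onto which $\E_\psi$ projects; the base case $j=0$ is immediate since $\ket{\Psi_0^\psi}=\ket{\Phi_0^\psi}\ot\ket{\psi^{p(\lambda)}}_{\reg{res}}$.

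For the inductive step I would compare $\oracle_\psi$ on $\reg{A}$ with $\oracle^{\mathsf{res}}$ on $\reg{A}\reg{res}$ along the three branches $\ket\psi_{\reg A}\mapsto\ket{0^n}_{\reg A}$, $\ket{0^n}_{\reg A}\mapsto\ket\psi_{\reg A}$, and the identity on $\mathrm{span}\{\ket{0^n},\ket\psi\}^{\perp}$. On the first two branches the reservoir simply gains or loses one copy of $\ket\psi$; after rewriting the symmetrized reservoir states by elementary symmetric-subspace identities (such as the post-measurement forms computed above), the invariant is preserved term by term and tracing out $\reg{res}$ reproduces the real state. The only discrepancies are the ``boundary'' terms, on which $\oracle^{\mathsf{res}}$ is forced to act as the identity because the reservoir would overflow past $p(\lambda)$ copies or underflow below $0$; on the real side these are exactly the terms in which the algorithm would hold a $(p(\lambda)+1)$-st coherent copy of $\ket\psi$ after at most $p(\lambda)$ queries, or in which a $\ket\psi$-component appears ``for free'' out of the $\psi$-independent input $\ket\rho$. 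Such terms carry a $\psi$-amplitude monomial that can never become coordinate-balanced, no matter how the remaining queries act, so phase invariance kills them on both sides of the identity and the invariant survives. Taking $j=p(\lambda)$ and applying the final measurement yields the lemma.

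The step I expect to be the main obstacle is this induction---in particular, fixing a clean, manipulable form of the ``how many copies of $\ket\psi$ does the algorithm hold'' invariant and then establishing that the reservoir boundary discrepancies are precisely those annihilated by the $\psi$-average. This is where the symmetric-subspace bookkeeping has to be done with care, and it is essentially where the argument of~\cite{zhandry24space} does its work; our setting differs only in that phase invariance is applied to the single Haar (more generally, phase-invariant) state $\ket\psi$ rather than to a pair of them.
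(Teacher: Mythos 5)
The paper does not actually prove this lemma from scratch: its entire proof is a one-line invocation of the proofs of Lemmas 5.5 and 5.9 of~\cite{zhandry24space}, together with a footnote observing that although those lemmas are stated for a slightly different (indexed, two-state) swap oracle, their proofs apply verbatim to $\oracle_\psi$ as a special case. Your sketch is a reconstruction of the internals of that cited argument, and it identifies the right mechanism: phase invariance of the distribution kills every term whose monomial in $\{\psi_x,\overline{\psi_x}\}$ is not coordinate-balanced, and the reservoir's gain/loss of copies exactly accounts for the copies the real oracle absorbs or creates, with the overflow/underflow boundary terms being precisely the ones annihilated by the $\psi$-average. So in substance you are following the same route the paper takes, just unpacking the citation.

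One caveat on how you have set up the induction. The displayed hypothesis $\E_{\psi}[\proj{\Phi_j^\psi}]=\E_{\psi}[\Tr_{\reg{res}}\proj{\Psi_j^\psi}]$ is not self-sustaining: the $(j{+}1)$-st query is correlated with $\psi$, so equality of the $\psi$-averaged density matrices after $j$ queries does not by itself yield equality after $j{+}1$. Likewise the stronger per-$\psi$ invariant you state (``$\ket{\Psi_j^\psi}$ is $\ket{\Phi_j^\psi}$ together with a reservoir holding the missing copies'') is literally false for a fixed $\psi$: the real execution retains coherences between branches holding different numbers of $\psi$-copies, while tracing out the reservoir decoheres them; these only agree after the average. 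The invariant must therefore be carried at the level of the symbolic expansion in $\psi_x,\overline{\psi_x}$ (equivalently, the balanced-monomial bookkeeping you allude to), showing that the surviving monomials coincide term by term on the two sides. That is exactly where the work in~\cite{zhandry24space} happens, and your sketch correctly names it as the main obstacle without resolving it---which is acceptable here only because the paper itself delegates that step wholesale to the cited proofs.
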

\begin{proof}
    This follows directly by an application of the proofs of Lemmas 5.5 and 5.9 of~\cite{zhandry24space} to $\oracle_{\psi}$.%
    \footnote{The lemmas in~\cite{zhandry24space} are stated for a slightly different oracle. The oracle of~\cite{zhandry24space} has an index register and uses a reservoir of many copies of each state from an indexed collection of different Haar random states. Furthermore, the oracle\textemdash at least when queried on index 0\textemdash swaps in two different Haar states. Formally the lemmas do not apply to our case. However, the proofs of the lemmas do not make use of these differences with our setting and therefore apply directly to our oracle as a special case.}
\end{proof}

We note that $\oracle^{\mathsf{res}}$ uses a perfect projective measurement onto the state $\ket{\psi}$. It remains to show how to implement this using copies of the state $\ket{\psi}$.  Like~\cite{zhandry24space}, we use a technique from~\cite{Ji_2018}, implementing a projection on the symmetric subspace of polynomially many copies of $\ket{\psi}$. We show how to do this for completeness. 
We combine them in the following lemma where we extend \Cref{lem:haar_swap_indist} to an algorithm that takes as input copies of the state $\ket{\psi}$, sampled from the Haar measure.

\begin{lemma}[Swap oracle from sample access]\label{lem:swap_from_samples}
    Let $\ket{\psi}$ be an $n$-qubit Haar random quantum state and $A^{\oracle_\psi}$ be a polynomial-space quantum oracle algorithm that makes $p(\lambda) = \poly(\lambda)$ many queries to the swap oracle $\oracle_{\psi}$.  Then for every $\epsilon > 0$, there exists a polynomial-space quantum algorithm $B(\ket{\psi}^{\otimes p(\lambda)\left(\frac{12}{\epsilon}\right)}, \cdot)$ that such for all $\rho$,
    \begin{equation*}
        \td\left(\mathbb{E}_{\psi \leftarrow \Haar(2^n)}\left[B(\ket{\psi}^{\otimes p(\lambda)\left(\frac{12}{\epsilon}\right)}, \rho)\right], \mathbb{E}_{\psi \leftarrow \Haar(2^{n})} \left[A^{\oracle_{\psi}}(\rho)\right]\right) \leq \epsilon\,.
    \end{equation*}
\end{lemma}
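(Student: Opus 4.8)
The starting point is \Cref{lem:haar_swap_indist}, which already shows that, averaged over the Haar-random $\ket{\psi}$, the output of $A^{\oracle_\psi}$ is \emph{exactly} reproduced by $A^{\oracle_\psi^{\mathsf{res}}}$, where $\oracle_\psi^{\mathsf{res}}$ maintains a reservoir of copies of $\ket{\psi}$ and implements each query with a single \emph{perfect} projective measurement $\{\proj{\psi},\,\id-\proj{\psi}\}$ on the query register together with a conditional swap of one copy in or out of the reservoir. So it suffices to exhibit a polynomial-space $B$ that, using only copies of $\ket{\psi}$, reproduces $A^{\oracle_\psi^{\mathsf{res}}}$ up to trace distance $\epsilon$ over the whole run of $p(\lambda)$ queries.

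The only part of $\oracle_\psi^{\mathsf{res}}$ that is not directly implementable from copies of $\ket{\psi}$ is the projection onto $\ket{\psi}$; the $\{\proj{0^{n}},\,\id-\proj{0^{n}}\}$ measurement and the register swaps are plain unitaries/measurements. The plan is to replace that perfect projection by the symmetric-subspace test used in \Cref{alg:simulating_swap_oracle}: to test whether the query register holds $\ket{\psi}$, project it together with $q$ auxiliary copies of $\ket{\psi}$ onto the symmetric subspace $\Pi_{\mathrm{sym}}^{q+1}$. Because the simulated oracle must itself be run coherently, $B$ keeps each outcome in an ancilla and then \emph{uncomputes} it via the reverse symmetric-subspace and $\proj{0^{n}}$ measurements (\cref{step:simulate_4,step:simulate_5} of \Cref{alg:simulating_swap_oracle}), which returns the ancillas to a fixed state up to the same error incurred in the forward direction.

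To bound the error of a single simulated query I would invoke the post-measurement lemma stated just above \Cref{lem:haar_swap_indist}: on a state $\ket{\phi}\ket{\psi}^{\otimes q}$ with $\ket{\phi}\perp\ket{\psi}$ the symmetric-subspace test accepts with probability exactly $1/(q+1)$ and, when it does, outputs $\tfrac{1}{q+1}\sum_{i=0}^{q}\ket{\psi}^{\otimes i}\ket{\phi}\ket{\psi}^{\otimes q-i}$, while on $\ket{\psi}$ it accepts with certainty and leaves the state unchanged. Feeding this through the conditional-swap structure of one query and the uncomputation steps, one query of $B$ is within $O(1/q)$ in trace distance of one query of $\oracle_\psi^{\mathsf{res}}$, uniformly in the input state; in particular the reservoir copies are only $O(1/q)$-perturbed. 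A hybrid argument that swaps true queries for simulated ones one at a time, together with the bookkeeping that the reservoir holds $\Theta(q)$ copies throughout, bounds the total error by a sum of the per-query errors; choosing $q$ a sufficiently large multiple of $1/\epsilon$ — so that $B$ consumes $p(\lambda)\cdot(12/\epsilon)$ copies in total — drives this sum below $\epsilon$. All of $B$'s operations (symmetric-subspace projections on polynomially many registers, $\proj{0^{n}}$ and computational-basis measurements, swaps, and the uncomputations) are polynomial-space, so $B$ inherits the polynomial-space bound of $A$.

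\noindent\textbf{Main obstacle.} The crux is the per-query error estimate for the symmetric-subspace test and the way these errors aggregate over the $p(\lambda)$ queries. One must control not just the accept/reject probabilities but the exact post-measurement states on \emph{both} the query register and the reservoir (whose copies become slightly entangled with the $\ket{\phi}$-component), argue that re-using those perturbed reservoir copies in later queries does not let the error blow up, and check that the coherent uncomputation in \cref{step:simulate_4,step:simulate_5} genuinely disentangles the outcome ancillas rather than leaking which-branch information back into $\reg{R}$. This is precisely the ``careful analysis of the symmetric subspace projector'' flagged in the overview, and it is where the adapted versions of Lemmas~5.5 and~5.9 of \cite{zhandry24space} carry the weight.
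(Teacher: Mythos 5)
Your plan matches the paper's proof: the paper also reduces to \Cref{lem:haar_swap_indist}, implements the projection onto $\ket{\psi}$ via the symmetric-subspace test of \Cref{alg:simulating_swap_oracle} with its coherent uncomputation steps, uses the post-measurement lemma to track the perpendicular component, bounds the per-query trace-distance error by $O(1/t)$, and sums over the $p(\lambda)$ queries with $t = \Theta(p(\lambda)/\epsilon)$. The approach and the key lemmas invoked are essentially identical.
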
 
\begin{proof}
    $B$ proceeds by simulating $A$, where for each of the oracle queries made by $A$, it uses $(2t(\lambda)+1)$ many copies of $\ket{\psi}$ to run \Cref{alg:simulating_swap_oracle}.  Note that since $\epsilon$ is a constant and $p$ is a polynomial, $B$ runs in polynomial space.  In order to analyze the error bound with the ideal algorithm, we step through the algorithm for a pure state $\proj{\phi}$, and the result will extend by linearity.%
    \footnote{For the sake of brevity, we drop the expectation over the Haar measure and use kets, but all equations should be taken as being averaged over the Haar measure in $\ket{\psi}$.}  First note that we can always find phases such that 
    \begin{equation*}
        \ket{\phi} = \alpha_{0} \ket{0^{n}} + \alpha_{\psi} \ket{\psi} + \alpha_{\perp} \ket{\phi_{\perp}}\,.
    \end{equation*}
    Then after performing the first measurement, we have the following state mixed state:
    \begin{equation*}
        \proj{0^{n}} \ket{\phi} \otimes \ket{1}_{\reg{A}_1} + (\id - \proj{0^{n}}) \ket{\phi} \otimes \ket{0}_{\reg{A}_1} = \alpha_{0} \ket{0^{n}} \otimes \ket{1}_{\reg{A}_1} + \left(\alpha_{\psi} \ket{\psi} + \alpha_{\perp} \ket{\phi_{\perp}}\right) \otimes \ket{0}_{\reg{A}_1}\,.
    \end{equation*}

    Conditioned on $\reg{A}_1$ being $1$, we swap in one of our copies of $\ket{\psi}$, and in the $0$ branch we perform a multi-SWAP test with $q(\lambda)$ copies of the state $\proj{\psi}$ to project on the the symmetric subspace and its complement.  To save space, we analyze each branch of the superposition separately, and we will re-combine them later. From \Cref{lem:haar_swap_indist}, the state in the branch where $\reg{A}_1$ is $1$ is indistinguishable (over the Haar measure) to the following, after tracing out extra copies of$\ket{\psi}$.
    \begin{equation*}
        \alpha_0 \ket{\psi} \ket{10}_{\reg{A}_1\reg{A}_2}\,.
    \end{equation*}
    For the term with coefficient $\alpha_{\psi}$, the SWAP test will pass with probability $1$ and leave the following state
    \begin{equation*}
        \alpha_{\psi} \ket{\psi}^{t(\lambda)+1}\ket{01}_{\reg{A}_1\reg{A}_2}\,.
    \end{equation*}
    For the term $\alpha_{\perp}\ket{\phi_{\perp}}$, we apply our observation about the post-measurement state of the symmetric subspace projector to get
    \begin{multline*}
        \frac{\alpha_{\perp} }{t(\lambda) + 1} \sum_{i = 0}^{t(\lambda)} \ket{\psi}^{\otimes i} \ket{\phi_{\perp}} \ket{\psi}^{t(\lambda) - i} \ket{01}_{\reg{A}_1\reg{A}_2} \\+ \frac{\alpha_{\perp} }{t(\lambda) + 1} \left(t(\lambda)\ket{\phi_{\perp}} \ket{\psi}^{\otimes t(\lambda)} - \sum_{i = 1}^{t(\lambda)}\ket{\psi}^{\otimes i} \ket{\phi_{\perp}} \ket{\psi}^{t(\lambda) - i}  \right)\ket{00}_{\reg{A}_1\reg{A}_2}\,.
    \end{multline*}
    In the next step, for all of the branches that have $\ket{01}_{\reg{A}_1\reg{A}_2}$, we swap out the first register with a fresh ancilla containing $\ket{0^{n}}$.  Again, from \Cref{lem:haar_swap_indist}, the state is indistinguishable over the Haar measure from
    \begin{multline*}
        \alpha_0 \ket{\psi} \ket{\psi}^{\otimes t(\lambda)} \ket{10}_{\reg{A}} + \alpha_{\psi} \ket{0^{n}} \ket{\psi}^{\otimes t(\lambda)} \ket{01}_{\reg{A}} \\+ \frac{\alpha_{\perp}}{t(\lambda)+1} \left(\ket{0^{n}}\ket{\psi}^{\otimes t(\lambda)} + \sum_{i = 1}^{t(\lambda)} \ket{0} \ket{\psi}^{i-1} \ket{\phi_{\perp}} \ket{\psi}^{t(\lambda) - i}\right) \ket{01}_{\reg{A}} \\ 
        + \frac{\alpha_{\perp}}{t(\lambda) + 1}\left(t(\lambda)\ket{\phi_{\perp}} \ket{\psi}^{\otimes t(\lambda)} - \sum_{i = 1}^{t(\lambda)}\ket{\psi}^{\otimes i} \ket{\phi_{\perp}} \ket{\psi}^{t(\lambda) - i}  \right)\ket{00}_{\reg{A}}\,.
    \end{multline*}
    Re-arranging terms, we have the following state on all the registers after performing the first half of \cref{alg:simulating_swap_oracle}.
        \begin{multline*}
        \alpha_0 \ket{\psi} \ket{\psi}^{\otimes t(\lambda)} \ket{10}_{\reg{A}} + \alpha_{\psi} \ket{0^{n}} \ket{\psi}^{\otimes t(\lambda)} \ket{01}_{\reg{A}} + \frac{\alpha_{\perp}t(\lambda)}{t(\lambda)+1} \ket{\phi_{\perp}} \ket{\psi}^{\otimes t(\lambda)} \ket{00}_{\reg{A}}\\ + \frac{\alpha_{\perp}}{t(\lambda)+1} \left(\ket{0^{n}}\ket{\psi}^{\otimes t(\lambda)} + \sum_{i = 1}^{t(\lambda)} \ket{0^{n}} \ket{\psi}^{i-1} \ket{\phi_{\perp}} \ket{\psi}^{t(\lambda) - i}\right) \ket{01}_{\reg{A}} \\- \frac{\alpha_{\perp}}{t(\lambda) + 1}\left(\sum_{i = 1}^{t(\lambda)}\ket{\psi}^{\otimes i} \ket{\phi_{\perp}} \ket{\psi}^{t(\lambda) - i}  \right)\ket{00}_{\reg{A}}\,.
    \end{multline*}
    Note that the 
    trace of the absolute value 
    final line is equal to $\abs{\alpha_{\perp}}^2 (2t(\lambda)+1)/ (t(\lambda) + 1)^2$. Thus, we can instead consider the following state, remembering that we have incurred a trace distance cost of $\abs{\alpha_{\perp}}^2 (2t(\lambda)+1)/ (t(\lambda)+1)^2 \leq 2\abs{\alpha_{\perp}}^2 / (t(\lambda)+1)$.
    \begin{equation*}
        \alpha_0 \ket{\psi} \ket{\psi}^{\otimes t(\lambda)} \ket{10}_{\reg{A}} + \alpha_{\psi} \ket{0} \ket{\psi}^{\otimes t(\lambda)} \ket{01}_{\reg{A}} + \frac{\alpha_{\perp}t(\lambda)}{t(\lambda)+1} \ket{\phi_{\perp}} \ket{\psi}^{\otimes t(\lambda)} \ket{00}_{\reg{A}}\,.
    \end{equation*}
    For this state we can cleanly trace out the $t(\lambda)$ registers which contain $\ket{\psi}^{\otimes t(\lambda)}$, which means the state on the rest was close in trace distance to
    \begin{equation*}
        \ket{\phi_{\mathrm{half}}} = \alpha_0 \ket{\psi} \ket{10}_{\reg{A}} + \alpha_{\psi} \ket{0} \ket{01}_{\reg{A}} + \frac{\alpha_{\perp}t(\lambda)}{t(\lambda)+1} \ket{\phi_{\perp}} \ket{00}_{\reg{A}}\,.
    \end{equation*}
    The final half of the algorithm is the same as the first, except with the $\reg{A}_1$ and $\reg{A}_2$ registers swapped, so the state after the algorithm is done will be close to the following state.
    \begin{equation*}
        \ket{\phi_{\mathrm{final}}} = \alpha_0 \ket{\psi}_{\reg{A}} + \alpha_{\psi} \ket{0} + \frac{\alpha_{\perp}t(\lambda)^2}{(t(\lambda)+1)^2} \ket{\phi_{\perp}}\,.
    \end{equation*}
    By the same argument as before, the trace distance between the second half of the algorithm acting on $\ket{\phi_{\mathrm{half}}}$ and $\ket{\phi_{\mathrm{final}}}$ is at most
    \begin{equation*}
        \frac{2\abs{\alpha_{\perp}t(\lambda) / (t(\lambda)+1)}^2 t(\lambda)}{(t(\lambda)+1)^2} \leq \frac{2\abs{\alpha_{\perp}}^2}{t(\lambda)+1}\,.
    \end{equation*}
    Here we used the fact that the new amplitude of $\ket{\phi_{\perp}}$ is $\alpha_\perp t(\lambda) / (t(\lambda)+1)$ and plugged it into the trace distance calculation from before.  By the triangle inequality, the trace distance between the actual state of the algorithm and $\ket{\phi_{\mathrm{final}}}$ is at most
    \begin{equation*}
        \td(\proj{\phi_{\mathrm{final}}}, \rho_{\mathrm{alg}}) \leq \frac{2\abs{\alpha_{\perp}}^2}{t(\lambda)+1} + \frac{2\abs{\alpha_{\perp}}^2}{t(\lambda)+1} \leq \frac{4\abs{\alpha_{\perp}}^2}{t(\lambda)+1} \leq \frac{4}{t(\lambda)+1}\,.
    \end{equation*}
    Finally, we can write down the state after the ideal swap as
    \begin{equation*}
        \ket{\phi_{\mathrm{ideal}}} = \alpha_0 \ket{\psi} + \alpha_{\psi} \ket{0} + \alpha_{\perp}\ket{\phi_{\perp}}\,.
    \end{equation*}
    We can compute directly the trace distance between the two states to get the following bound
    \begin{equation*}
        \td(\proj{\phi_{\mathrm{final}}}, \proj{\phi_{\mathrm{ideal}}}) \leq \frac{2t(\lambda) + 1}{(t+1)^2} \leq \frac{2}{t(\lambda)+1}\,.
    \end{equation*}
    Applying the triangle inequality for trace distance, we get that the trace distance between the state of the algorithm on the first register and the ideal state is upper bounded by
    \begin{equation*}
        \td(\rho_{\mathrm{alg}}, \proj{\phi_{\mathrm{ideal}}}) \leq \frac{6}{t(\lambda)+1}\,.
    \end{equation*}
    Setting $t(\lambda) = \frac{6p(\lambda)}{\epsilon} - 1$, we get a trace distance error of at most $\epsilon / p(\lambda)$.  For an algorithm that makes $p(\lambda)$ calls to the oracle, we apply the triangle inequality to every call to get a total error bound of $\epsilon$ over the course of the entire algorithm $A^{\oracle_{\psi}}$.  This means that the algorithm requires $\frac{12p(\lambda)}{\epsilon}$ copies of the state $\ket{\psi}$, as desired. Finally, we note that since this bound holds for all pure states, we can write all mixed states as a mixture of pure states and apply this to each of them, so this applies to all entangled inputs too. 
\end{proof}

\begin{remark}
    \Cref{lem:haar_swap_indist,lem:swap_from_samples} allow us to simulate the effect of querying the Haar random swap oracle using only polynomially many copies of the Haar random state itself, up to inverse polynomial error. 
    Note that in order to apply it directly to an interactive computation (for instance, an interaction that occurs as part of a cryptographic security game), we must have a single global simulator that simulates the oracle for all the parties to the computation.
    For instance, consider a scenario in which two parties prepare a state close to $\ket{\psi} + \ket{1}$ by querying the Haar random swap oracle on the state $\ket{0} + \ket{1}$ and then proceed to query a second time expecting $\ket{0} + \ket{1}$ on the output.
    If there were only a single global simulator, the second query would approximately undo the first, and the state is guaranteed by \Cref{lem:haar_swap_indist,lem:swap_from_samples} to return to $\ket{0} + \ket{1}$ up to inverse polynomial error.
    If, on the other hand, the two queries are made to two \emph{different} simulators (say a different simulator for each party), the second query does not undo the first. This is because this process returns a copy of $\ket{\psi}$ that was pulled from the reservoir of the first simulator to the reservior of the second simulator, and thus produces on the output a mixed state that is entangled with both simulators and potentially quite far from $\ket{0} + \ket{1}$.%
    \footnote{
        We thank Mark Zhandry and Eli Goldin for pointing out this scenario.
    } 
    This is, however, not an issue for our case. As we will see below, in order to upgrade the existence of the primitives we are interested in to the Haar random swap oracle setting, we only need to consider a simulator for the adversary, and not the challenger of the security game. Similarly, in order to rule out efficiently verifiable $\owpuzz$ and $\OWSG$ in the Haar random swap oracle setting, we only need to simulate the verification algorithm.
\end{remark}

With these lemmas, we can show that since $\EFI$ pairs exist in the common Haar random state model, they also exist in the corresponding swap model. We apply the result to the common Haar random state model to get the following corollaries. 
\begin{corollary}\label{cor:efi_swap_exists}
    $\EFI$ pairs exist relative to $(\oracle_{\ket{\Haar}}, \mathsf{UnitaryPSPACE})$.
\end{corollary}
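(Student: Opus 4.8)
The plan is to lift the $\EFI$ construction of~\cite{chen2024power,ananth2024cryptography} from the common Haar random state model to the Haar random swap model, transporting its security via \Cref{lem:swap_from_samples}. The construction side is immediate: the swap oracle manufactures fresh copies of the reference state, since applying $\oracle_{\phi_m}$ to a register initialized to $\ket{0^m}$ yields $\ket{\phi_m}$. Hence, replacing every invocation of the CHRS isometry $V_m$ inside the preparation circuits $C_\lambda$ by ``allocate a fresh register in state $\ket{0^m}$, apply $\oracle_{\phi_m}$'' gives time-efficient circuits whose outputs match the original $\rho_{b,\lambda}$ up to the negligible trace distance between $\poly(\lambda)$ copies of a $\Haar(2^m)$ state and of a $\Haar(2^m-1)$ state (the discrepancy forced by the orthogonality requirement, negligible precisely because the circuits make only $\poly(\lambda)$ queries, as observed in the footnote to the definition of $\oracle_{\ket{\Haar}}$). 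Statistical farness, $\td(\rho_{0,\lambda},\rho_{1,\lambda}) \ge 1 - \negl(\lambda)$, depends only on the output states and so carries over, exactly as in~\cite{chen2024power,ananth2024cryptography}.

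What remains is computational indistinguishability against a $\poly(\lambda)$-time adversary that may also query the $\mathsf{UnitaryPSPACE}$ oracle. We use the fact that the CHRS $\EFI$ construction is secure against \emph{all} adversaries receiving only $\poly(\lambda)$ copies of the reference states, with no bound on computation\textemdash in particular, such an adversary may evaluate $\mathsf{UnitaryPSPACE}$ itself. Suppose, for contradiction, that a swap-model adversary $A^{\oracle_{\ket{\Haar}},\,\mathsf{UnitaryPSPACE}}$ distinguishes $\rho_{0,\lambda}$ from $\rho_{1,\lambda}$ with advantage at least $1/p_0(\lambda)$ for infinitely many $\lambda$ and some polynomial $p_0$. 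Put $\epsilon(\lambda) = 1/(4p_0(\lambda))$, let $p(\lambda)$ bound the number of swap queries made by $A$, and invoke \Cref{lem:swap_from_samples}: it produces a polynomial-space algorithm $B$ that runs $A$, simulates each swap query via \Cref{alg:simulating_swap_oracle} using the $p(\lambda)\cdot 12/\epsilon(\lambda)$ copies of each relevant reference state it is given (polynomially many, since $\epsilon(\lambda)$ is an inverse polynomial), forwards $A$'s $\mathsf{UnitaryPSPACE}$ queries unchanged, and whose output is within trace distance $\epsilon(\lambda)$ of $A$'s. Thus $B$ is a $\poly(\lambda)$-copy CHRS adversary distinguishing $\rho_{0,\lambda}$ from $\rho_{1,\lambda}$ with advantage at least $1/p_0(\lambda) - 2\epsilon(\lambda) \ge 1/(2p_0(\lambda))$ for infinitely many $\lambda$, contradicting the CHRS security guarantee; since $p_0$ was arbitrary, the swap-model advantage is negligible. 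As usual, this is first established in expectation over the Haar-random oracle and a single fixed good oracle $(\oracle_{\ket{\Haar}},\mathsf{UnitaryPSPACE})$ is then extracted by standard arguments.

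I expect the step requiring the most care to be this reduction, specifically the fact that \Cref{lem:swap_from_samples} simulates the swap oracle only up to \emph{inverse-polynomial}, not negligible, error. One therefore cannot argue that the swap oracle can be ``replaced for free'' by samples; instead one must run the argument by contradiction, tuning the simulation accuracy $\epsilon(\lambda)$ below the assumed inverse-polynomial distinguishing advantage while keeping the sample count polynomial. A secondary point to check is that the swap simulator is applied only to the \emph{adversary}\textemdash the challenger runs the original CHRS preparation circuits honestly, using the swap oracle only to prepare copies of the reference state\textemdash so the ``two separate reservoirs'' pathology noted in the remark following \Cref{lem:swap_from_samples} does not occur, and a single simulator suffices.
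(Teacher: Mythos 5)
Your proposal is correct and follows essentially the same route as the paper: run the CHRS construction using the swap oracle to mint copies of the reference states, then, assuming a swap-model distinguisher with inverse-polynomial advantage, apply \Cref{lem:swap_from_samples} with simulation accuracy tuned below that advantage to obtain a polynomial-copy CHRS adversary, contradicting the unbounded-adversary security of the CHRS construction. The only cosmetic difference is that the paper wraps the entire experiment (preparation plus adversary) into a single simulated algorithm $\mathrm{Sim}(b)$ starting from the all-zeros state and observes the preparation phase is simulated perfectly, whereas you keep the challenger honest and simulate only the adversary; both correctly realize the single-global-simulator requirement.
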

\begin{proof}
    The construction of $\EFI$ pairs runs the construction in the common Haar random state model, using the swap oracle to generate copies of the desired states.  To show security of the model, we show that any adversary in the Haar random swap model would imply an adversary against the $\EFI$ pair in the common Haar random state model. 

    Assume for the sake of contradiction that there is an adversary $A$ that breaks the security of the $\EFI$ pair relative to $(\oracle_{\ket{\Haar}}, \mathsf{UnitaryPSPACE})$.  Consider an algorithm $\mathrm{Sim}$ that, given bit $b$, first runs the construction of the $\EFI$ pair using the swap oracle with bit $b$, and then runs $A$ on the output.  
    By assumption, the adversary $A$ breaks the security of the $\EFI$ pair, so there exists a polynomial $q$ such that 
    \begin{equation*}
        \left| \Pr[\mathrm{Sim}(0) = \top] - \Pr[\mathrm{Sim}(1) = \top]\right| \geq \frac{1}{q(\lambda)}\,.
    \end{equation*}
    Note that this algorithm starts from the all $0$'s state, so we can apply \Cref{lem:swap_from_samples} to simulate it.  When we apply \Cref{lem:swap_from_samples} with $\epsilon = 1 / (2q(\lambda))$, we will get a simulator $\widetilde{\mathrm{Sim}}$ in the common Haar random state model.  We assume without loss of generality that the construction of $\EFI$ pairs begins by generating all of the copies of the Haar random states by calling the swap oracle on $\ket{0}$, so we see that \Cref{alg:simulating_swap_oracle} will simulate the $\EFI$ pair perfectly (as in, it will swap the initial state $\ket{0}$ with copies of $\ket{\psi}$ and then terminate, and the construction never calls the swap oracle again).  Thus, the probability that $\widetilde{\mathrm{Sim}}(b)$ accepts is exactly the probability that an adversary $\widetilde{A}$ (the simulation of $A$) accepts when given $\rho_{b}$, the corresponding state in the $\EFI$ pair.  

    Thus, the simulated adversary breaks the indistinguishability of the $\EFI$ pair with advantage $1 / (2q(\lambda))$, which contradicts the security of the $\EFI$ pair in the common Haar random model.  Thus, we conclude that the construction of $\EFI$ pairs in the Haar random swap model is secure.  
\end{proof}

We can apply the same idea to $\onePRS$ and $\owpuzz$ (and more generally, any primitive with a non-interactive security game) and conclude that they exist in the common Haar random swap model. We claim that one-way state generators \emph{do not} exist relative to the Haar random swap oracle model.

\textbf{As we pointed out at the start of the section, the proof of the claim is wrong. See \cite{goldin2025translating} for more details.}
\begin{claim}\label{cor:owsg_swap_doesnt}
    Relative to $(\oracle_{\ket{\Haar}}, \mathsf{UnitaryPSPACE})$, efficiently verifiable one-way state generators do not exist.
\end{claim}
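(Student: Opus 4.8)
The plan is to mirror the proof of \Cref{thm:one_way_state_generators_do_not_exist} almost verbatim, with one change: wherever that proof uses ``polynomially many copies of the common reference state,'' we instead feed in polynomially many copies of the Haar random state $\ket{\psi_\ell}$ and use it, via \Cref{lem:swap_from_samples}, to simulate the calls that verification makes to the swap oracle. Concretely, given an $\OWSG$ $(\KeyGen^{\oracle},\StateGen^{\oracle},\Ver^{\oracle})$ relative to $(\oracle_{\ket{\Haar}},\mathsf{unitaryPSPACE})$, let $U_k^{\oracle}$ be the (oracle-dependent) verification circuit for key $k$, and define $\Pi_k$ exactly as in \Cref{thm:one_way_state_generators_do_not_exist} — the $10\lambda$-fold parallel repetition of ``apply $U_k^{\oracle}$, project onto the accept bit, apply $(U_k^{\oracle})^\dagger$.'' Then \Cref{claim:owsg_completeness} and \Cref{claim:owsg_soundness} carry over unchanged: on the honest input $\rho_{k}^{\otimes 10\lambda}$ the threshold-search promise is satisfied, and any $k$ with $\Tr[\Pi_k \rho^{\otimes 10\lambda}]\ge 1/3$ is accepted by $\Ver^{\oracle}(k,\rho)$ with probability at least $1-\tfrac{1}{5\lambda}$.

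The one genuinely new ingredient is forced by the fact that $U_k^{\oracle}$ queries the unitary swap oracle, so $\Pi_k$ cannot be handed directly to a $\mathsf{unitaryPSPACE}$-complete problem (that oracle is oblivious to $\oracle_{\ket{\Haar}}$). So first I would fix an inverse polynomial $\epsilon$ and replace every query to $\oracle_{\psi_\ell}$ inside $U_k^{\oracle}$ by the simulation of \Cref{alg:simulating_swap_oracle}, using \Cref{lem:swap_from_samples}; this yields a measurement $\widetilde\Pi_k$ that consumes $\poly(\lambda)$ copies of each relevant $\ket{\psi_\ell}$ in place of oracle access and satisfies $\td\big(\widetilde\Pi_k[\cdot],\Pi_k[\cdot]\big)\le\epsilon$. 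Now the attack is purely a learning-theory procedure: run random threshold search (\Cref{thm:random_threshold_search}) over $\{\widetilde\Pi_k\}$, which touches only $O(\log^2 m)=O(\lambda^2)$ copies of its ``input,'' where a single ``input'' bundle is $10\lambda$ copies of the $\OWSG$ state $\rho_{k'}$ together with the $\poly(\lambda)$ copies of the $\ket{\psi_\ell}$'s used to simulate the $\oracle$-queries. Since this is a space-$O(\poly\lambda)$ computation on a fixed quantum input with only intermediate (unital) measurements, \Cref{rem:threshold_search_pspace} lets us implement it with an oracle to a $\mathsf{unitaryPSPACE}$-complete problem. Propagating the $\epsilon$ error through \Cref{claim:owsg_completeness,claim:owsg_soundness} and taking $\epsilon$ a small enough inverse polynomial, threshold search outputs a key that $\Ver^{\oracle}$ accepts on $\rho_{k'}$ with probability $1-\tfrac{1}{5\lambda}-O(\epsilon)=1-o(1)$, contradicting one-way security. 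Combined with \Cref{cor:efi_swap_exists} (and the analogous statements for $\onePRS$ and $\owpuzz$), this gives the separation.

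The main obstacle — and, as the note at the head of the section flags, the place where this argument actually breaks — is the step that invokes \Cref{lem:swap_from_samples} on $\Pi_k$. That lemma bounds the simulation error in trace distance \emph{averaged over the Haar-random} $\ket{\psi}$, but only for a \emph{fixed} input state $\rho$. Here the state fed to $\widetilde\Pi_k$ is $\rho_{k'}=\StateGen^{\oracle_\psi}(k')$, which itself depends on $\psi$, so the lemma does not apply as stated. The tempting fix is to also simulate $\StateGen^{\oracle_\psi}$ via \Cref{lem:swap_from_samples} — its input $\ket{0^m}$ \emph{is} fixed — but then the copies of $\ket{\psi}$ feeding the state generation and those feeding verification are drawn from two separate reservoirs, and as the remark after \Cref{lem:swap_from_samples} explains, composing two independent reservoir simulators does not reproduce a single global simulation: the residual entanglement between the two reservoirs is exactly what can be large. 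Handling this correlation between the prepared $\OWSG$ state and the simulated oracle is the crux; resolving it seems to require a genuinely different simulation of the swap oracle (e.g.\ treating it as a reflection about $\ket{0}-\ket{\psi}$, as in~\cite{goldin2025translating}) rather than the sample-access simulation used here.
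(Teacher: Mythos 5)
Your proposal is essentially the paper's own proof: simulate the verifier's swap-oracle queries via \Cref{lem:swap_from_samples} using copies of the Haar state pulled from the oracle, wrap the simulated verifier into a parallel-repeated measurement $\Pi_k$, and run threshold search over $\{\Pi_k\}$ with a $\mathsf{unitaryPSPACE}$ oracle, invoking analogues of \Cref{claim:owsg_completeness,claim:owsg_soundness}. The only substantive difference is parametric: the paper fixes the simulation error at the constant $\epsilon=1/5$, which forces it to replace the $10\lambda$-fold AND with a Hamming-weight-$\ge 5\lambda$ majority projector (\Cref{claim:owsg_swap_completeness}, proved via Hoeffding), whereas you keep the straight AND and instead push $\epsilon$ down to a small inverse polynomial so that $(1-\negl(\lambda)-\epsilon)^{10\lambda}$ stays above the threshold-search promise; both choices work and cost only polynomially many copies of $\ket{\psi}$.

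More importantly, you have correctly diagnosed the exact flaw that the authors themselves flag at the head of \Cref{sec:unitary_oracle_sep}: \Cref{lem:swap_from_samples} bounds the simulation error only on average over $\psi$ for a \emph{fixed} input $\rho$, while the state $\rho_{k'}=\StateGen^{\oracle_\psi}(k')$ fed to the simulated verifier is itself correlated with $\psi$, and simulating $\StateGen$ with a second independent reservoir does not compose with the verifier's simulator. So the proposal reproduces the paper's argument faithfully, including its known gap, and your closing paragraph identifies the crux (and the reflection-based fix of~\cite{goldin2025translating}) precisely as the authors' erratum does.
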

\begin{proof}
    We consider a similar quantum threshold attack as in \Cref{thm:one_way_state_generators_do_not_exist}.  Let $(\mathsf{KeyGen}, \mathsf{StateGen}, \mathsf{Ver})$ be a candidate one-way state generator.  Applying \Cref{lem:swap_from_samples} with $\epsilon = 1/5$, we can construct a simulated verifier $\widetilde{\mathsf{Ver}}$ which takes $\poly(\lambda)$ copies of the Haar random state from the swap oracle, and satisfies the following properties:
    \begin{itemize}
        \item On average over the Haar random states, when $\widetilde{\mathsf{Ver}}$ is given $k$ and a copy of $\rho_k$ (where $\rho_k$ is the real one-way state generator state), $\widetilde{\mathsf{Ver}}$ accepts with probability at least $3/4$.
        \item If $\widetilde{\mathsf{Ver}}$ accepts with probability $p$, then $\mathsf{Ver}$ accepts with probability at least $p - 1/5$.  
    \end{itemize}
    Let $U_k$ be the unitary that implements $\widetilde{\mathsf{Ver}}$ on key $k$, which takes in a copy of a one-way state generator state in register $\reg{A}$, and $\poly(\lambda)$ copies if the Haar random state from the swap oracle in register $\reg{B}$, and let $\proj{\geq m}$ be the projector onto bit strings with Hamming weight greater than $m$, and consider the following measurement.
    \begin{equation*}
        \Pi_k = \left(\left(U_k^{\dagger}\right)^{\otimes 10\lambda}_{\reg{CD}} \left(\proj{\geq 5\lambda}_{\reg{C}} \otimes \id_\reg{D}\right) \left(U_k\right)^{\otimes 10\lambda}_{\reg{AB}}\right)\,.
    \end{equation*}
    Here $\reg{C}$ is the output register of $U_k$, and $\reg{D}$ is the purifying register for the output.
    \begin{claim}\label{claim:owsg_swap_completeness}
        On input $(\rho_k \otimes \proj{\phi_1} \otimes \ldots \otimes \proj{\phi_1})^{\otimes 10\lambda}$, $\Pi_k$ accepts with probability at least 
        \begin{equation*}
            1 - \exp\left(-\lambda/8\right)\,.
        \end{equation*}
    \end{claim}
    \begin{proof}
        From the first observation, every run of $U_k$ has probability at least $3/4$ of accepting, and since the input state is in tensor product on the original state, every run is independent of each other.  Applying Hoeffdings inequality, we get the following:
        \begin{equation*}
            \Pr[\Pi_k\ \mathrm{rejects}] \leq \exp\left(-2\frac{(\lambda/4)^2}{n}\right) = \exp\left(-\lambda/8\right)\,.
        \end{equation*}
        This completes the proof of \Cref{claim:owsg_swap_completeness}.
    \end{proof}

    We also prove the other direction in a similar way to \Cref{claim:owsg_soundness}.

    \begin{claim}
    \label{claim:owsg_swap_soundness}
        If $\Pi_k$ accepts with probability $\geq 1/3$ for some key $k$, then the probability that the \emph{original} verification accepts is at least $3/4$.
    \end{claim}
    \begin{proof}
        From the same proof as \Cref{claim:owsg_soundness}, the probability that the simulated verifier accepts is at least $1 - \frac{1}{5\lambda}$.  By \Cref{lem:swap_from_samples}, the probability that the original verifier would accept is at least $1 - \frac{1}{5\lambda} - \frac{1}{5} \geq \frac{3}{4}$ for suitably large $\lambda$. 
    \end{proof}
    The algorithm for breaking the one-way state generator is to first query the Haar random swap oracle $\poly(\lambda)$ many times to produce that many copies of the corresponding Haar random state, and then use the $\mathsf{UnitaryPSPACE}$ oracle to 
    run threshold search on those $\poly(\lambda)$ copies of the Haar random state and 
    $O(\lambda^2)$ many copies of the $\OWSG$ state.  Note that, as before, the verification might query the Haar random swap oracle for many different integers $m$, so our attack must make $
    \poly(\lambda)$ copies of the Haar random state for all $m$ from $1$ to $\poly(\lambda)$ in order to run the simulated verification.  From the previous claims, the procedure is guaranteed to output a key that passes verification with probability $3/4$, with constant probability.  
\end{proof}
By the same observation as in the common Haar random state case, we note that the exact same attack rules out constructions of efficiently verifiable one-way puzzles in the Haar random swap model.

\begin{theorem}
    Relative to $(\oracle_{\ket{\Haar}}, \mathsf{UnitaryPSPACE})$, $\EFI$ pairs, $\onePRS$, and $\owpuzz$ exist, but one-way state generators do not. 
\end{theorem}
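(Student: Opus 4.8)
The plan is to assemble this statement from the building blocks of the previous two sections: handle the three positive results by a single uniform argument, and invoke the threshold-search attack for the negative one. For $\EFI$ pairs I would simply cite \Cref{cor:efi_swap_exists}. For $\onePRS$ and $\owpuzz$ I would replay the proof template of \Cref{cor:efi_swap_exists}: the construction is the CHRS-model construction (the $\onePRS$ of \cite{chen2024power,ananth2024cryptography}, respectively the $\owpuzz$ of \Cref{thm:iv-owpuzz}), where every use of a common Haar random state of length $m$ is replaced by one query of $\oracle_{\ket{\Haar}}$ on input $\ket{0^m}$. Since every such query starts from $\ket{0^m}$, \Cref{alg:simulating_swap_oracle} simulates it \emph{exactly}, so correctness transfers verbatim. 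For security, suppose a swap-model adversary $A$ (with $\mathsf{UnitaryPSPACE}$ access) wins its non-interactive security game with advantage $1/q(\lambda)$; let $\mathrm{Sim}$ run the construction and then $A$. This $\mathrm{Sim}$ starts from $\ket{0}$'s and makes $p(\lambda)=\poly(\lambda)$ swap queries, so by \Cref{lem:swap_from_samples} with $\epsilon = 1/(2q(\lambda))$ there is a polynomial-space algorithm $\widetilde{\mathrm{Sim}}$ using only $\poly(\lambda)$ copies of the Haar states whose output is $1/(2q(\lambda))$-close (averaged over the Haar measure) to that of $\mathrm{Sim}$. Because the construction's own queries are simulated with zero error, $\widetilde{\mathrm{Sim}}$ is precisely an adversary against the CHRS-model primitive that keeps advantage at least $1/(2q(\lambda))$, contradicting the known CHRS-model security.

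For the nonexistence of efficiently verifiable $\OWSG$, I would invoke \Cref{cor:owsg_swap_doesnt}: use \Cref{lem:swap_from_samples} with $\epsilon = 1/5$ to build a simulated verifier $\widetilde{\mathsf{Ver}}$ that takes $\poly(\lambda)$ copies of the Haar states, amplify it to the threshold measurement $\Pi_k$, and run the $\mathsf{UnitaryPSPACE}$ threshold-search algorithm on $O(\lambda^2)$ copies of the unknown $\OWSG$ state together with enough copies of the Haar states queried up front from $\oracle_{\ket{\Haar}}$ (for every output length $m\le\poly(\lambda)$ that $\mathsf{Ver}$ might touch). Completeness and soundness of $\Pi_k$ are \Cref{claim:owsg_swap_completeness} and \Cref{claim:owsg_swap_soundness}, so with constant probability the attack returns a key accepted by the \emph{true} verification with probability at least $3/4$, breaking one-wayness. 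Combining the three positive results with this attack yields the theorem.

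The only real obstacle is error bookkeeping and the choice of which simulator acts where. \Cref{lem:swap_from_samples} is only inverse-polynomially accurate, so $\epsilon$ must be picked \emph{after} the adversary's advantage (for the positive results) and the $1/5$ slack must be absorbed into the gap between $1/3$ and $3/4$ in the threshold-search promise (for the attack). One must also check that in every place \Cref{lem:swap_from_samples} is applied a \emph{single} global simulator suffices — here it always does, since we only ever simulate the adversary (in the positive results) or the verification algorithm (in the attack), never a challenger interacting coherently with a separately-simulated party; this is exactly the point of the remark following \Cref{lem:swap_from_samples}. (As the section header warns, the $\OWSG$ attack as written is in fact flawed, because \Cref{lem:swap_from_samples} only holds for \emph{fixed} $\rho$ while the attack feeds it states prepared using the oracle; the repair, via treating the swap oracle as a reflection around $\ket{0}-\ket{\psi}$, is in \cite{goldin2025translating}.)
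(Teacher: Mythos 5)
Your proposal takes essentially the same route as the paper, which proves this theorem simply by assembling \Cref{cor:efi_swap_exists}, the observation that the same simulation argument applies to $\onePRS$ and $\owpuzz$ (as non-interactive security games), and \Cref{cor:owsg_swap_doesnt}. You also correctly flag the known flaw that both your write-up and the paper's inherit\textemdash namely that \Cref{lem:swap_from_samples} holds only for fixed input states and so cannot be applied to the $\OWSG$ attack as written\textemdash so there is nothing further to add.
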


Thus, we have achieved a unitary oracle that separates one-way state generators from $\EFI$ pairs, $\onePRS$, and $\owpuzz$.  
Note that the oracle we provide is a randomized unitary oracle\textemdash that is, it is a unitary oracle that is chosen from a probability distribution. However, we note that this is not a disadvantage, as if we desire a fixed deterministic unitary oracle, this can easily be achieved as well by applying the techniques of~\cite[proof of Theorem 1.1]{aaronson2007quantum} to $\oracle_{\ket{\Haar}}$.

\bibliographystyle{plain}
\bibliography{refs}

\end{document}